\documentclass[12pt]{article}
\usepackage{epsfig,amsmath,amssymb,amsfonts,amstext,amsthm,mathrsfs}
\usepackage{latexsym,graphics,epsf,epsfig,psfrag}
\usepackage{cite}
\usepackage{color}

\usepackage{caption}
\usepackage{subcaption}
\usepackage{graphicx}

\captionsetup[sub]{labelformat=mysublabelfmt}
\makeatletter

\renewcommand\p@subfigure{\thefigure\,}
\DeclareCaptionLabelFormat{mysublabelfmt}{(\alph{sub\@captype})\,}
\makeatother

\topmargin=0cm \headsep=0.5in \oddsidemargin=0.0in
\textwidth=6.5in
\textheight=9.0in
\parskip=1.5ex
\parindent=2ex
\footnotesep=3.0ex


\newtheorem{corollary}{\textbf{Corollary}}
\newtheorem{lemma}{\textbf{Lemma}}
\newtheorem{theorem}{\textbf{Theorem}}
\newtheorem{proposition}{\textbf{Proposition}}
\newtheorem{remark}{\textbf{Remark}}

\newcommand{\nn}{\nonumber}

\newcommand{\cX}{\mathcal{X}}

\newcommand{\cY}{\mathcal{Y}}

\newcommand{\cW}{\mathcal{W}}
\newcommand{\cS}{\mathcal{S}}

\newcommand{\cN}{\mathcal{N}}

\DeclareMathAlphabet{\matheuf}{U}{euf}{m}{n}

\addtolength{\topmargin}{-.65in}

\begin{document}
\vspace*{-2cm}

\begin{center}
  \baselineskip 1.3ex {\Large \bf Parallel Gaussian Networks with a Common State-Cognitive Helper
  \footnote{The material in this paper was presented
in part at the IEEE Information Theory Workshop, Seville, Spain, September 2013.} \footnote{The work of R. Duan and Y. Liang was supported by the National Science Foundation under Grants CCF-10-26566 and CCF-12-18451 and by the National Science Foundation CAREER Award under Grant
CCF-10-26565. The work of A. Khisti was supported by the Canada Research Chair's Program. The work of S. Shamai(Shitz) was supported by the Israel Science Foundation (ISF), and the European Commission in the framework of the Network of Excellence in FP7 Wireless COMmunications NEWCOM$\#$.}\\
}
 \vspace{0.15in} Ruchen Duan, Yingbin Liang,
\footnote{Ruchen Duan and Yingbin Liang are with the Department of Electrical
Engineering and Computer Science, Syracuse University, Syracuse, NY 13244 USA (email: \{yliang06,rduan\}@syr.edu).}
Ashish Khisti,\footnote{Ashish Khisti is with the Department of Electrical and Computer Engineering, University of Toronto, Toronto, ON, M5S3G4, Canada (email: akhisti@comm.utoronto.ca).}
Shlomo Shamai (Shitz)\footnote{Shlomo Shamai (Shitz) is with the Department of Electrical Engineering, Technion-Israel Institute of Technology, Technion city, Haifa 32000, Israel (email: sshlomo@ee.technion.ac.il).}
\end{center}

\begin{abstract}
A class of state-dependent parallel networks with a common state-cognitive helper, in which $K$ transmitters wish to send $K$ messages to their corresponding receivers over $K$ state-corrupted parallel channels, and a helper who knows the state information noncausally wishes to assist these receivers to cancel state interference. Furthermore, the helper also has its own message to be sent simultaneously to its corresponding receiver. Since the state information is known only to the helper, but not to the corresponding transmitters $1,\dots,K$, transmitter-side state cognition and receiver-side state interference are mismatched. Our focus is on the high state power regime, i.e., the state power goes to infinity.
Three (sub)models are studied. Model I serves as a basic model, which consists of only one transmitter-receiver (with state corruption) pair in addition to a helper that assists the receiver to cancel state in addition to transmitting its own message. Model II consists of two transmitter-receiver pairs in addition to a helper, and only one receiver is interfered by a state sequence. Model III generalizes model I include multiple transmitter-receiver pairs with each receiver corrupted by independent state. For all models, inner and outer bounds on the capacity region are derived, and comparison of the two bounds leads to characterization of either full or partial boundary of the capacity region under various channel parameters.
\end{abstract}

\section{Introduction}\label{sec:Introduction}

State-dependent network models have recently caught intensive attention. In these models, receivers are interfered by random state sequences, and some or all of the transmitters know the corresponding state sequences that interfere their targeted receivers noncausally, and exploit dirty paper coding to assist the receivers to cancel the state interference. For example, the state-dependent broadcast channel has been studied in, e.g., \cite{Steinberg05,Lapidoth11}, the state-dependent multiple access channel (MAC) has been studied in, e.g., \cite{Somekh08MAC,Laneman08,Wang11hsiang}, the state-dependent relay channel has been studied in, e.g., \cite{Aref09,Zaidi11}, and the state-dependent interference channel has been studied in, e.g., \cite{Zhang11a,Duan12TIT,Duan13ISIT,Ghas13}.



In this paper, we study a class of state-dependent parallel networks with a common state-cognitive helper (see Figure \ref{fig:channelmodel}), in which $K$ transmitters wish to send $K$ messages to their corresponding receivers over $K$ state-corrupted parallel channels, and a helper who knows the state information noncausally wishes to assist these receivers to cancel state interference. Furthermore, the helper also has its own message to be sent simultaneously to its corresponding receiver. Since the state information is known only to the helper, but not to the corresponding transmitters $1,\dots,K$, transmitter-side state cognition and receiver-side state interference are mismatched. Our goal is to investigate such a mismatched scenario in high state power regime, i.e., as the power of the state sequences go to infinity. This model is well justified in practical wireless networks. For example, in a cellular network, a base station likely causes interference to receivers in its adjacent cells, and such interference can be treated as state known at this base station. The base station can then serve as a helper to assist the receivers to cancel state interference, which is particularly desirable when the state power is large. Such a model suggests to exploit the state cognition for improving communication rates other than the traditional message cognition studied in the context of cognitive channels and networks.



This network model has a few properties that differentiate it from previous studies of state-dependent networks. In this model, the state knowledge is known only to the helper, which does not know the messages that it assists for transmission. This is different from the classic state-dependent channel in \cite{GPEncoding} and most of its followups, in which the transmitter knows both the message and the state. Although such a mismatch structure appeared also in some previously studied models such as the state-dependent multiple-access channel in \cite{Laneman08}, and the relay channel in \cite{Aref09,Zaidi11}, the structure of multiple state-interfered receivers differentiates our model from these studies. Since our model has the nature of compound state interference at $K$ receivers (i.e., the $K$ receivers are corrupted by different states), the helper's assistance scheme needs to trade off among the $K$ receivers' performances.

In this paper, we study three (sub)models of the state-dependent parallel networks with a common helper. Model I serves as a basic model, which consists of only one state-corrupted receiver ($K=1$) and a helper that assists this receiver to cancel state interference in addition to transmitting its own message. Our study of this model provides necessary techniques to deal with state in the mismatched context for studying more complicated models II and III. In fact, this model can be viewed as the state-dependent Z-interference channel, in which the interference is only at receiver 1 caused by the helper. In contrast to the state-dependent Z-interference channel studied previously in \cite{Hajizadeh13}, which assumes that state interference at both receivers are known to both (corresponding) transmitters, our model assumes that state interference is known noncausally only to the helper, not to the corresponding transmitter 1.

In general, it is challenging to design capacity-achieving schemes for such a system with mismatched property. Clearly, it is not possible for transmitter 1 to directly cancel state interference due to the large state power. One natural idea is to apply lattice coding in high state power regime as in \cite{Ashish07} for the state-dependent multiple access channel (MAC). However, lattice coding does not achieve the capacity for our model here. Another approach is to apply dirty paper coding \cite{Costa83}. However, the difficulty here lies in that the helper needs to resolve the tension between transmitting its own message and helping receiver 1 to cancel its interference.

In this paper, we design a layered coding scheme, in which a dirty paper coding scheme for the helper to assist state cancelation is superposed with the helper's transmission to its own message. Due to mismatched state cognition and interference, in our dirty paper coding scheme, correlation between the state variable and the state-cancelation variable is a design parameter, and can be chosen to optimize the rate region. This is in contrast to classical dirty paper coding \cite{Costa83}, in which such a correlation parameter is fixed for fully canceling the state. Based on such a layered coding scheme, we derive achievable regions for both the discrete memoryless and Gaussian channels. We further derive an outer bound for the Gaussian channel in high state power regime. By comparing the inner and outer bounds, we characterize the boundary of the capacity region either fully or partially for all Gaussian channel parameters in high state power regime. Our result also implies that the capacity region is strictly inside the capacity region of the corresponding channel without state \cite{Sason04}. This is in contrast to the results for Costa type of dirty paper channels, for which dirty paper coding achieves the capacity of the corresponding channels without state.







We then further study model II, which consists of two transmitter-receiver pairs in addition to the helper, and only one receiver is interfered by a state sequence. Here, the challenge lies in the fact that the helper inevitably causes interference to receiver 2 while assisting receiver 1 to cancel the state. For this model, we start with the scenario with the helper fully assisting the receivers without transmitting its own message. We first derive an outer bound on the capacity region. We then develop a two-layer dirty paper coding scheme with one layer helping receiver 1 to cancel state via dirty paper coding, and with the other layer of dirty paper coding canceling the interference caused by the helper in assisting receiver 1. By comparing inner and outer bounds, we characterize two segments of the capacity region boundary. One segment corresponds to the case, in which our scheme achieves the point-to-point channel capacity for receiver 2 and certain positive rate for receiver 1. This implies that the helper is able to assist receiver 1 without causing interference to receiver 2 effectively. The other segment corresponds to the case, in which our scheme achieves the best single-user rate for receiver 1 with assistance of the helper, while receiver 2 treats the helper's signal as noise. Such a scheme is guaranteed by our outer bound to be the best to achieve the sum capacity under certain channel parameters. We further extend these results to the scenario with the helper sending its own message in addition to assisting the two receivers.

We finally study model III, in which a common helper assists multiple transmitter-receiver pairs with each receiver corrupted by an independently distributed state sequence. We note that this model is more general than model I, but does not include model II as a special case. This is because model III has each receiver (excluding the helper) being corrupted by an infinitely powered state sequence, and hence never reduces to the model II, in which receiver 2 is not corrupted by a state sequence. This also leads to different technical challenges to characterize the capacity for model III due to the compound state interference. The same technical challenge is also reflected in the studies \cite{Mitran06,Ashish07,Piantanida10} of the state-dependent compound channel, for which the capacity is not known in general. As for model II, we also start with the scenario, in which the helper fully assists other users without sending its own message. We first derive a useful outer bound, which captures the sum rate limit due to the common helper. We then derive an inner bound based on a time-sharing scheme, in which the helper alternatively assists receivers. Somewhat interestingly, such a time-sharing scheme achieves the sum capacity under many channel parameters, although each individual transmitter may not be able to achieve its individual best rate. This is because these transmitters effectively have larger power during their transmissions in the time-sharing scheme so that the transmission rate matches the outer bound on the sum rate. We also characterize the full capacity region under certain channel parameters. We then extend our results to the general scenario with the helper also transmitting its own message.

The rest of the paper is organized as follows. In Section \ref{sec:ChannelModel}, we describe the channel model. In Sections \ref{sec:ResultSingle}, \ref{sec:Same}, and \ref{sec:Independent}, we present our results for models I, II, and III, respectively. Finally, in Section \ref{sec:Conclusion}, we conclude the paper with a few remarks.

\section{Channel Model}\label{sec:ChannelModel}

\begin{figure}[thb]
\centering
\includegraphics[width=3.in]{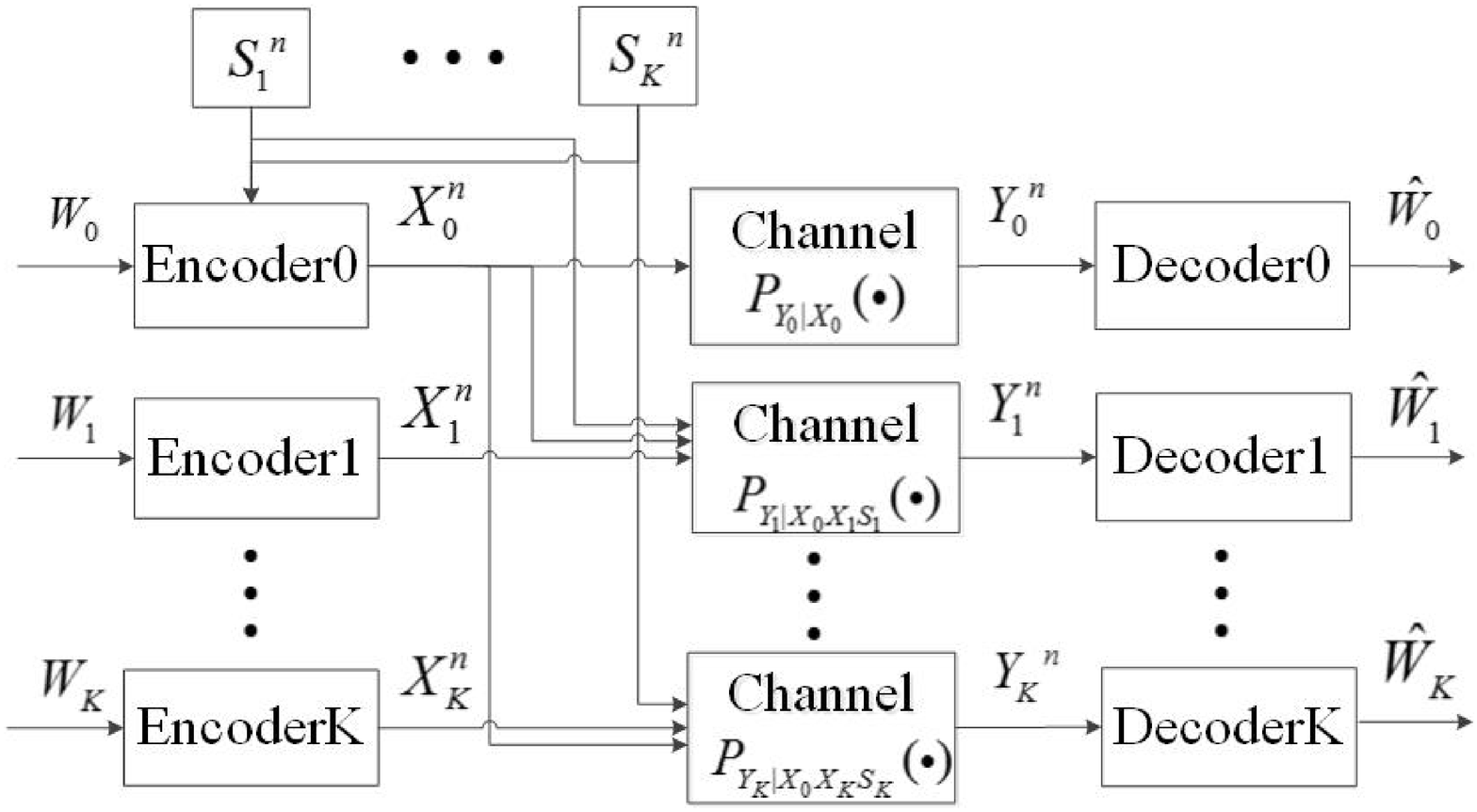}
\caption{Parallel Gaussian channel model with a common state-cognitive helper.}\label{fig:channelmodel}
\end{figure}

In this paper, we investigate the state-dependent parallel network with a common state-cognitive helper (see Figure \ref{fig:channelmodel}), in which $K$ transmitters wish to send $K$ messages to their corresponding receivers over state-corrupted parallel channels, and a helper who knows the state information noncausally wishes to assist these receivers to cancel state interference. Furthermore, the helper also has its own message to be sent simultaneously to its corresponding receiver.

More specifically, each transmitter (say transmitter $k$) has an encoder $f_k:\cW_k \rightarrow \cX_k^n$, which maps a message $w_k\in \cW_k$ to a codeword $x_k^n\in \cX_k^n$ for $k=1, \ldots, K$. The $K$ inputs $x_1^n, \ldots, x_K^n$ are transmitted over $K$ parallel channels, respectively. Each receiver (say receiver $k$) is interfered by an i.i.d.\ state sequence $S_K^n$ for $k=1,\ldots, K$, which is unknown at none of transmitters $1,\ldots, K$ and receivers $1,\ldots, K$. A common helper (referred to as transmitter 0) is assumed to know all state sequences $S_K^n$ for $k=1,\ldots, K$ noncausally. Thus, the encoder at the helper, $f_0:\cW_0 \times \{\cS_1^n,\dots,\cS_K^n\} \rightarrow \cX_0^n$, maps a message $w_0\in \cW_0$ and the state sequences $(s_1^n,\ldots,s_K^n) \in \cS_1^n\times \ldots\times\cS_K^n$ to a codeword $x_0^n\in \cX_0^n$. The entire channel transition probability is given by $P_{Y_0\vert X_{0}}\prod_{k=1}^K P_{Y_k\vert X_{0},X_{k},S_k}$. There are $K=1$ decoders with each at one receiver, $g_k:\cY_k^n\rightarrow \cW_k$, maps a received sequence $y_k^n$ into a message $\hat{w}_k\in\cW_k$ for $k=0,1,\ldots,K$.
\begin{remark}
Without state interference, our model becomes the $K+1$-user Z-interference channel, in which the signal of transmitter 0 interferences all remaining $K$ receivers.
\end{remark}

The average probability of error for a length-$n$ code is defined as
\begin{flalign}\label{PE}
P_e^{(n)} = & \frac{1}{|\cW_0||\cW_1|\dots|\cW_K|}\sum_{w_0=1}^{|\cW_0|} \sum_{w_1=1}^{|\cW_1|}\dots \sum_{w_k=1}^{|\cW_k|} Pr\lbrace(\hat{w}_0,\hat{w}_1,\dots,\hat{w}_K) \neq (w_0,w_1,\dots,w_K)\rbrace.
\end{flalign}
A rate tuple $(R_0,R_1,\dots,R_K)$ is {\em achievable} if there exists a sequence of message sets $\cW_{k}^{(n)}$ with $|\cW_{k}^{(n)}|=2^{nR_k}$ for $k=0,\,1,\dots,K$, and encoder-decoder tuples $(f_{0}^{(n)},f_{1}^{(n)},\dots,f_{K}^{(n)}, g_{0n},g_{1n},\dots,g_{Kn})$ such that the average error probability $P_e^{(n)} \rightarrow 0$ as $n \to \infty$. The {\em capacity region} is defined to be the closure of the set consists of all achievable rate pairs $(R_0,R_1,\dots,R_K)$.


In this paper, we study the following three Gaussian channel models.

In model I, $K=1$. The channel outputs at receiver 0 and 1 for one symbol time are given by
    \begin{subequations}
  \begin{flalign}
&Y_0=X_0+N_0,\label{eq:ChannelModelI-0}\\
&Y_1=X_0+X_1+S_1+N_1. \label{eq:ChannelModelI-1}
\end{flalign}
  \end{subequations}

  In model II, $K=2$, and the channel outputs at receivers 0, 1 and 2 for one symbol time are given by
    \begin{subequations}
  \begin{flalign}
&Y_0=X_0+N_0,\label{eq:ChannelModelII-0}\\
&Y_1=X_0+X_1+S_1+N_1, \label{eq:ChannelModelII-1}\\
&Y_2=X_0+X_2+N_2. \label{eq:ChannelModelII-2}\\
\end{flalign}
  \end{subequations}

  In model III, $K$ is general and the channel outputs at receivers 0 and receivers $1,\dots,K$ for one symbol time are given by
  \begin{subequations}
  \begin{flalign}
&Y_0=X_0+N_0,\label{eq:ChannelModelIII-0}\\
&Y_k=X_0+X_k+S_k+N_k, \quad\quad k=1,\dots,K \label{eq:ChannelModelIII-1}
\end{flalign}
  \end{subequations}
  In the above three models, the noise variables $N_0,\,N_1\dots, N_K$ and the state variable $S_1,\dots,S_K$ are Gaussian distributed with distributions $N_0, \dots,N_K \sim \mathcal{N}(0,1)$ and $S_k \sim \mathcal{N}(0,Q_k)$ for $k=1,\dots,K$, and all of the variables are independent and are i.i.d.\ over channel uses. The channel inputs $X_0,\,X_1,\dots,X_K$ are subject to the average power constraints $\frac{1}{n}\sum_{i=1}^n X_{ki}^2 \leqslant P_k$ for $k=0,\,1,\dots,K$.

We are interested in the regime of high state power, i.e., as $Q_k\rightarrow \infty$ for $k=1,\dots,K$. Our goal is to characterize the capacity region of the Gaussian channels in this regime.

\section{Model I: $K=1$}\label{sec:ResultSingle}

Model I with $K=1$ is a basic model, in which the helper assists one transmitter-receiver pair. Understanding this model will help the study of the general parallel network. In this section, we first develop inner and outer bounds on the capacity region, and then characterize the boundary of the capacity region based on these bounds.


\subsection{Inner Bound}\label{sec:InnerSingle}

The major challenge in designing an achievable scheme arises from the mismatched property due to transmitter-side state cognition and receiver-side state interference, i.e., state interference to receiver 1 is known noncausally only to transmitter 0 (the helper), not to the corresponding transmitter 1. Since we study the regime with large state power, transmitter 1 can send information to receiver 1 only if the helper assists to cancel the state. Thus, the helper needs to resolve the tension between transmitting its own message to receiver 0 and helping receiver 1 to cancel its interference. A simple scheme of time-sharing between the two transmitters in general is not optimal.

We design a layered coding scheme as follows. The helper splits its signal into two parts in a layered fashion: one (represented by $X_0'$ in Lemma \ref{th:InnerDMCSingle}) for transmitting its own message and the other (represented by $U$ in Lemma \ref{th:InnerDMCSingle}) for helping receiver 1 to remove both state and signal interference. In particular, the second part of the scheme applies a single-bin dirty paper coding scheme, in which transmission of $W_1$ and treatment of state interference for decoding $W_1$ are performed separately by transmitters 1 and 0. This is because the helper knows the state but does not know the message (of transmitter 1) that the state interferes, and hence cannot encode this message via the regular multi-bin dirty paper coding as in \cite{Costa83}. Based on such a scheme, we obtain the following achievable rate region for the discrete memoryless channel, which is useful for deriving an inner bound for the Gaussian channel.
\begin{lemma}\label{th:InnerDMCSingle}
For the discrete memoryless model I, an inner bound on the capacity region consists of rate pairs $(R_0, R_1)$ satisfying:
\begin{subequations}
\begin{flalign}
  &R_0 \leqslant I(X_0';Y_0)  \\
&R_1 \leqslant I(X_1;Y_1|U)  \\
&R_1 \leqslant I(X_1U;Y_1) - I(U;S_1X_0')\label{eq:InnerDMCSinglere}
\end{flalign}
\end{subequations}
for some distribution $P_{S_1}P_{X_0'}P_{U|S_1X_0'}P_{X_0|US_1X_0'}P_{X_1}P_{Y_0Y_1|S_1X_0X_1}$.
\end{lemma}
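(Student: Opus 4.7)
The plan is a standard random-coding argument tailored to the layered single-bin dirty-paper structure described just before the lemma. For a fixed joint distribution $P_{S_1}P_{X_0'}P_{U|S_1X_0'}P_{X_0|US_1X_0'}P_{X_1}$, I generate three independent codebooks: $2^{nR_0}$ codewords $x_0'^n(w_0)$ drawn i.i.d.\ from $P_{X_0'}$; an auxiliary pool of $2^{nR_U}$ codewords $u^n(\ell)$ drawn i.i.d.\ from the marginal $P_U$, with $R_U = I(U;S_1 X_0') + \delta$; and $2^{nR_1}$ codewords $x_1^n(w_1)$ drawn i.i.d.\ from $P_{X_1}$. Encoding then works as follows: transmitter~$1$ simply sends $x_1^n(w_1)$; the helper, given $(s_1^n, w_0)$, selects $x_0'^n(w_0)$ and searches the $U$-pool for an index $\ell^\star$ with $(u^n(\ell^\star), s_1^n, x_0'^n(w_0))$ jointly typical (the covering lemma supplies such an $\ell^\star$ with high probability under the above choice of $R_U$), and then produces $x_0^n$ symbol-by-symbol from $P_{X_0 \mid U S_1 X_0'}$. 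This yields the target joint typicality of $(S_1^n, X_0'^n, U^n, X_0^n, X_1^n, Y_0^n, Y_1^n)$.

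At receiver~$0$, joint typicality decoding of $x_0'^n(\hat w_0)$ against $y_0^n$ succeeds with vanishing error probability provided $R_0 < I(X_0';Y_0)$, since the remaining signals shaping $X_0$ appear only through the induced channel from $X_0'$ to $Y_0$. At receiver~$1$, I use joint typicality decoding: find $\hat w_1$ such that $(x_1^n(\hat w_1), u^n(\ell), y_1^n)$ is jointly typical for some $\ell$. Splitting the error event into ``wrong $w_1$ with the true $\ell^\star$'' and ``wrong $w_1$ paired with some wrong $\ell$'', the packing lemma yields the constraints $R_1 < I(X_1; Y_1 \mid U)$ (using that $X_1$ is independent of $U$ by construction, so $I(X_1; U, Y_1) = I(X_1; Y_1 \mid U)$) and $R_1 + R_U < I(X_1 U; Y_1)$. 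Substituting $R_U = I(U; S_1 X_0') + \delta$ into the second and letting $\delta \downarrow 0$ recovers (\ref{eq:InnerDMCSinglere}). The remaining case, correct $w_1$ with a wrong $\ell$, is not a decoding error for $w_1$.

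The main subtlety is the independence bookkeeping behind the second packing step. Although the transmitted $u^n(\ell^\star)$ is by design correlated with $(S_1^n, X_0'^n)$, every competing $u^n(\ell)$ with $\ell \neq \ell^\star$ is drawn i.i.d.\ from $P_U$ independently of $(S_1^n, X_0'^n, X_1^n, Y_1^n)$, and every $x_1^n(w_1')$ with $w_1' \neq w_1$ is independent of the remaining variables; since $X_1 \perp U$ under the prescribed joint law, the pair $(u^n(\ell), x_1^n(w_1'))$ has product distribution $P_U^n P_{X_1}^n = P_{U X_1}^n$, which is precisely what the joint-typicality lemma requires in order to obtain the exponent $I(U X_1; Y_1)$ in the union bound. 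Everything else --- the standard expurgation to extract a single deterministic code from the ensemble, handling of the atypical-state event, and the passage $\delta \downarrow 0$ --- is routine.
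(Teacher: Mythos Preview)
Your proposal is correct and follows essentially the same approach as the paper's proof: three independent random codebooks for $X_0'$, $U$, and $X_1$; a covering step at the helper to find a $u^n$ jointly typical with $(s_1^n,x_0'^n)$ requiring $R_U>I(U;S_1X_0')$; typicality decoding of $X_0'$ at receiver~$0$; and joint typicality decoding of $(U,X_1)$ at receiver~$1$, discarding the ``wrong~$\ell$, correct~$w_1$'' event. The only cosmetic difference is that the paper generates $x_0^n$ via a deterministic map $f^{(n)}(x_0'^n,s_1^n,u^n)$ rather than symbol-by-symbol sampling from $P_{X_0\mid U S_1 X_0'}$, and it presents the decoder constraints via an auxiliary rate $\tilde R$ to be Fourier--Motzkin eliminated; neither affects the argument.
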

\begin{proof}
  The proof is detailed in Appendix \ref{apx:InnerDMCSingle}.
\end{proof}

Based on Lemma \ref{th:InnerDMCSingle}, we have the following simpler inner bound by adding a constraint to remove \eqref{eq:InnerDMCSinglere} as a redundant bound.
\begin{corollary}\label{th:InnerDMCCor}
For the discrete memoryless model I, an inner bound on the capacity region consists of rate pairs $(R_0, R_1)$ satisfying:
\begin{subequations}
  \begin{flalign}
&R_0 \leqslant I(X_0';Y_0)\\
& R_1 \leqslant I(X_1;Y_1|U) \label{eq:model1r1}
\end{flalign}
\end{subequations}
for some distribution $P_{S_1}P_{X_0'}P_{U|S_1X_0'}P_{X_0|US_1X_0'}P_{X_1}P_{Y_0Y_1|S_1X_0X_1}$ that satisfies
\begin{equation}
  I(U;Y_1)\geqslant I(U;S_1X_0').
\end{equation}
\end{corollary}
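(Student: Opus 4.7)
The plan is to show that under the added constraint $I(U;Y_1)\geqslant I(U;S_1X_0')$, the third inequality \eqref{eq:InnerDMCSinglere} of Lemma~\ref{th:InnerDMCSingle} is implied by the bound \eqref{eq:model1r1}, so it may be dropped without shrinking the achievable region. Thus the corollary is obtained simply by restricting the set of admissible input distributions in Lemma~\ref{th:InnerDMCSingle} to those satisfying $I(U;Y_1)\geqslant I(U;S_1X_0')$, and then discarding the now-redundant bound.

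The key algebraic step is the chain-rule decomposition
\begin{equation*}
I(X_1U;Y_1)=I(U;Y_1)+I(X_1;Y_1|U),
\end{equation*}
which is valid under the joint distribution specified in Lemma~\ref{th:InnerDMCSingle}. Substituting this into the right-hand side of \eqref{eq:InnerDMCSinglere} gives
\begin{equation*}
I(X_1U;Y_1)-I(U;S_1X_0')=I(X_1;Y_1|U)+\bigl[I(U;Y_1)-I(U;S_1X_0')\bigr].
\end{equation*}
When the bracketed term is nonnegative, i.e., exactly when $I(U;Y_1)\geqslant I(U;S_1X_0')$, the right-hand side of \eqref{eq:InnerDMCSinglere} is at least $I(X_1;Y_1|U)$, so any $R_1$ satisfying \eqref{eq:model1r1} automatically satisfies \eqref{eq:InnerDMCSinglere}. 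Hence \eqref{eq:InnerDMCSinglere} is redundant under this constraint, and Lemma~\ref{th:InnerDMCSingle} reduces to the two bounds stated in Corollary~\ref{th:InnerDMCCor}.

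Since no inputs, auxiliaries, or distributions are changed, every rate pair achievable under the constrained region of the corollary is a fortiori achievable by Lemma~\ref{th:InnerDMCSingle}, which already has a valid coding-scheme justification. There is essentially no hard step: the argument is a one-line mutual-information identity plus the imposed inequality. The only substantive issue is that this corollary is only useful insofar as the constraint $I(U;Y_1)\geqslant I(U;S_1X_0')$ is nontrivially satisfiable by distributions that make the remaining two bounds large; this is a concern deferred to the Gaussian specialization in later sections, where the auxiliary $U$ will be chosen as a dirty-paper-style linear combination of $S_1$ and $X_0'$ with a tunable correlation coefficient, so that the constraint can be made to hold by design.
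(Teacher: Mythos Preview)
Your proposal is correct and matches the paper's approach exactly: the paper states that the corollary follows from Lemma~\ref{th:InnerDMCSingle} by imposing the constraint $I(U;Y_1)\geqslant I(U;S_1X_0')$ so that \eqref{eq:InnerDMCSinglere} becomes redundant, and your chain-rule computation is precisely the justification for this redundancy.
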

The inner bound in Corollary \ref{th:InnerDMCCor} corresponds to an intuitive achievable scheme based on successive cancelation. Namely, the condition guarantees that receiver 1 decodes the auxiliary random variable $U$ first, and then removes it from its output and decodes the message, which results in the bound \eqref{eq:model1r1}. In particular, cancelation of $U$ leads to cancelation of state interference at receiver 1.

We next derive an inner bound for the Gaussian channel of model I based on Corollary \ref{th:InnerDMCCor}.
\begin{proposition}\label{th:InnerGaussianSingle}
For the Gaussian channel of model I, an inner bound on the capacity region consists of rate pairs $(R_0, R_1)$ satisfying:
\begin{subequations}
  \begin{flalign}
  &R_0 \leqslant  \frac{1}{2}\log\left( 1+\frac{\bar{\beta} P_0}{\beta P_0 + 1}\right) \label{eq:InnerGaussian-1}\\
  &R_1 \leqslant  \frac{1}{2}\log\left( 1+\frac{P_1}{1+(1-\frac{1}{\alpha})^2 \beta P_0}\right)\label{eq:InnerGaussian-2}
\end{flalign}
\end{subequations}
for some real constants $\alpha \geqslant 0$ and $0 \leqslant \beta \leqslant 1$ that satisfy
\begin{equation}
  \alpha^2(\bar{\beta}P_0+Q_1)(\beta P_0 +P_1 +1)- 2\alpha\beta P_0 (\bar{\beta}P_0+Q_1) -\beta^2 P_0^2 \leqslant 0.
\end{equation}
As $Q_1\rightarrow \infty$, the preceding condition becomes $\alpha \leqslant  \frac{2\beta P_0}{\beta P_0+P_1 +1}$.
\end{proposition}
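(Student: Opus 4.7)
The approach is to specialize the discrete-memoryless inner bound of Corollary~\ref{th:InnerDMCCor} to jointly Gaussian inputs and evaluate the resulting mutual informations in closed form. The plan is to take $X_0' \sim \mathcal{N}(0,\bar{\beta}P_0)$ and an auxiliary Gaussian $V \sim \mathcal{N}(0,\beta P_0)$ independent of both $X_0'$ and $S_1$, and then set $X_0 = X_0' + V$, $X_1 \sim \mathcal{N}(0,P_1)$ independent of everything, and, crucially, $U = V + \alpha(X_0' + S_1)$. This factors in the order $P_{S_1}P_{X_0'}P_{U|S_1X_0'}P_{X_0|US_1X_0'}P_{X_1}$ required by the corollary, and the power constraint $E[X_0^2]=P_0$ holds by independence of $X_0'$ and $V$. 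The main conceptual choice, and the principal obstacle in the proof, is recognizing that at receiver~1 the helper's own codeword $X_0'$ and the state $S_1$ both act as interference to $X_1$; the single-bin dirty-paper auxiliary $U$ must therefore absorb the combined interference $X_0'+S_1$ rather than $S_1$ alone. This is what produces the clean denominator $1+(1-1/\alpha)^2\beta P_0$ in \eqref{eq:InnerGaussian-2} in the limit $Q_1\to\infty$; absorbing only $S_1$ would leave a residual $\bar{\beta}P_0$ term.

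With this choice, the bound on $R_0$ is immediate: $I(X_0';Y_0)=h(X_0'+V+N_0)-h(V+N_0)$ evaluates directly to \eqref{eq:InnerGaussian-1}. For $R_1$, I would write $\tilde{Y}_1=Y_1-X_1=X_0'+V+S_1+N_1$; since $X_1$ is independent of $(\tilde{Y}_1,U)$, we have $I(X_1;Y_1|U)=\tfrac{1}{2}\log\bigl(1+P_1/\Var(\tilde{Y}_1|U)\bigr)$. I would then compute $\Var(\tilde{Y}_1|U)$ via the standard jointly Gaussian MMSE formula $\Var(\tilde{Y}_1)-\mathrm{Cov}(\tilde{Y}_1,U)^2/\Var(U)$; after algebraic simplification and the limit $Q_1\to\infty$, this converges to $1+(1-1/\alpha)^2\beta P_0$, matching the denominator in \eqref{eq:InnerGaussian-2}.

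Finally, I would verify the successive-decoding condition $I(U;Y_1)\geq I(U;S_1,X_0')$ required by Corollary~\ref{th:InnerDMCCor}. Using $\Var(U|S_1,X_0')=\Var(V)=\beta P_0$ together with the Gaussian identity $I(U;Y_1)-I(U;S_1,X_0')=\tfrac{1}{2}\log\bigl[\Var(U|S_1,X_0')/\Var(U|Y_1)\bigr]$, the condition reduces to $\Var(U|Y_1)\leq\beta P_0$. Expanding $\Var(U|Y_1)=(\beta P_0+\alpha^2(\bar{\beta}P_0+Q_1))-(\beta P_0+\alpha(\bar{\beta}P_0+Q_1))^2/((\bar{\beta}P_0+Q_1)+\beta P_0+P_1+1)$ and clearing denominators causes the $\alpha^2(\bar{\beta}P_0+Q_1)^2$ cross terms to cancel, producing exactly the stated quadratic inequality in $\alpha$. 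Dividing through by $Q_1$ and letting $Q_1\to\infty$ drops the $\beta^2P_0^2$ term, and since $\alpha\geq 0$ the remaining inequality collapses to $\alpha\leq 2\beta P_0/(\beta P_0+P_1+1)$, giving the high-state-power form and completing the proposition.
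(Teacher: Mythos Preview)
Your proposal is correct and follows essentially the same approach as the paper: specialize Corollary~\ref{th:InnerDMCCor} to jointly Gaussian variables with the helper's signal split as $X_0=X_0'+V$ and the auxiliary $U=V+\alpha(S_1+X_0')$, then evaluate $I(X_0';Y_0)$, $I(X_1;Y_1|U)$, and the successive-decoding constraint. Your variance assignment $X_0'\sim\mathcal{N}(0,\bar\beta P_0)$, $V\sim\mathcal{N}(0,\beta P_0)$ is the one that actually produces the stated bounds (the paper's printed proof appears to have these two variances swapped), and your detailed verification of the quadratic condition and its $Q_1\to\infty$ limit is exactly what is needed.
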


\begin{proof}
Proposition \ref{th:InnerGaussianSingle} follows from Corollary \ref{th:InnerDMCCor} by choosing the joint Gaussian distribution for random variables as follows:
  \begin{flalign}
  U&= X_0''+\alpha (S_1+X_0'),\;X_0=X_0'+X_0''\nn\\
  X_0'& \sim \cN(0,\beta P_0),\; X_0'' \sim \cN(0,\bar{\beta} P_0)\nn\\
  X_{1}&\sim \mathcal{N}(0,P_{1})\nn
\end{flalign}
where $X_0'$, $X_0''$, $X_1$ and $S_1$ are independent, $\alpha\geqslant 0$, $0 \leqslant \beta \leqslant 1$, and $\bar{\beta}= 1-\beta$.
\end{proof}

We note that in Proposition \ref{th:InnerGaussianSingle}, the parameter $\alpha$ captures correlation between the state variable $S$ and the auxiliary variable $U$ for dealing with the state, and can be chosen to optimize the rate region. This is in contrast to the classical dirty paper coding \cite{Costa83}, in which such correlation parameter is fixed for state cancelation. Therefore, although Corollary \ref{th:InnerDMCCor} may provide a smaller inner bound than that given in Lemma \ref{th:InnerDMCSingle}, it can be shown that two inner bounds are equivalent for our chosen auxiliary random variables and input distribution after optimizing over $\alpha$.

\subsection{Outer Bound}\label{sec:OuterSingle}

In this subsection, we provide an outer bound on the capacity region in high state power regime, i.e., as $Q\rightarrow \infty$.
\begin{proposition}\label{th:OuterGaussianSingle}
For the Gaussian channel of model I, an outer bound on the capacity region for the regime when $Q_1\rightarrow \infty$ consists of rate pairs $(R_0, R_1)$ satisfying:
\begin{subequations}
  \begin{flalign}
      R_1 \leqslant \frac{1}{2} \log(1+P_1) \label{eq:outer-1} \\
      R_0+R_1 \leqslant \frac{1}{2} \log(1+P_0). \label{eq:outer-2}
    \end{flalign}
\end{subequations}
\end{proposition}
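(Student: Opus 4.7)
The plan has two parts, one for each bound.

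For the bound $R_1 \leq \tfrac{1}{2}\log(1+P_1)$, the strategy is a standard genie-aided reduction. Starting from Fano's inequality, $n R_1 \leq I(W_1; Y_1^n) + n\epsilon_n$, I provide receiver~$1$ with the side information $(X_0^n, S_1^n)$. Since $W_1$ (and hence $X_1^n$) is generated by transmitter~$1$ independently of $(X_0^n, S_1^n, N_1^n)$, this operation only increases the mutual information and collapses the observation to the point-to-point AWGN channel, giving
\[
I(W_1; Y_1^n) \;\leq\; I(W_1; Y_1^n, X_0^n, S_1^n) \;=\; I(W_1; Y_1^n - X_0^n - S_1^n) \;=\; I(W_1; X_1^n + N_1^n) \;\leq\; \tfrac{n}{2}\log(1+P_1).
\]

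For the sum bound $R_0 + R_1 \leq \tfrac{1}{2}\log(1+P_0)$, the intuition is that the helper forms a single-user bottleneck: its output reaches receiver~$0$ via $Y_0 = X_0 + N_0$ with capacity $\tfrac{1}{2}\log(1+P_0)$, and as $Q_1 \to \infty$ transmitter~$1$ alone cannot overcome the diverging state power, so every nat of $R_1$ must be ``paid for'' out of the helper's budget. To formalize this, I apply Fano's inequality to a virtual super-receiver that sees both $Y_0^n$ and $Y_1^n$ (legitimate because $W_0$ and $W_1$ are individually decodable from the two outputs, hence jointly decodable from the pair), obtaining $n(R_0 + R_1) \leq I(W_0, W_1; Y_0^n, Y_1^n) + 2n\epsilon_n$. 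The chain rule together with the fact that $W_1$ does not affect $Y_0^n$ given $W_0$ yields
\[
I(W_0, W_1; Y_0^n, Y_1^n) \;=\; I(W_0; Y_0^n) \;+\; I(W_0, W_1; Y_1^n \mid Y_0^n),
\]
so the task reduces to bounding this entire sum by $\tfrac{n}{2}\log(1+P_0) + o(n)$ in the limit $Q_1 \to \infty$.

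The core step is to control $I(W_0, W_1; Y_1^n \mid Y_0^n)$. Conditioned on $Y_0^n$ the helper's codeword $X_0^n$ is revealed up to the independent unit-variance Gaussian $N_0^n$, while the residual randomness in $Y_1^n$ is dominated by the state $S_1^n$ of variance $Q_1$; applying a conditional entropy power inequality together with the Gaussian maximum-entropy principle to the difference $h(Y_1^n \mid Y_0^n) - h(Y_1^n \mid W_0, W_1, Y_0^n)$ should show that this term, combined with $I(W_0; Y_0^n)$, is asymptotically bounded by $\tfrac{n}{2}\log(1+P_0)$. The main obstacle is that the helper's encoder $f_0(W_0, S_1^n)$ may be arbitrary and need not produce a jointly Gaussian output; I expect to handle this through a worst-case-noise argument or by introducing an auxiliary variable $U = X_0 + \alpha S_1$ with $\alpha$ optimized in the limit $Q_1 \to \infty$, thereby reducing the final entropy bookkeeping to a Gaussian-input computation. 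The assumption $Q_1 \to \infty$ is essential, because for any finite $Q_1$ the sum bound would pick up additional terms that vanish only in the limit.
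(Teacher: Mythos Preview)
Your treatment of the bound $R_1 \le \tfrac12\log(1+P_1)$ is fine and matches the paper.

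For the sum-rate bound your decomposition is a reasonable starting point, and in fact it unwinds to essentially the same entropy terms the paper faces: after your chain rule and the trivial bound $h(Y_1^n\mid Y_0^n)\le h(Y_1^n)$ one is left with
\[
n(R_0+R_1)\;\le\; h(Y_0^n)+h(Y_1^n)-h\bigl(Y_0^n,\,X_0^n+S_1^n+N_1^n\,\big|\,W_0\bigr),
\]
and the whole difficulty is to lower-bound the last joint conditional entropy. This is exactly where your plan becomes vague: an entropy-power inequality does not directly apply here because $X_0^n$ is an arbitrary (possibly non-Gaussian, block-wise) function of $(W_0,S_1^n)$, and introducing an auxiliary $U=X_0+\alpha S_1$ is an \emph{achievability} device, not a converse one---it cannot help you upper-bound the rates over all codes.

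The key idea you are missing, and the one the paper uses, is the \emph{same-marginals} trick: since the capacity region depends only on the marginal laws $P_{Y_0|X_0}$ and $P_{Y_1|X_0,X_1,S_1}$, one may couple the two noises by setting $N_1^n=N_0^n$. Then $(Y_0^n,\,X_0^n+S_1^n+N_1^n)$ is an invertible linear transform of $(S_1^n,\,X_0^n+N_1^n)$, and
\[
h\bigl(Y_0^n,\,X_0^n+S_1^n+N_1^n\,\big|\,W_0\bigr)
= h(S_1^n,\,X_0^n+N_1^n\mid W_0)
\ge h(S_1^n)+h(N_1^n),
\]
the last step using $h(X_0^n+N_1^n\mid S_1^n,W_0,X_0^n)=h(N_1^n)$. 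Plugging this in gives
\[
n(R_0+R_1)\le \tfrac{n}{2}\log(1+P_0)+\tfrac{n}{2}\log\Bigl(1+\tfrac{P_0+P_1+1}{Q_1}\Bigr),
\]
and the second term vanishes as $Q_1\to\infty$. Without the noise-coupling step there is no clean lower bound on the negative entropy term, and your EPI/worst-case-noise sketch would not close the gap.
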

The bound \eqref{eq:outer-1} on $R_1$ follows simply from the capacity of the point-to-point channel between transmitter 1 and receiver 1 without signal and state interference. The bound \eqref{eq:outer-2} on the sum rate is limited only by the power $P_0$ of the helper, and does not depend on the power $P_1$ of transmitter 1. Intuitively, this is because $P_0$ is split for transmission of $W_0$ and for helping transmission of $W_1$ by removing state interference, and hence $P_0$ determines a trade-off between $R_0$ and $R_1$. On the other hand, improving the power $P_1$, although may improve $R_1$, can also cause more interference for receiver 1 to decode the auxiliary variable for canceling state and interference. Thus, the balance of the two effects determine that $P_1$ does not affect the sum rate.
\begin{proof}
  The proof is detailed in Appendix \ref{apx:OuterGaussianSingle}.
\end{proof}

We further note that although the sum-rate upper bound \eqref{eq:outer-2} can be achieved easily by keeping transmitter 1 silent (i.e., $R_0$ achieves the sum rate bound with $R_1=0$), we are interested in characterizing the capacity region (i.e., the trade-off between $R_0$ and $R_1$) rather than a single point that achieves the sum-rate capacity. In the next section, we characterize such optimal trade-off based on the sum-rate bound.
\begin{remark}
The outer bound in Proposition \ref{th:OuterGaussianSingle} is strictly inside an achievable rate region of the corresponding channel without state interference (i.e., the Z-interference channel) \cite{Sason04}. This implies that the capacity region of our model is strictly inside that of the corresponding channel without state. This suggests that state interference does cause performance degradation for systems with mismatched state cognition and interference in high state power regime. This is in contrast to the results on Costa-type dirty paper channels \cite{Costa83}, for which dirty paper coding achieves the capacity of the corresponding channels without state.
\end{remark}

\subsection{Capacity Region}\label{sec:CapacitySingle}
In this section, we characterize the boundary points of the capacity region for the Gaussian channel of model I based on the inner and outer bounds given in Propositions \ref{th:InnerGaussianSingle} and \ref{th:OuterGaussianSingle}, respectively. We partition the Gaussian channel into three cases based on the conditions on the power constraints: (1) $P_1 \geqslant P_0 + 1$; (2) $P_0 -1\leqslant P_1< P_1 +1$ and (3) $0 \leqslant P_1 <P_0-1$. For each case, we optimize the dirty paper coding parameter $\alpha$ that satisfies $0 \leqslant \alpha \leqslant \frac{2\beta P_0}{\beta P_0 +P_1 +1}$ to find achievable rate points that lie on the sum-rate upper bound \eqref{eq:outer-2} in order to characterize the boundary points of the capacity region. 

{\em Case 1: $P_1 \geqslant P_0 + 1$.} The capacity region is fully characterized as illustrated in Fig.~\ref{fig:SingleCase1}.

\begin{figure}[thb]
\centering
\includegraphics[width=2.8in]{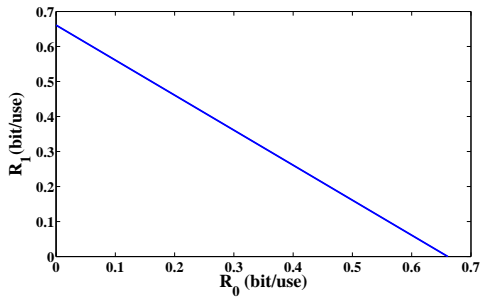}
\caption{The capacity region for case 1 with $P_0=1.5$ and $P_1=3$.}\label{fig:SingleCase1}
\end{figure}

\begin{theorem}\label{th:capacity-1}
For the Gaussian channel of model I in the regime when $Q_1\rightarrow \infty$, if $P_1 \geqslant P_0 + 1$, the capacity region consists of the rate pairs $(R_0,R_1)$ satisfying
\begin{flalign}
    R_0 + R_1&\leqslant \frac{1}{2} \log(1+P_0).
\end{flalign}
\end{theorem}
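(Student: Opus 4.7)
The sum-rate bound $R_0 + R_1 \leqslant \frac{1}{2}\log(1+P_0)$ is immediate from Proposition~\ref{th:OuterGaussianSingle}; under the Case-1 hypothesis $P_1 \geqslant P_0+1$, the other bound from that proposition, $R_1 \leqslant \frac{1}{2}\log(1+P_1)$, is implied by the sum-rate bound and $R_0\geqslant 0$ (since $P_1\geqslant P_0$), so the outer bound coincides with the triangular region stated in the theorem.

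\textbf{Achievability strategy.} By convexity (time-sharing with the origin), it suffices to exhibit the entire sum-rate face $R_0+R_1=\frac{1}{2}\log(1+P_0)$. My plan is to parameterize the boundary by $\beta\in[0,1]$ and, for each $\beta$, pick the dirty-paper parameter $\alpha^\star:=\beta P_0/(\beta P_0+1)$, i.e.\ Costa's classical DPC value matched to signal power $\beta P_0$ against unit background noise (the background is unit rather than $1+P_1$ because $X_1$ is to be decoded \emph{jointly} with $U$). Using the Gaussian construction from the proof of Proposition~\ref{th:InnerGaussianSingle}, the first inequality of Lemma~\ref{th:InnerDMCSingle} gives $R_0=\frac{1}{2}\log\bigl(\frac{1+P_0}{1+\beta P_0}\bigr)$. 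For $R_1$, rather than using Proposition~\ref{th:InnerGaussianSingle} (whose explicit range $\alpha\leqslant 2\beta P_0/(\beta P_0+P_1+1)$ excludes $\alpha^\star$ whenever $P_1>P_0+1$), I invoke the joint-decoding bound $R_1\leqslant I(X_1U;Y_1)-I(U;S_1X_0')$ from Lemma~\ref{th:InnerDMCSingle}. Evaluating this bound in the high-state limit $Q_1\to\infty$ with the Gaussian auxiliary yields
\[
R_1 \leqslant \frac{1}{2}\log\frac{\beta P_0}{\beta P_0(1-\alpha)^2+\alpha^2},
\]
whose maximum over $\alpha$ is attained at $\alpha^\star$ and equals $\frac{1}{2}\log(1+\beta P_0)$. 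Summing $R_0$ and $R_1$ then yields $R_0+R_1=\frac{1}{2}\log(1+P_0)$ for every $\beta\in[0,1]$, sweeping the sum-rate line from $(\frac{1}{2}\log(1+P_0),0)$ at $\beta=0$ to $(0,\frac{1}{2}\log(1+P_0))$ at $\beta=1$.

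\textbf{Compatibility check and main obstacle.} One must verify that the companion bound in Lemma~\ref{th:InnerDMCSingle}, namely $R_1\leqslant I(X_1;Y_1|U)$---which at $\alpha^\star$ evaluates to $\frac{1}{2}\log\bigl(1+\frac{P_1\beta P_0}{1+\beta P_0}\bigr)$---is no tighter than $\frac{1}{2}\log(1+\beta P_0)$. An elementary check reduces this to $P_1\geqslant\beta P_0+1$, which holds uniformly over $\beta\in[0,1]$ precisely when $P_1\geqslant P_0+1$; this is exactly why the case partition is drawn at $P_1=P_0+1$. The principal obstacle is conceptual rather than computational: the admissibility constraint on $\alpha$ inherited by Proposition~\ref{th:InnerGaussianSingle} from Corollary~\ref{th:InnerDMCCor} is only \emph{sufficient} for redundancy of the joint-decoding bound, and in Case~1 it excludes Costa's $\alpha^\star$. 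One therefore has to return to Lemma~\ref{th:InnerDMCSingle} and work with the joint-decoding rate bound directly; once this is done, the remainder is a routine Gaussian mutual-information calculation.
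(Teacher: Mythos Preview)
Your proof is correct, and the route is genuinely different from the paper's.

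The paper achieves the sum-rate line by invoking Proposition~\ref{th:InnerGaussianSingle} (hence Corollary~\ref{th:InnerDMCCor}, the successive-cancellation simplification) but with a \emph{reduced} transmit power $\tilde P_1=\beta P_0+1\leqslant P_1$ at transmitter~1. With this back-off the admissibility constraint $\alpha\leqslant \tfrac{2\beta P_0}{\beta P_0+\tilde P_1+1}$ is met with equality at the same $\alpha^\star=\tfrac{\beta P_0}{\beta P_0+1}$ you use, and \eqref{eq:InnerGaussian-2} evaluates to $\tfrac12\log(1+\beta P_0)$. You instead keep full power $P_1$, bypass the $\alpha$-constraint entirely by returning to Lemma~\ref{th:InnerDMCSingle}, and let the joint-decoding bound \eqref{eq:InnerDMCSinglere} do the work; the companion bound $I(X_1;Y_1|U)$ is then checked to be slack precisely under $P_1\geqslant P_0+1$.

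What each buys: the paper's power back-off is operationally natural and dovetails with its later remark that increasing $P_1$ beyond $P_0+1$ does not enlarge the region. Your argument avoids the auxiliary power parameter, makes explicit that the Gaussian evaluation of Lemma~\ref{th:InnerDMCSingle} (rather than its Corollary~\ref{th:InnerDMCCor} specialization) already suffices, and gives a clean structural explanation for the threshold $P_1=P_0+1$: it is exactly where $I(X_1;Y_1|U)$ at $\alpha^\star$ becomes binding at $\beta=1$.
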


\begin{proof}
Let $\tilde{P}_1$ be the actual transmission power for transmitting $W_1$. Then the inner bound \eqref{eq:InnerGaussian-2} on $R_1$ is optimized when $\alpha= \frac{2\beta P_0}{\beta P_0 +\tilde{P}_1 +1}$, By setting $\tilde{P}_1=\beta P_0+1$, the inner bound given in Proposition \ref{th:InnerGaussianSingle} matches the outer bound given in Proposition \ref{th:OuterGaussianSingle}, and hence is the capacity region.
\end{proof}

Theorem \ref{th:capacity-1} implies that when $P_1$ is large enough, the power of the helper limits the system performance. Furthermore, since $P_1$ for transmission of $W_1$ causes interference to receiver 1 to decode the auxiliary variable for state and interference cancelation, beyond a certain value, increasing $P_1$ does not improve the rate region any more. Theorem \ref{th:capacity-1} also suggests that in order to achieve different points on the boundary of the capacity region (captured by parameters $\beta$), different amounts of power $\tilde{P_1}$ should be applied.

{\em Case 2: $P_0-1 \leqslant P_1 < P_0 +1$.} For this case, if $P_1 \geqslant 1$, i.e., $P_1$ is larger than the noise power, inner and outer bounds match over the line between the points A and B as illustrated in Fig.~\ref{fig:SingleCase2} (a) and (c), and thus optimal trade-off between $R_0$ and $R_1$ is achieved over the points on this line. If $P_1 < 1$, the inner and outer bounds match only at the rate point A as illustrated in Fig.~\ref{fig:SingleCase2} (b) and (d), which achieves the sum-rate capacity. We note that Fig.~\ref{fig:SingleCase2} (a) and Fig.~\ref{fig:SingleCase2} (b) are different from Fig.~\ref{fig:SingleCase2} (c) and Fig.~\ref{fig:SingleCase2} (d) in the outer bound. Fig.~\ref{fig:SingleCase2} (c) and Fig.~\ref{fig:SingleCase2} (d) correspond to the case with $P_0\ge P_1$, and hence the capacity region is also upper bounded by the point-to-point capacity of $R_1$. Such a bound is redundant in Fig.~\ref{fig:SingleCase2} (a) and Fig.~\ref{fig:SingleCase2} (b) which correspond to the case with $P_0<P_1 $, because $P_0$ is not large enough to perfectly cancel state and signal interference at receiver 1. However, in case 3, we show that this point-to-point capacity of $R_1$ is achievable simultaneously with a certain positive $R_0$. We summarize the above capacity result in the following theorem.
\begin{figure*}[hbt!]
\begin{center}
\begin{tabular}{cc}
\includegraphics[width=2.8in]{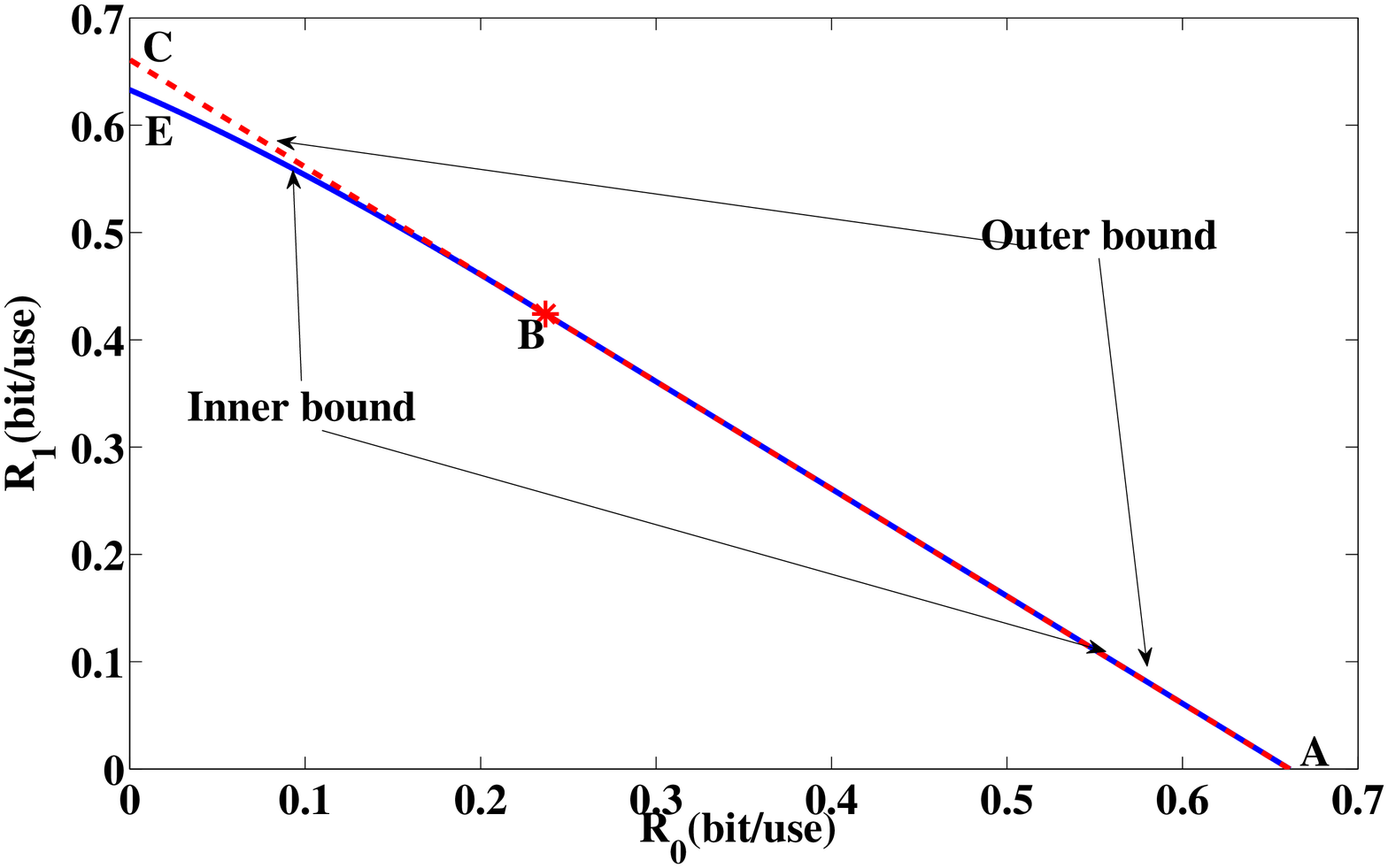}&
\includegraphics[width=2.8in]{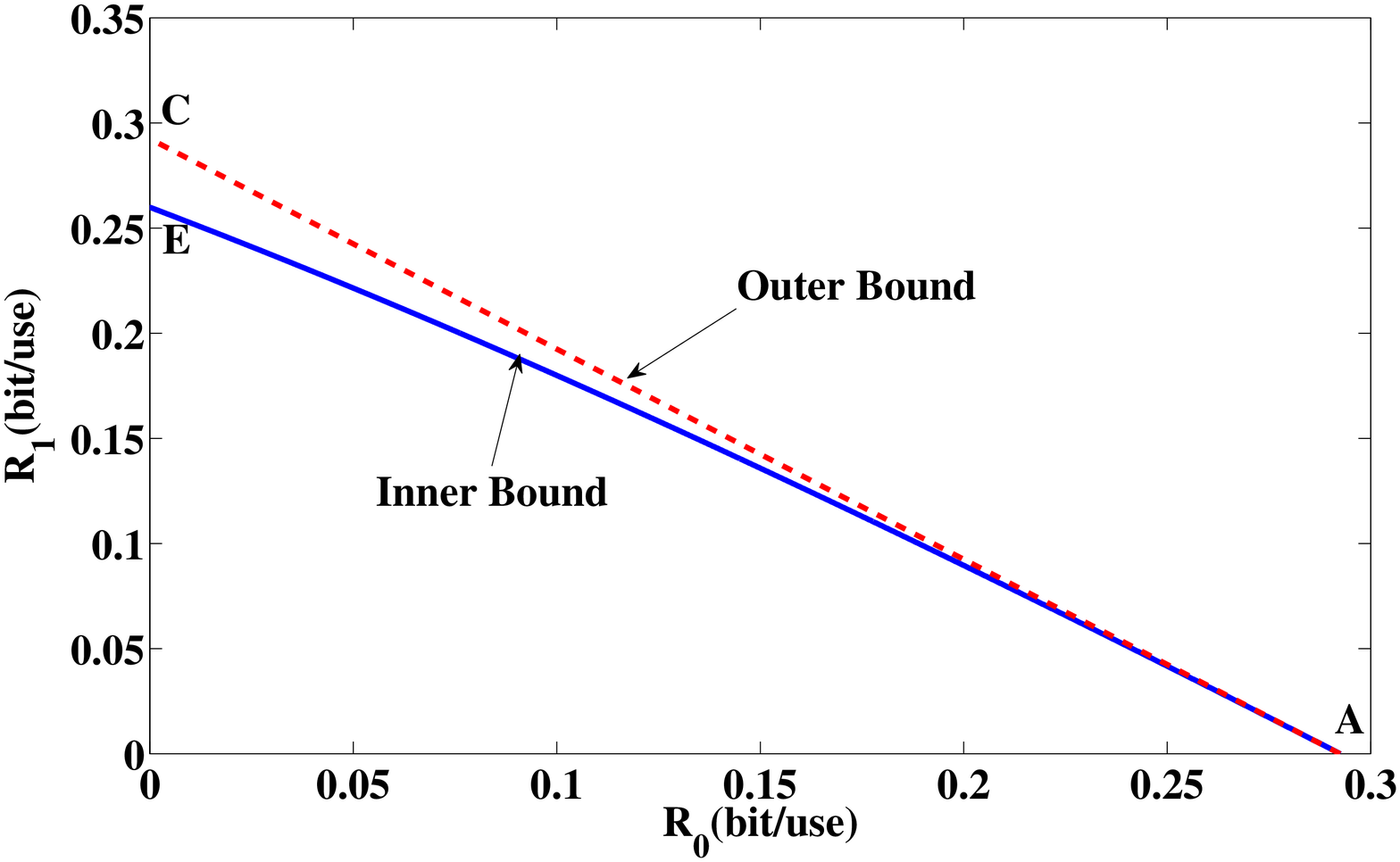} \\
\footnotesize{(a) $P_1 \geqslant 1$ with $P_0=1.5$ and $P_1=1.8$}& \footnotesize{(b) $P_1 < 1 $ with $P_0=0.5$ and $P_1=0.8$ }\\
\includegraphics[width=2.8in]{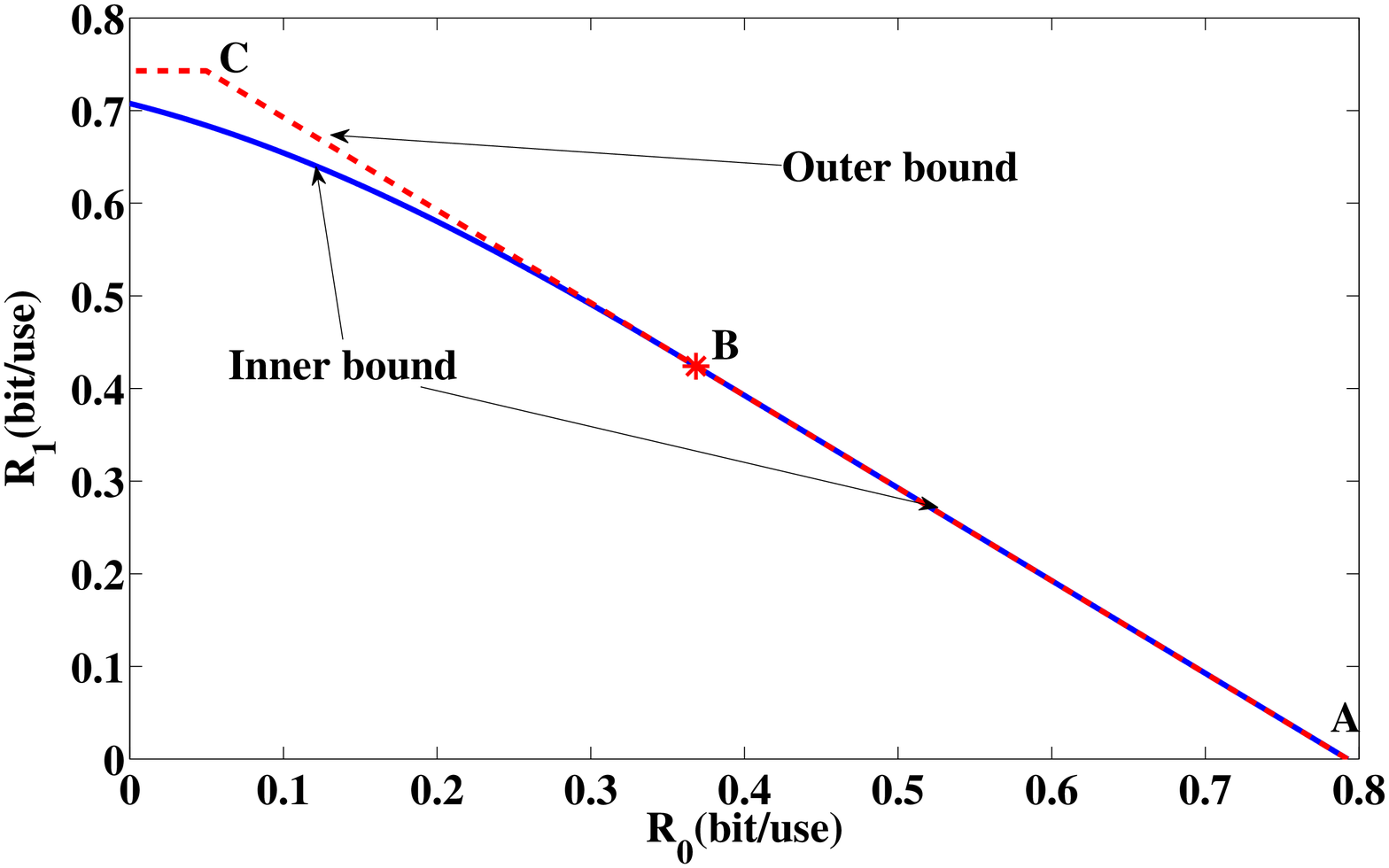}&
\includegraphics[width=2.8in]{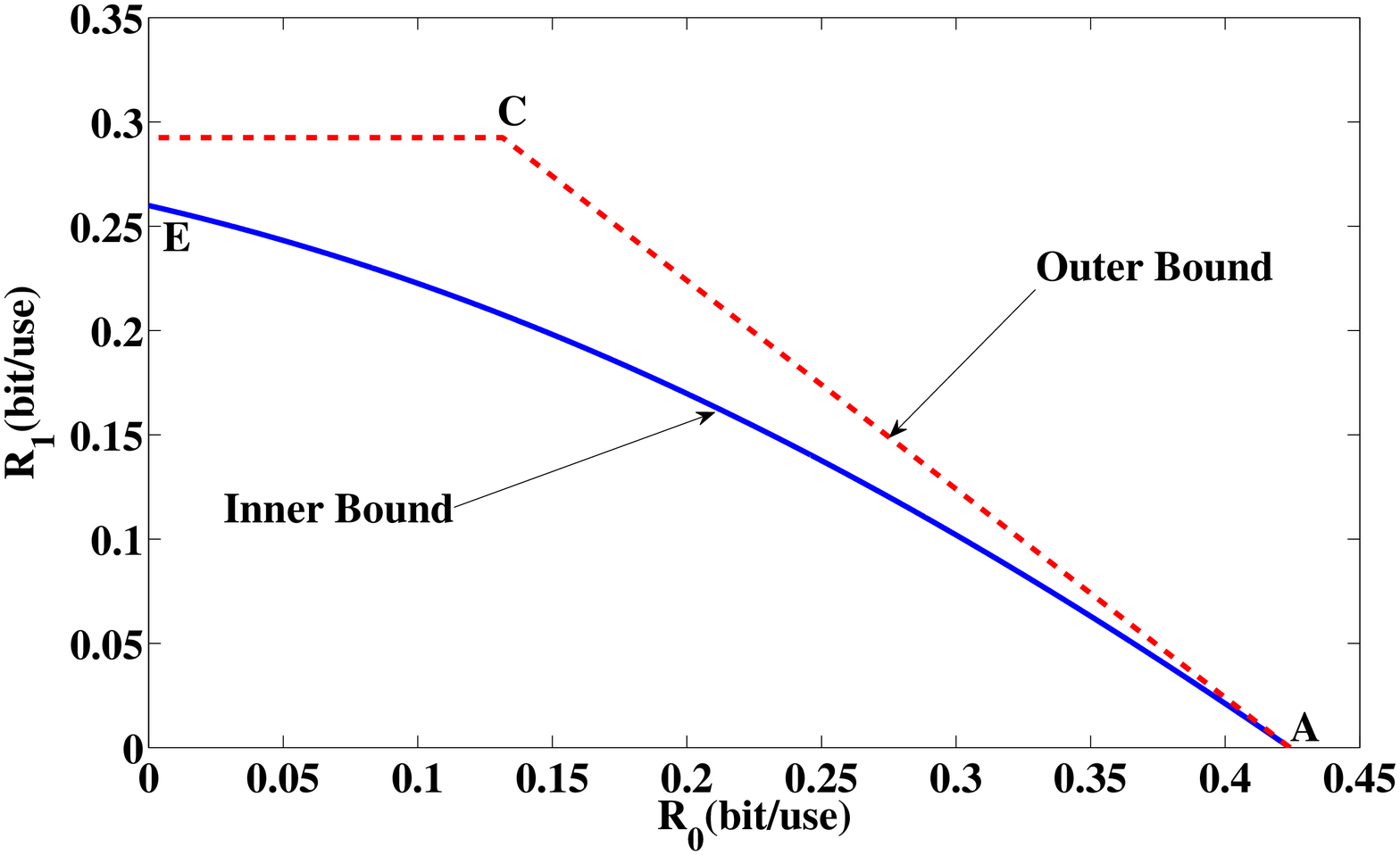}\\
\footnotesize{(c) $P_1 \geqslant 1$ with $P_0=2$ and $P_1=1.8$}&\footnotesize{(d) $P_1 < 1$ with $P_0=0.8$ and $P_1=0.5$}
\end{tabular}
\caption{Inner and outer bounds for case 2, which match partially on the boundaries}
\label{fig:SingleCase2}
\end{center}
\end{figure*}


\begin{theorem}\label{th:capacity-2}
For the Gaussian channel of model I in the regime when $Q_1\rightarrow \infty$, if $P_0-1 \leqslant P_1 < P_0 +1$ and $P_1 \geqslant 1$, the rate points $(R_0,R_1)$ on the line between $\left(\frac{1}{2}\log(1+P_0), 0\right)$ (i.e., point A in Fig.~\ref{fig:SingleCase2} (b) and (d)) and $\left(\frac{1}{2}\log( 1+\frac{ P_0- P_1+1}{P_1}),\frac{1}{2}\log P_1 \right)$ (i.e., point B in Fig.~\ref{fig:SingleCase2} (b) and (d)) are on the boundary of the capacity region.
\end{theorem}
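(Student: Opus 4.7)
The strategy is to exploit the freedom in Proposition~\ref{th:InnerGaussianSingle} to achieve every point on the segment AB, and then invoke the sum-rate outer bound \eqref{eq:outer-2} to conclude that those points lie on the capacity boundary. Since AB is contained in the line $R_0+R_1=\tfrac{1}{2}\log(1+P_0)$, matching the sum-rate outer bound suffices.

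First I would mimic the idea introduced in the proof of Theorem~\ref{th:capacity-1}: let transmitter~$1$ use only a \emph{reduced} power $\tilde P_1\le P_1$ for transmitting $W_1$ and set $\alpha=\tfrac{2\beta P_0}{\beta P_0+\tilde P_1+1}$, which is the value that activates the constraint of Proposition~\ref{th:InnerGaussianSingle} with equality in the limit $Q_1\to\infty$. A direct substitution into \eqref{eq:InnerGaussian-2} with the choice $\tilde P_1=\beta P_0+1$ gives
\begin{align*}
R_0 &\le \tfrac{1}{2}\log\Bigl(1+\tfrac{\bar\beta P_0}{\beta P_0+1}\Bigr)=\tfrac{1}{2}\log\tfrac{P_0+1}{\beta P_0+1},\\
R_1 &\le \tfrac{1}{2}\log(1+\beta P_0),
\end{align*}
and these two bounds sum exactly to the outer bound $\tfrac{1}{2}\log(1+P_0)$. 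Hence every $\beta$ for which the reduced power assignment is feasible produces a point on the sum-rate face.

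Next I would identify the admissible range of $\beta$. The condition $\tilde P_1=\beta P_0+1\le P_1$ is equivalent to $\beta\le\frac{P_1-1}{P_0}$, and it is non-empty precisely when $P_1\ge 1$, which is the hypothesis of the theorem. Varying $\beta$ over $[0,\frac{P_1-1}{P_0}]$ traces the segment: at $\beta=0$ the pair reduces to $\bigl(\tfrac{1}{2}\log(1+P_0),0\bigr)$, i.e.\ point A, and at $\beta=\frac{P_1-1}{P_0}$ one gets $R_1=\tfrac{1}{2}\log P_1$ and $R_0=\tfrac{1}{2}\log\bigl(1+\tfrac{P_0-P_1+1}{P_1}\bigr)$, i.e.\ point B. A short continuity/convexity remark closes the segment.

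Finally, I would combine achievability with the sum-rate outer bound \eqref{eq:outer-2} of Proposition~\ref{th:OuterGaussianSingle}: any achievable pair must satisfy $R_0+R_1\le\tfrac{1}{2}\log(1+P_0)$, and the segment AB lies exactly on this face; therefore none of its points can be dominated, and each is a boundary point of the capacity region. The mildly delicate step, and the one I would handle with care, is verifying that the choice $\tilde P_1=\beta P_0+1$ is truly the optimal reduced power for every admissible $\beta$ in this regime (so that no further segment of the sum-rate face is missed); I would do this by differentiating the right-hand side of \eqref{eq:InnerGaussian-2} in $\tilde P_1$ under the binding constraint $\alpha=\tfrac{2\beta P_0}{\beta P_0+\tilde P_1+1}$ and showing that the sum $R_0+R_1$ is maximized precisely at $\tilde P_1=\beta P_0+1$ whenever that value is $\le P_1$, which is the content of the hypothesis $P_1\ge 1$ combined with $P_0-1\le P_1<P_0+1$.
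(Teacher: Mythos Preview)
Your argument is correct and reaches the same conclusion, but the route differs from the paper's. The paper does \emph{not} use a reduced transmit power $\tilde P_1$ in this case: it sets $\alpha=\tfrac{2\beta P_0}{\beta P_0+P_1+1}$ with the full power $P_1$, observes that the single choice $\beta=\tfrac{P_1-1}{P_0}$ lands exactly at point~B on the sum-rate line, notes that point~A is trivially achievable, and then invokes time-sharing between A and B to obtain the entire segment. You instead import the reduced-power trick from the proof of Theorem~\ref{th:capacity-1}, choose $\tilde P_1=\beta P_0+1$, and let $\beta$ sweep through $[0,\tfrac{P_1-1}{P_0}]$ so that the achievable pair $\bigl(\tfrac12\log\tfrac{P_0+1}{\beta P_0+1},\,\tfrac12\log(1+\beta P_0)\bigr)$ directly parametrizes the segment without any appeal to time-sharing. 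Your approach is a bit more explicit (every point on AB is produced by a concrete operating point of the scheme), while the paper's is shorter (one nontrivial point plus convexity). Your final paragraph about optimizing over $\tilde P_1$ is unnecessary for the theorem as stated: the claim is only that the segment AB lies on the boundary, not that it exhausts the achievable portion of the sum-rate face, so once you have shown each point on AB is achievable and meets the outer bound \eqref{eq:outer-2} you are done.
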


\begin{proof}
  For the case $P_0-1 \leqslant P_1 < P_0 +1$, we also set $\alpha = \frac{2\beta P_0}{\beta P_0 +P_1 +1}$. Plug $\alpha$ into \eqref{eq:InnerGaussian-2} and we have
  \begin{subequations}
    \begin{flalign}
  &R_0 \leqslant  \frac{1}{2}\log\left( 1+\frac{\bar{\beta} P_0}{\beta P_0+1}\right)\\
  &R_1 \leqslant  \frac{1}{2}\log\left( 1+\frac{4\beta P_0P_1}{4\beta P_0 +(P_1+1-\beta P_0)^2}\right)
\end{flalign}
  \end{subequations}
  When $P_1 \geqslant 1$, by setting $\beta = \frac{P_1-1}{P_0}$, we have an achievable rate point $(R_0,R_1)$ given by $\left(\frac{1}{2}\log( 1+\frac{ P_0- P_1+1}{P_1}), \frac{1}{2}\log( P_1)\right)$, which is point $B$ in Figure \ref{fig:SingleCase2}. It can be seen that point $B$ is also on the outer bound. Obviously, Point $A$ is achievable for any set of $P_0$ and $P_1$. Thus, by time sharing the line $A-B$ is on the boundary of the capacity region.
\end{proof}


{\em Case 3: $P_1 <P_0-1$.} Similar to cases 2, the inner and outer bounds match partially over the sum rate bound, i.e., the two bounds match over the line between points A and B (see Fig.~\ref{fig:SingleCase4} (a)) if $P_1 \geqslant 1$ and match at only the point A (see Fig.~\ref{fig:SingleCase4} (b)) if $P_1 < 1$. However, differently from case 2, the inner and outer bounds also match when $R_1 = \frac{1}{2} \log(1+P_1)$ over the line between points D and E (see Fig.~\ref{fig:SingleCase4} (a) and (b)). This is because the power $P_0$ of transmitter 0 in this case is large enough to fully cancel state and signal interference so that transmitter 1 is able to reach its maximum point-to-point rate to receiver 1 without interference. Furthermore, transmitter 0 is also able to simultaneously transmit its own message at a certain positive rate as reflected by the line D-E in Fig.~\ref{fig:SingleCase4} (a) and (b). We summarize these results on the boundaries of the capacity region in the following theorem.
\begin{theorem}\label{th:capacity-4}
For the Gaussian channel of model I in the regime when $Q_1\rightarrow \infty$, if $P_1 <P_0-1$ and $P_1 \geqslant 1$, then the points on the line between $\left(\frac{1}{2}\log(1+P_0), 0\right)$ (i.e., point A in Fig.~\ref{fig:SingleCase4} (a)) and $\left(\frac{1}{2}\log( 1+\frac{ P_0- P_1+1}{P_1}),\frac{1}{2}\log P_1\right)$ (i.e., point B in Fig.~\ref{fig:SingleCase4} (a)), and the points on the line between $\left(\frac{1}{2}\log(\frac{P_0+1}{P_1+2}), \frac{1}{2}\log(1+P_1)\right)$ (i.e., point D in Fig.~\ref{fig:SingleCase4} (a)) and $\left(0,\frac{1}{2}\log(1+P_1)\right)$ (i.e., point E in Fig.~\ref{fig:SingleCase4} (a)) are on the boundary of the capacity region.

If $P_1 <P_0-1$ and $P_1 < 1$, then the point $\left(\frac{1}{2}\log(1+P_0), 0\right)$ (i.e., point A in Fig.~\ref{fig:SingleCase4} (b)) and the points on the line between $\left(\frac{1}{2}\log(\frac{P_0+1}{P_1+2}), \frac{1}{2}\log(1+P_1)\right)$ (i.e., point D in Fig.~\ref{fig:SingleCase4} (b)) and $\left(0,\frac{1}{2}\log(1+P_1)\right)$ (i.e., point E in Fig.~\ref{fig:SingleCase4} (b)) are on the boundary of the capacity region.

\end{theorem}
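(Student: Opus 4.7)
The proof will break into two independent achievability arguments, one for segment A-B and one for segment D-E, each followed by time-sharing. For A-B, I reuse the parameter choice from the proof of Theorem~\ref{th:capacity-2}: set $\alpha=\frac{2\beta P_0}{\beta P_0+P_1+1}$ in Proposition~\ref{th:InnerGaussianSingle} and specialize $\beta=(P_1-1)/P_0$, which lies in $[0,1]$ since $P_1\geqslant 1$ and, in case~3, $P_1<P_0-1<P_0+1$. Substitution into \eqref{eq:InnerGaussian-1}--\eqref{eq:InnerGaussian-2} yields point~$B$, and a direct calculation shows $R_0+R_1=\frac{1}{2}\log(1+P_0)$ at~$B$. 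Since point~$A$ is trivially achievable by silencing transmitter~1, time-sharing along the line $R_0+R_1=\frac{1}{2}\log(1+P_0)$ closes the whole A-B segment, and every point on it lies on the sum-rate outer bound \eqref{eq:outer-2}, placing it on the capacity boundary.

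The D-E segment requires a different parameter choice that is unique to case~3. The key observation is that setting $\alpha=1$ nulls the penalty term $(1-1/\alpha)^2\beta P_0$ in \eqref{eq:InnerGaussian-2}, letting $R_1$ reach the point-to-point capacity $\frac{1}{2}\log(1+P_1)$. The feasibility requirement $\alpha\leqslant\frac{2\beta P_0}{\beta P_0+P_1+1}$ at $\alpha=1$ becomes $\beta\geqslant(P_1+1)/P_0$, which is compatible with $\beta\leqslant 1$ precisely because $P_1<P_0-1$; this is why segment D-E is absent from cases~1 and~2. Taking $\beta=(P_1+1)/P_0$ maximizes the $R_0$ bound \eqref{eq:InnerGaussian-1} and gives point~$D=\left(\frac{1}{2}\log\frac{P_0+1}{P_1+2},\frac{1}{2}\log(1+P_1)\right)$, while $\beta=1$ drives $R_0$ to zero and gives point~$E$. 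Time-sharing between $D$ and $E$ traces the entire segment, and because every point on it saturates the outer bound \eqref{eq:outer-1} it belongs to the capacity boundary. When $P_1<1$, the choice $\beta=(P_1-1)/P_0$ used for point~$B$ becomes negative so only $A$ survives from the first construction, while the D-E argument is untouched since it depends only on $P_1<P_0-1$.

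The main verification I anticipate is confirming that $\alpha=1$ is indeed the right (and essentially unique) value that saturates the $R_1$ upper bound while simultaneously permitting a strictly positive $R_0$. Pointwise maximization of $R_1$ in \eqref{eq:InnerGaussian-2} over $\alpha$ forces $\alpha=1$, which in turn imposes the lower bound $\beta\geqslant(P_1+1)/P_0$; the structural change in the boundary shape at $P_1=P_0-1$ (the appearance of segment D-E) comes precisely from whether this lower bound is consistent with $\beta\leqslant 1$. Everything else is routine substitution using Propositions~\ref{th:InnerGaussianSingle} and~\ref{th:OuterGaussianSingle} combined with time-sharing, paralleling the flow of Theorems~\ref{th:capacity-1} and~\ref{th:capacity-2}.
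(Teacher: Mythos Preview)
Your proposal is correct and follows essentially the same approach as the paper: the paper also splits the analysis into a Part~I range $0\leqslant\beta\leqslant\frac{P_1+1}{P_0}$ (reusing the case~2 argument with $\alpha=\frac{2\beta P_0}{\beta P_0+P_1+1}$ and $\beta=\frac{P_1-1}{P_0}$ to get point~$B$) and a Part~II range $\frac{P_1+1}{P_0}\leqslant\beta\leqslant 1$ (setting $\alpha=1$ so that $R_1=\frac{1}{2}\log(1+P_1)$, with $\beta=\frac{P_1+1}{P_0}$ yielding point~$D$). The only cosmetic difference is that the paper traces segment $D$--$E$ by varying $\beta$ over $[\frac{P_1+1}{P_0},1]$ directly rather than by time-sharing between the endpoints, but either mechanism produces the same line.
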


\begin{proof}
  For the case $P_1 <P_0-1$, the inner bound is equivalent to

Part I
\begin{subequations}
  \begin{flalign}
&0\leqslant\beta\leqslant \frac{P_1+1}{P_0}\\
       &R_0\leqslant \frac{1}{2} \log\left(1+\frac{\bar{\beta} P_0}{1+\beta P_0}\right)\\
    &R_1\leqslant \frac{1}{2} \log(1+\beta P_0)
 \end{flalign}
\end{subequations}

and Part II
\begin{subequations}
  \begin{flalign}
 &\frac{P_1+1}{P_0} \leqslant \beta \leqslant 1\\
       &R_0\leqslant \frac{1}{2} \log\left(1+\frac{\bar{\beta} P_0}{1+\beta P_0}\right)\\
    &R_1\leqslant \frac{1}{2} \log(1+P_1)\label{eq:InnerGaussian-6}
 \end{flalign}
\end{subequations}

 For Part I, when $P_1\geqslant 1$, same as in case 2, the line $A-B$ is on the boundary of the capacity region, achievable region and outer bound is shown in Figure \ref{fig:SingleCase4} (b). When $P_1 < 1$, only point $A$ on the capacity boundary is obtained, the inner bound and outer bound are indicated in Figure \ref{fig:SingleCase4}. For part II, because the bound for $\alpha$ is larger than $1$, we set $\alpha=1$, which means that both interference and state can be fully canceled at receiver 1, and the transmitting rate for $W_1$ achieves the point-to-point channel capacity(as in \eqref{eq:InnerGaussian-6}). Here, by setting $\beta = \frac{P_1+1}{P_0}$, point $D$ is achieved. Thus, the line $D-E$ is on the boundary of the capacity region. It is reasonable. Because for this case, $P_0$ is big enough to cancel the state for receiver 1 on one hand and cancel the interference caused by $X_0$ on the other hand.
\end{proof}

\begin{figure*}[hbt!]
\begin{center}
\begin{tabular}{cc}
\includegraphics[width=2.8in]{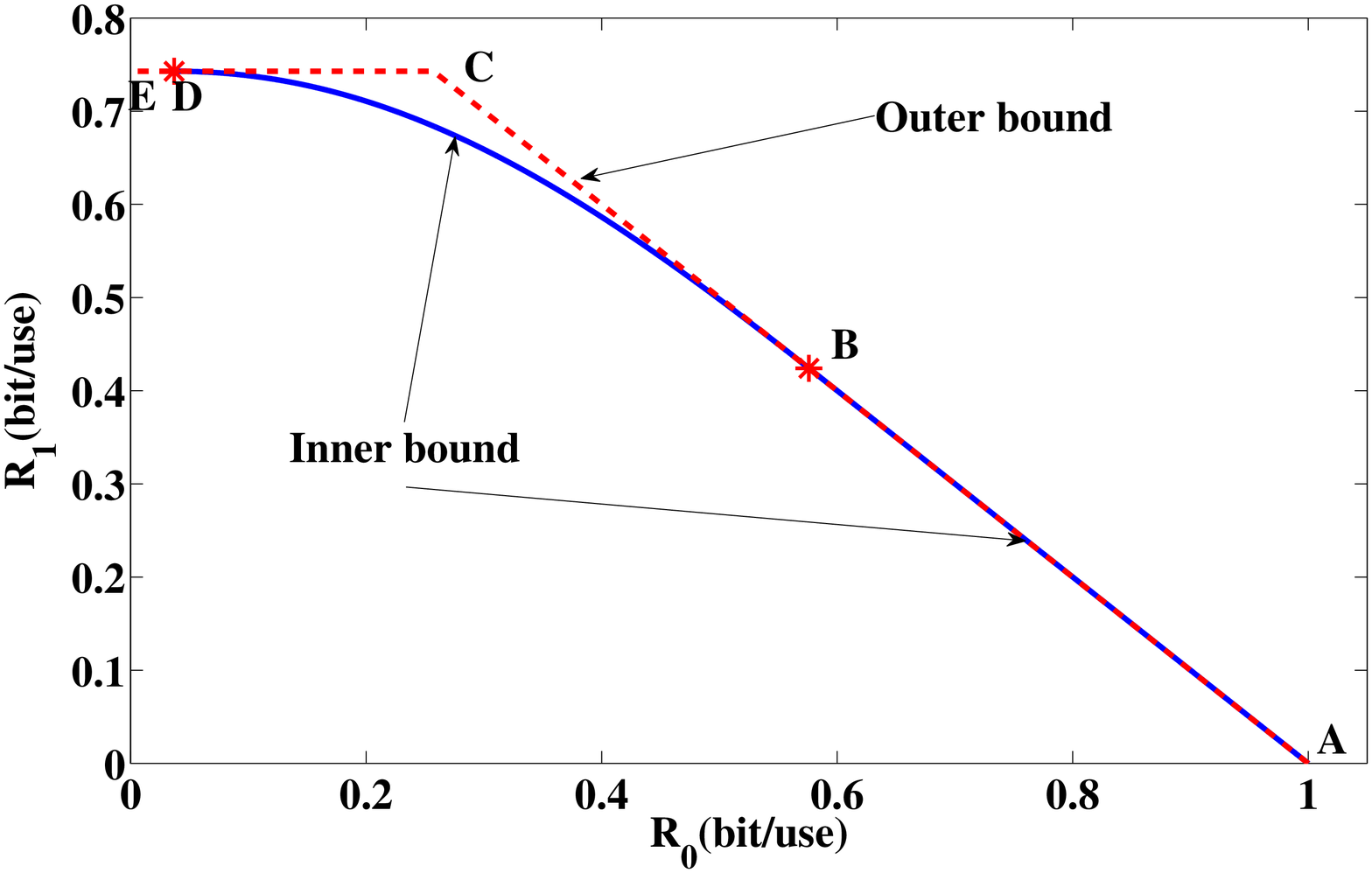}&
\includegraphics[width=2.8in]{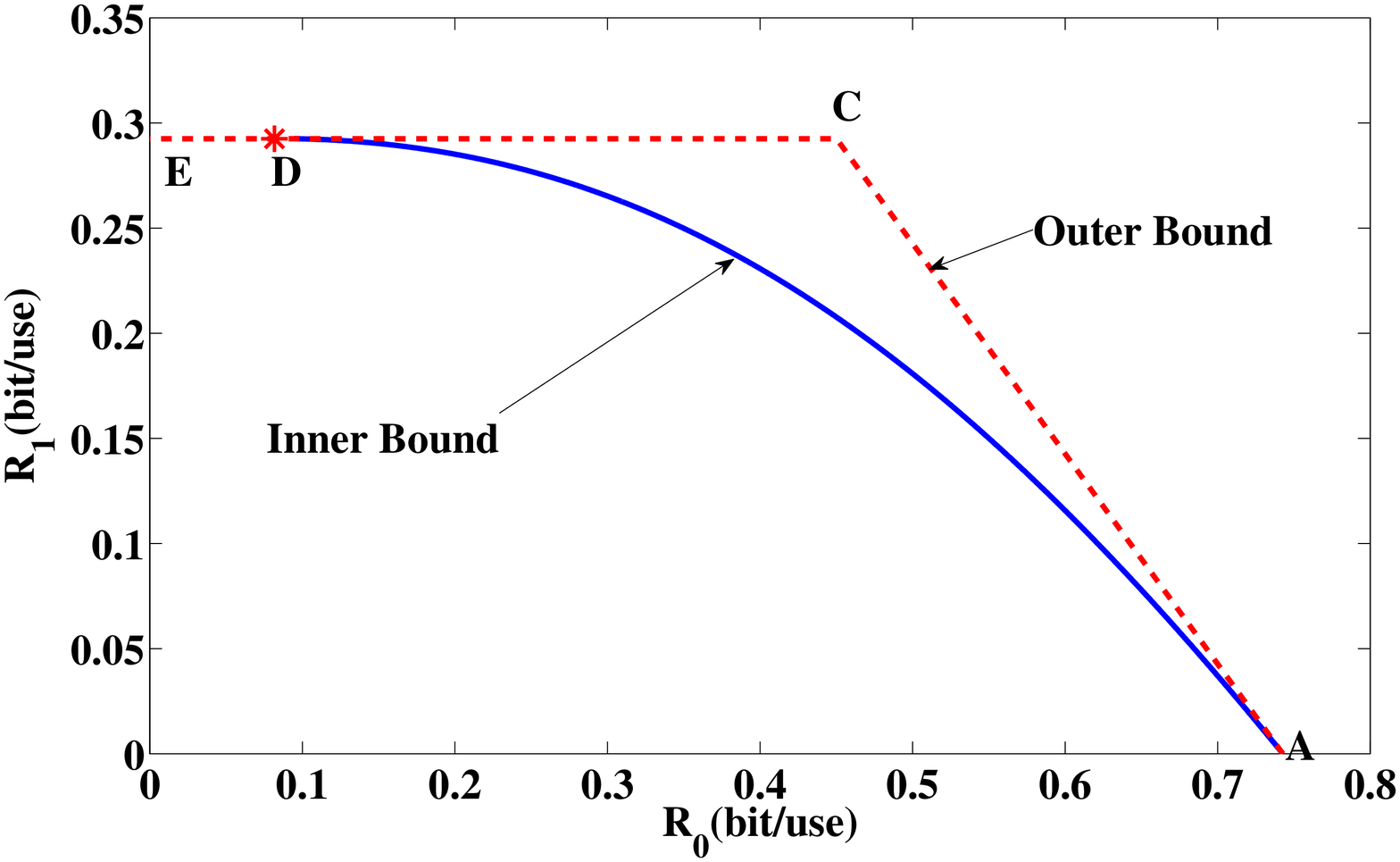} \\
\footnotesize{(a) $P_1 \geqslant 1$ with $P_0=3$ and $P_1=1.8$}& \footnotesize{(b) $P_1 < 1$ with $P_0=1.5$ and $P_1=0.5$}
\end{tabular}
\caption{Inner and outer bounds for case 3, which match partially on the boundaries}
\label{fig:SingleCase4}
\end{center}
\end{figure*}

\section{Model II: K=2}\label{sec:Same}

In this section, we consider the Gaussian channel of model II with $K=2$, but only receiver 1 interfered by an infinite state. We first study the scenario, in which the helper devotes to help two users without transmitting its own message, i.e., $W_0=\phi$. We then extend the result to the more general scenario, in which the helper also has its own message destined for the corresponding receiver in addition to helping the two users, i.e., $W_0\neq\phi$.

\subsection{Scenario with Dedicated Helper ($W_0=\phi$)}

In this subsection, we study the scenario in which only receiver 1 is corrupted by a state sequence, and the helper (without transmission of its own messages) fully assists to cancel such state interference. Here, the challenge lies in the fact that the helper needs to assist receiver 1 to remove the state interference, but such signal inevitable causes interference to receiver 2. In this subsection, we first derive an outer bound, and then derive an inner bound based on the helper using a developed layered coding scheme with one layer assisting the state-interfered receiver, and the other layer canceling the interference in order to address the challenge mentioned above. We then characterize the sum capacity under certain channel parameters and segments of the capacity boundary.

We first derive a useful outer bound for Model II.
\begin{proposition}\label{th:Outerc=0}
  For the Gaussian channel of model II with $W_0=\phi$, an outer bound on the capacity region for the regime when $Q\rightarrow \infty$ consists of rate pairs $(R_1,R_2)$ satisfying:
  \begin{subequations}
    \begin{flalign}
  R_1\leqslant & \min\left\{\frac{1}{2}\log(1+P_0),\frac{1}{2}\log(1+P_1)\right\}\label{eq:OuterGaussianc=0-1} \\
  R_2\leqslant & \frac{1}{2}\log(1+P_2)\label{eq:OuterGaussianc=0-2}\\
    R_1+R_2\leqslant & \frac{1}{2}\log(1+P_0+P_2)\label{eq:OuterGaussianc=0-3}
  \end{flalign}
  \end{subequations}
\end{proposition}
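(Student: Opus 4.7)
The plan is to prove the three bounds separately. The individual-rate bounds follow from standard genie-aided converses and a reduction to Model~I, while the sum-rate bound requires a joint-entropy computation that exploits the limit $Q_1 \to \infty$.

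For $R_2 \leq \frac{1}{2}\log(1+P_2)$, I provide receiver~$2$ with the side information $X_0^n$ as a genie. Since $W_0 = \phi$, $X_0^n$ is a function of $S_1^n$, which is independent of $W_2$, so the genie preserves achievability; the residual channel $Y_2^n - X_0^n = X_2^n + N_2^n$ yields the bound. For $R_1 \leq \frac{1}{2}\log(1+P_1)$, I provide receiver~$1$ with $X_0^n + S_1^n$, a function of $S_1^n$ and hence independent of $W_1$, and subtraction produces the effective channel $X_1^n + N_1^n$. For $R_1 \leq \frac{1}{2}\log(1+P_0)$, I observe that restricting to the sub-network of transmitter~$1$, the helper, and receiver~$1$ (by discarding transmitter~$2$ and receiver~$2$) yields Model~I with $W_0 = \phi$, so the sum-rate bound of Proposition~\ref{th:OuterGaussianSingle} specializes to $R_1 \leq \frac{1}{2}\log(1+P_0)$.

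The main work is the sum-rate bound $R_1 + R_2 \leq \frac{1}{2}\log(1+P_0+P_2)$. Following the template of the Model~I sum-rate proof, I apply Fano's inequality and the Markov chain $(W_1, W_2) \to (X_0^n, X_1^n, X_2^n) \to (Y_1^n, Y_2^n)$ to obtain
\begin{align*}
n(R_1+R_2) &\leq I(W_1, W_2; Y_1^n, Y_2^n) + 2n\epsilon_n \\
&\leq I(X_0^n, X_1^n, X_2^n; Y_1^n, Y_2^n) + 2n\epsilon_n,
\end{align*}
and then introduce the change of variables $(Y_2^n, Y_1^n - Y_2^n)$, where $Y_1^n - Y_2^n = X_1^n - X_2^n + S_1^n + N_1^n - N_2^n$, so that $h(Y_1^n, Y_2^n) \leq h(Y_2^n) + h(Y_1^n - Y_2^n)$. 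Gaussian maximum-entropy bounds give $h(Y_2^n) \leq \tfrac{n}{2}\log(2\pi e(P_0+P_2+1))$ (using $X_0 \perp X_2$, $X_0 \perp N_2$) and $h(Y_1^n - Y_2^n) \leq \tfrac{n}{2}\log(2\pi e(P_1+P_2+Q_1+2))$ (using mutual independence of $X_1, X_2, S_1, N_1, N_2$). Combined with an appropriate matching treatment of $h(Y_1^n, Y_2^n \mid X_0^n, X_1^n, X_2^n)$, this yields
\[
n(R_1+R_2) \leq \tfrac{n}{2}\log\tfrac{(P_0+P_2+1)(P_1+P_2+Q_1+2)}{Q_1+1} + o(n),
\]
which converges to $\tfrac{n}{2}\log(1+P_0+P_2)$ as $Q_1 \to \infty$, because $(P_1+P_2+Q_1+2)/(Q_1+1) \to 1$.

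The principal obstacle is the scheme-dependent conditional entropy $h(S_1^n+N_1^n \mid X_0^n)$ implicit in the expansion: because the helper may correlate $X_0^n$ with $S_1^n$ arbitrarily subject only to $E[X_0^2] \leq P_0$, this term varies between $\tfrac{n}{2}\log(2\pi e)$ (full correlation) and $\tfrac{n}{2}\log(2\pi e(Q_1+1))$ (zero correlation) across schemes, and a naive worst-case lower bound leaves a residual $\log Q_1$ that blows up in the limit. The resolution is to exploit a compensating effect: any reduction of $h(S_1^n+N_1^n \mid X_0^n)$ through stronger correlation is offset by a corresponding reduction in $h(Y_1^n, Y_2^n)$ through the $\mathrm{cov}(X_0, S_1)$ entries of the joint covariance matrix, so that the scheme-dependent contributions cancel asymptotically and only the MAC-type quantity $\tfrac{1}{2}\log(1+P_0+P_2)$ at receiver~$2$ survives.
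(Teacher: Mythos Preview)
Your single–rate bounds are fine, but the sum–rate argument has a genuine gap that is not repaired by the ``compensating effect'' you invoke.  The trouble is the data–processing step $I(W_1,W_2;Y_1^n,Y_2^n)\le I(X_0^n,X_1^n,X_2^n;Y_1^n,Y_2^n)$: although the Markov chain is valid, the right–hand side can be unbounded in $Q_1$ because $X_0^n$ may carry arbitrarily much information about $S_1^n$.  Concretely, take the (admissible) helper strategy $X_0=\sqrt{P_0/Q_1}\,S_1$.  Then conditioning on $X_0$ determines $S_1$, so $h(S_1^n+N_1^n\mid X_0^n)=h(N_1^n)$, while a direct covariance computation gives $\det\Sigma_{Y_1,Y_2}\approx Q_1(P_2+1)$ as $Q_1\to\infty$; hence $I(X_0,X_1,X_2;Y_1,Y_2)\approx\tfrac12\log\bigl(Q_1(P_2+1)\bigr)\to\infty$.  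In particular your chosen decomposition $h(Y_1^n,Y_2^n)\le h(Y_2^n)+h(Y_1^n-Y_2^n)$ cannot host any compensation, since $Y_1-Y_2=X_1-X_2+S_1+N_1-N_2$ does not contain $X_0$ at all, so its variance is exactly $P_1+P_2+Q_1+2$ regardless of how $X_0$ is correlated with $S_1$.  The $Q_1+1$ in the denominator of your displayed bound is therefore only valid when $X_0\perp S_1$, which is not a converse assumption.

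The paper's proof sidesteps this by never passing through $X_0^n$.  It keeps the separate Fano terms $I(W_1;Y_1^n)+I(W_2;Y_2^n)$, which after removing $X_1^n,X_2^n$ yield the \emph{unconditional} entropies $h(X_0^n+S_1^n+N_1^n)$ and $h(X_0^n+N_2^n)$; the key device is then the same–marginal identification $N_1^n=N_2^n$, which collapses these two terms into the joint entropy $h(S_1^n,X_0^n+N_1^n)\ge h(S_1^n)+h(N_1^n)$.  This lower bound is scheme–independent and scales like $\tfrac{n}{2}\log Q_1$, exactly matching the growth of $h(Y_1^n)$ and leaving only $\tfrac12\log(1+P_0+P_2)$ from $h(Y_2^n)-h(N_1^n)$.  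If you want to salvage your joint–Fano route, you would need to retain $h(Y_1^n,Y_2^n\mid W_1,W_2)=h(X_0^n+S_1^n+N_1^n,\,X_0^n+N_2^n)$ rather than conditioning further on $X_0^n$, and then apply the same $N_1=N_2$ trick; at that point the argument becomes essentially the paper's.
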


\begin{proof}
  The proof is relegated to Appendix \ref{apx:Outerc=0}
\end{proof}
We note that \eqref{eq:OuterGaussianc=0-1} represents the best single-user rate of receiver 1 with the helper dedicated to help it, as shown in Proposition \ref{th:OuterGaussianSingle}, \eqref{eq:OuterGaussianc=0-2} is the point-to-point capacity for receiver 2, and \eqref{eq:OuterGaussianc=0-3} implies that although the two transmitters communicate over the parallel channel to their corresponding receivers, due to the shared common helper, the sum rate is still subject to a certain rate limit.

We next describe our idea to design an achievable scheme. We first note that although receiver 2 is not interfered by the state, the signal that the helper sends to assist receiver 1 to deal with the state still causes unavoidable interference to receiver 2. A natural idea to optimize the transmission rate to receiver 2 is simply to keep the helper silent. In this case, without the helper's assistance, receiver 1 gets zero rate due to infinite state power. In this paper, we design a novel scheme, which enables the point-to-point channel capacity for receiver 2 and certain positive rate for receiver 1 simultaneously; i.e., the helper is able to assist receiver 1 without causing interference to receiver 2. In our achievable scheme, the signal of the helper is split into two parts, represented by $U$ and $V$ as in Proposition \ref{th:InnerSamec=0}. Here, $U$ is designed to help receiver 1 to cancel the state while treating $V$ as noise, and $V$ is designed to help receiver 2 to cancel the interference caused by $U$. Since there is no state interference at receiver 2, $U$ is decoded only at receiver 1. Based on such an achievable scheme, we obtain the following achievable region.

  \begin{proposition}\label{th:InnerSamec=0}
    For the discrete memoryless model II with $W_0=\phi$, an achievable region consists of the rate pair $(R_1,R_2)$ satisfying
    \begin{subequations}
      \begin{flalign}
                  R_{1} &\leqslant I(X_1;Y_1U)\label{eq:InnerDMCc=0-1}\\
                  R_{1} &\leqslant I(X_1U;Y_1)-I(S_1; U)\label{eq:InnerDMCc=0-2}\\
                  R_{2} &\leqslant I(X_2;Y_2V)\label{eq:InnerDMCc=0-3}\\
                  R_{2}&\leqslant I(X_2 V;Y_2)-I(V;US_1)\label{eq:InnerDMCc=0-4}
    \end{flalign}
    \end{subequations}
    for some distribution \small{$P_{S_1UVX_0X_1X_2}=P_{S_1}P_{UVX_0|S_1}P_{X_1}P_{X_2}$}, where $U$ and $V$ are auxiliary random variables.
  \end{proposition}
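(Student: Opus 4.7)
The plan is to prove Proposition \ref{th:InnerSamec=0} via a random-coding argument with a two-layer Gelfand--Pinsker structure at the helper and joint typicality decoding with auxiliary variables at the two receivers. I would fix a joint distribution $P_{S_1}P_{UVX_0|S_1}P_{X_1}P_{X_2}$ as in the statement, together with auxiliary-codebook rates $R_U>I(U;S_1)$ and $R_V>I(V;US_1)$.

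Codebook generation: at the helper, generate $2^{nR_U}$ i.i.d.\ sequences $u^n\sim\prod_i P_U(u_i)$ and $2^{nR_V}$ i.i.d.\ sequences $v^n\sim\prod_i P_V(v_i)$; transmitters $1$ and $2$ independently generate $2^{nR_k}$ codewords $x_k^n(w_k)\sim\prod_i P_{X_k}$ for $k=1,2$. Encoding: given $s_1^n$, the helper invokes the covering lemma to pick a $u^n$ jointly typical with $s_1^n$ (succeeding w.h.p.\ since $R_U>I(U;S_1)$), then picks a $v^n$ jointly typical with $(u^n,s_1^n)$ (succeeding w.h.p.\ since $R_V>I(V;US_1)$), and finally draws $x_0^n$ symbol-by-symbol from $P_{X_0|UVS_1}$. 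Transmitter $k$ sends $x_k^n(w_k)$.

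Decoding at receiver $1$ searches for the unique pair $(\hat w_1,\hat u^n)$ such that $(x_1^n(\hat w_1),\hat u^n,y_1^n)$ are jointly typical. A standard packing-lemma analysis over the three non-trivial error events yields $R_1<I(X_1;Y_1U)$ (only $w_1$ wrong), $R_U<I(U;X_1Y_1)$ (only $u^n$ wrong), and $R_1+R_U<I(X_1U;Y_1)$ (both wrong). Taking $R_U$ slightly above $I(U;S_1)$ and letting the slack vanish eliminates $R_U$ and produces precisely \eqref{eq:InnerDMCc=0-1} and \eqref{eq:InnerDMCc=0-2}; the middle constraint becomes the implicit feasibility condition $I(U;X_1Y_1)\geq I(U;S_1)$, which is implied by the right-hand side of \eqref{eq:InnerDMCc=0-2} being nonnegative. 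An identical argument at receiver $2$, with $(V,X_2)$ replacing $(U,X_1)$ and with the ``state'' seen by $V$ being the pair $(U,S_1)$, yields \eqref{eq:InnerDMCc=0-3} and \eqref{eq:InnerDMCc=0-4}.

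The principal subtlety is the nested covering at the helper: the second step must succeed against both the previously chosen $u^n$ and the realized $s_1^n$, so $V$ effectively sees the enlarged source $(U,S_1)$. By the conditional typicality lemma, once $u^n$ is jointly typical with $s_1^n$ and $v^n$ is jointly typical with $(u^n,s_1^n)$, the full tuple $(u^n,v^n,s_1^n,x_0^n)$ is jointly typical under the intended distribution, which is precisely what keeps the packing-lemma estimates at both receivers from being poisoned by mismatched empirical statistics. Once this is in hand the remaining analysis is routine, and letting $\epsilon\to 0$ recovers the claimed region. The main conceptual point, already highlighted in the paper's preamble to the proposition, is that $U$ is binned only against $S_1$ and decoded at receiver $1$ alone, while $V$ is binned against the joint $(U,S_1)$ in order to protect receiver $2$ from the interference that $U$ introduces.
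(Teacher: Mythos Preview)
Your proposal is correct and follows essentially the same route as the paper: independent $U$- and $V$-codebooks at the helper, nested covering of $U$ against $S_1$ then $V$ against $(U,S_1)$, and joint-typicality decoding of $(X_1,U)$ at receiver~1 and $(X_2,V)$ at receiver~2, followed by elimination of the auxiliary rates. The one place where you diverge slightly is in handling the ``only the auxiliary index is wrong'' error event: the paper simply declares an error only when $\hat w_k$ is wrong (non-unique decoding of the auxiliary), so the constraint $R_U<I(U;X_1Y_1)$ never appears, whereas you decode the pair uniquely and then argue the extra constraint away as a feasibility condition---your implication claim there is not quite right (nonnegativity of $I(X_1U;Y_1)-I(U;S_1)$ does not in general force $I(U;X_1Y_1)\ge I(U;S_1)$), so it is cleaner to adopt the paper's non-unique decoding viewpoint and drop that event outright.
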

\begin{proof}
  The proof is detailed in Appendix \ref{apx:InnerSamec=0}.
\end{proof}

  An straight-forward subregion for the inner bound is as follows.

  \begin{corollary}\label{th:InnerDMCc=0}
  For the discrete memoryless model II with $W_0=\phi$, an achievable region consists of the rate pair $(R_1,R_2)$ satisfying
  \begin{subequations}
    \begin{flalign}
                  R_{1} &\leqslant I(X_1;Y_1|U),\\
                  R_{2} &\leqslant I(X_2;Y_2|V),
    \end{flalign}
  \end{subequations}
    for some distribution \small{$P_{S_1UVX_0X_1X_2}=P_{S_1}P_{UVX_0|S_1}P_{X_1}P_{X_2}$}, where $U$ and $V$ are auxiliary random variables such that
\begin{subequations}
  \begin{flalign}
  &I(U;Y_1)\geqslant I(U;S_1),\\
  &I(V;Y_2)\geqslant I(V;US_1).
\end{flalign}
\end{subequations}
  \end{corollary}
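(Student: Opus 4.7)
The plan is to derive this corollary directly from Proposition \ref{th:InnerSamec=0} by (i) simplifying bounds \eqref{eq:InnerDMCc=0-1} and \eqref{eq:InnerDMCc=0-3} using the independence structure of the joint distribution, and (ii) showing that the two side conditions $I(U;Y_1)\geq I(U;S_1)$ and $I(V;Y_2)\geq I(V;US_1)$ make bounds \eqref{eq:InnerDMCc=0-2} and \eqref{eq:InnerDMCc=0-4} redundant. Operationally, this corresponds to a successive decoding interpretation: receiver 1 first decodes the helper's state-cancelling variable $U$ (treating $V$ as noise, enabled by the first side condition), then decodes its own message from the residual; receiver 2 first decodes the helper's interference-precoding variable $V$ (enabled by the second side condition), then decodes its own message.

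For the simplification step, I would use the chain rule to write
\begin{equation}
I(X_1;Y_1 U)=I(X_1;U)+I(X_1;Y_1\mid U),\qquad I(X_2;Y_2 V)=I(X_2;V)+I(X_2;Y_2\mid V).
\end{equation}
Since the joint distribution factors as $P_{S_1}P_{UVX_0\mid S_1}P_{X_1}P_{X_2}$, the inputs $X_1$ and $X_2$ are independent of $(U,V,S_1,X_0)$, so $I(X_1;U)=I(X_2;V)=0$. Therefore \eqref{eq:InnerDMCc=0-1} and \eqref{eq:InnerDMCc=0-3} reduce exactly to the two bounds claimed in the corollary.

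For the redundancy step, I would again apply the chain rule:
\begin{equation}
I(X_1 U;Y_1)=I(U;Y_1)+I(X_1;Y_1\mid U),\qquad I(X_2 V;Y_2)=I(V;Y_2)+I(X_2;Y_2\mid V).
\end{equation}
Substituting into \eqref{eq:InnerDMCc=0-2} and \eqref{eq:InnerDMCc=0-4} gives
\begin{equation}
I(X_1 U;Y_1)-I(U;S_1)=I(X_1;Y_1\mid U)+\bigl[I(U;Y_1)-I(U;S_1)\bigr],
\end{equation}
\begin{equation}
I(X_2 V;Y_2)-I(V;US_1)=I(X_2;Y_2\mid V)+\bigl[I(V;Y_2)-I(V;US_1)\bigr].
\end{equation}
Under the two assumed side conditions, the bracketed terms are nonnegative, so \eqref{eq:InnerDMCc=0-2} is implied by $R_1\leq I(X_1;Y_1\mid U)$ and \eqref{eq:InnerDMCc=0-4} is implied by $R_2\leq I(X_2;Y_2\mid V)$. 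Hence the two bounds of the corollary suffice, completing the proof.

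There is essentially no technical obstacle here; the argument is purely algebraic and mimics the analogous reduction from Lemma \ref{th:InnerDMCSingle} to Corollary \ref{th:InnerDMCCor} carried out earlier in the paper. The only point requiring care is to verify that $X_1$ (respectively $X_2$) is independent of the auxiliary variable $U$ (respectively $V$), which follows immediately from the product form of the chosen joint distribution.
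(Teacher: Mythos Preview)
Your proposal is correct and takes essentially the same approach as the paper: the paper simply presents the corollary as ``a straight-forward subregion for the inner bound'' given in Proposition~\ref{th:InnerSamec=0}, without spelling out the chain-rule manipulations, so your argument is in fact more detailed than what the paper provides. The reduction you describe---using independence of $X_1,X_2$ from $(U,V)$ to simplify \eqref{eq:InnerDMCc=0-1} and \eqref{eq:InnerDMCc=0-3}, and using the two side conditions to render \eqref{eq:InnerDMCc=0-2} and \eqref{eq:InnerDMCc=0-4} redundant---is exactly the intended mechanism, mirroring the passage from Lemma~\ref{th:InnerDMCSingle} to Corollary~\ref{th:InnerDMCCor}.
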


Following from the above achievable region, we obtain an achievable region for the Gaussian channel by setting an appropriate joint input distribution.
\begin{proposition}\label{th:InnerGaussianc=0}
  For the Gaussian channel of model II with $W_0=\phi$, an inner bound on the capacity region for the regime when $Q_1\rightarrow \infty$ consists of rate pairs $(R_1,R_2)$ satisfying:
  \begin{subequations}
  \begin{flalign}
  R_1\leqslant & \frac{1}{2}\log\left(1+\frac{P_1}{(1-\frac{1}{\alpha})^2P_{00}+P_{01}+1}\right)\label{eq:InnerGaussianc=02-1}\\
  R_2\leqslant & \frac{1}{2}\log\left(1+\frac{P_2}{1+\frac{(\beta-1)^2P_{01}P_{00}}{P_{01}+\beta^2P_{00}}}\right)\label{eq:InnerGaussianc=02-2}
  \end{flalign}
  \end{subequations}
  where $P_{00}+P_{01}\leqslant P_0$, $P_{00}, P_{01}\geqslant 0$, $0\leqslant \alpha\leqslant \frac{2P_{00}}{1+P_0+P_1}$, and $P_{01}^2+2\beta P_{00}P_{01}\geqslant \beta^2P_{00}(P_{01}+P_2+1)$.
\end{proposition}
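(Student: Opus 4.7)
The plan is to apply Corollary \ref{th:InnerDMCc=0} with a jointly Gaussian specialization of the auxiliaries, mirroring the layered dirty-paper construction used in Proposition \ref{th:InnerGaussianSingle}. First I would split the helper's signal as $X_0 = X_0'' + X_0'''$, with $X_0'' \sim \cN(0, P_{00})$ devoted to pre-cancellation of the state at receiver 1 and $X_0''' \sim \cN(0, P_{01})$ serving as a ``cleanup'' layer targeted at receiver 2, where $P_{00} + P_{01} \leq P_0$ so the helper's power budget is met. Introduce the two auxiliaries
\begin{equation*}
 U = X_0'' + \alpha S_1, \qquad V = X_0''' + \beta X_0'',
\end{equation*}
and take $X_1\sim\cN(0,P_1)$ and $X_2\sim\cN(0,P_2)$ mutually independent and independent of $(S_1,X_0'',X_0''')$. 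This choice is consistent with the factorization $P_{S_1}P_{UVX_0|S_1}P_{X_1}P_{X_2}$ required by Corollary \ref{th:InnerDMCc=0}.

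For the $R_1$ bound I would use the identity $X_0'' + S_1 = U/\alpha + (1-1/\alpha) X_0''$, which gives $Y_1 - U/\alpha = (1-1/\alpha) X_0'' + X_0''' + X_1 + N_1$. In the limit $Q_1 \to \infty$ the correlation coefficient between $X_0''$ and $U$ vanishes on the scale $1/\sqrt{Q_1}$, so conditional on $U$ the variable $X_0''$ is asymptotically independent with variance $P_{00}$. The effective noise power seen by $X_1$ after $U$ has been decoded is therefore $(1-1/\alpha)^2 P_{00} + P_{01} + 1$, reproducing \eqref{eq:InnerGaussianc=02-1}. The $R_2$ bound is state-free: substituting $X_0''' = V - \beta X_0''$ into $Y_2$ gives $Y_2 = V + (1-\beta) X_0'' + X_2 + N_2$, and a direct MMSE calculation yields $\Var(X_0'' \mid V) = P_{00} P_{01}/(P_{01} + \beta^2 P_{00})$, so the effective noise variance seen by $X_2$ is $1 + (1-\beta)^2 P_{00} P_{01}/(P_{01} + \beta^2 P_{00})$, yielding \eqref{eq:InnerGaussianc=02-2}.

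The two decodability constraints of Corollary \ref{th:InnerDMCc=0} still need to be translated into the stated inequalities on $(\alpha,\beta,P_{00},P_{01})$. Since $V$ is independent of $S_1$ and $U - \alpha S_1 = X_0''$, one has $I(V;US_1) = I(V;X_0'')$; expanding the variance ratio $\Var(Y_2)/\Var(Y_2\mid V) \geq (P_{01}+\beta^2 P_{00})/P_{01}$ and simplifying then yields exactly $P_{01}^2 + 2\beta P_{00}P_{01} \geq \beta^2 P_{00}(P_{01} + P_2 + 1)$. The constraint $I(U;Y_1)\geq I(U;S_1)$ is more delicate in the high-$Q_1$ regime because both sides scale linearly in $Q_1$; after writing the inequality in variance-ratio form and keeping only the dominant $Q_1$ terms, a cancellation yields $P_{00}/[(1-1/\alpha)^2 P_{00} + P_{01} + P_1 + 1] \geq \alpha^2$, which rearranges to $\alpha \leq 2P_{00}/(P_0 + P_1 + 1)$.

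The main technical subtlety is this last limiting step, where one must verify that the $Q_1$-dependent cross-terms drop out cleanly and the surviving inequality is precisely the stated one. This is the same asymptotic bookkeeping already encountered in Proposition \ref{th:InnerGaussianSingle}, and once it is in hand the remaining steps are routine Gaussian variance manipulations.
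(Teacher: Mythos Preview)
Your proposal is correct and follows exactly the paper's approach: the paper also invokes Corollary~\ref{th:InnerDMCc=0} with the Gaussian choices $U=X_{00}+\alpha S_1$, $V=X_{01}+\beta X_{00}$, $X_0=X_{00}+X_{01}$ (your $X_0'',X_0'''$), and then leaves the mutual-information and constraint computations to the reader. Your write-up simply fills in those computations, and the asymptotic handling of the $I(U;Y_1)\geq I(U;S_1)$ constraint is the same bookkeeping already done for Proposition~\ref{th:InnerGaussianSingle}.
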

\begin{proof}
The achievability follows from Proposition \ref{th:InnerDMCc=0} by choosing jointly Gaussian distribution for random variables as follows:
  \begin{flalign}
  U&=X_{00}+\alpha S_1,\;V=X_{01}+\beta X_{00}\nn\\
  X_{00}&\sim \mathcal{N}(0,P_{00}),\;X_{01}\sim \mathcal{N}(0,P_{01})\nn\\
  X_{1}&\sim \mathcal{N}(0,P_{1}), \;X_{2}\sim \mathcal{N}(0,P_{2})\nn
\end{flalign}
where $X_{00}$, $X_{01}$, $X_{1}$, $X_{2}$ and $S_1$ are independent.
\end{proof}

\begin{figure}[hbt!]
        \centering
        \begin{subfigure}[b]{0.48\textwidth}
                \includegraphics[width=3.4in]{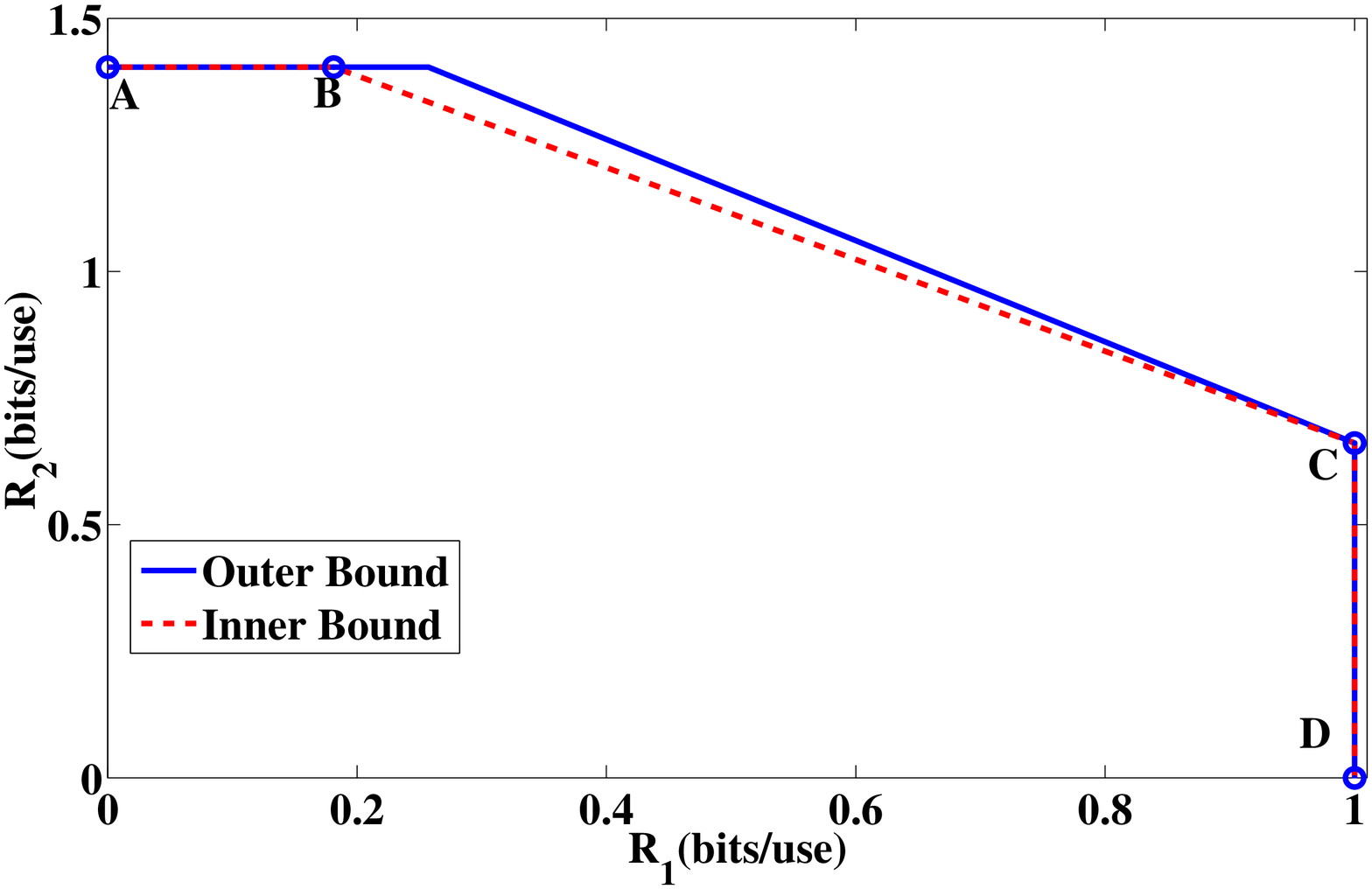}
                \caption{$P_1>P_0+1$}
                \label{fig:c=0_1}
        \end{subfigure}%
        \begin{subfigure}[b]{0.48\textwidth}
                \includegraphics[width=3.4in]{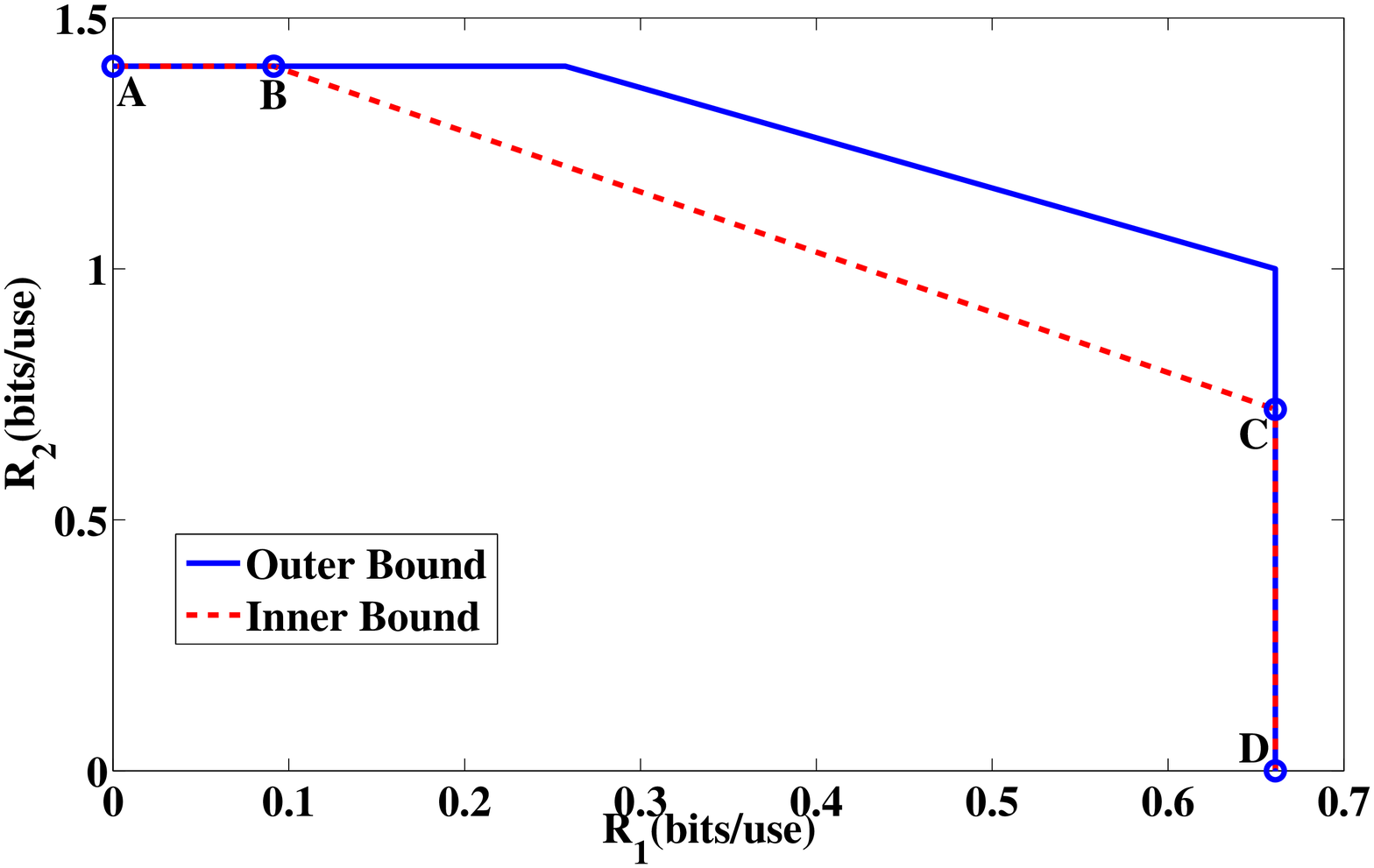}
                \caption{$P_1\leqslant P_0-1$}
                \label{fig:c=0_2}
        \end{subfigure}
        \caption{An illustration of the segments of the capacity boundary for the Gaussian channel of model II}\label{fig:c=0}
\end{figure}

Comparing the inner and outer bounds given in Propositions \ref{th:InnerGaussianc=0} and \ref{th:Outerc=0}, respectively, we characterize two segments of the boundary of the capacity region, over which the two bounds meet.
\begin{theorem}\label{th:capacityc=0}
  For the Gaussian channel of model II with $W_0=\phi$, in the regime with $Q_1\rightarrow \infty$, the line $A$-$B$ (see Figure \ref{fig:c=0}) is on the boundary of the capacity region. More specifically, if $\frac{1}{2}(1+P_0+P_1)\geqslant \frac{P_0^2}{P_0+P_2+1}$, the line $A$-$B$ is characterized as
  \small
  \begin{subequations}
    \begin{flalign}
      Point\; A:&\left(0,\frac{1}{2}\log(1+P_2)\right)\nn\\
      Point\; B:&\left(\frac{1}{2}\log \Big(1+ \frac{4P_1P_0^2}{(1+P_0+P_1)^2(1+P_0+P_2)-4P_1P_0^2}\Big),\frac{1}{2}\log(1+P_2)\right)\nn
    \end{flalign}
  \end{subequations}
  \normalsize
If $\frac{1}{2}(1+P_0+P_1) < \frac{P_0^2}{P_0+P_2+1}$, the line $A$-$B$ is characterized as
\small
  \begin{subequations}
    \begin{flalign}
      Point\; A:&\left(0,\frac{1}{2}\log(1+P_2)\right)\nn\\
      Point\; B:&\left(\frac{1}{2}\log \Big(1+\frac{P_1(P_0+P_2+1)}{P_0+(P_0+1)(P_2+1)}\Big),\frac{1}{2}\log(1+P_2)\right)\nn
    \end{flalign}
  \end{subequations}
  \normalsize
  Furthermore, the line $C$-$D$ (see Figure \ref{fig:c=0}) is also on the boundary of the capacity region. If $P_1> P_0+1$, the line $C$-$D$ is characterized as
  \small
  \begin{subequations}
    \begin{flalign}
      Point\; C:&\left(\frac{1}{2}\log(1+P_0),\frac{1}{2}\log\Big(1+\frac{P_2}{P_0+1}\Big)\right)\nn\\
      Point\; D:&\left(\frac{1}{2}\log(1+P_0),0\right)\nn
    \end{flalign}
  \end{subequations}
  \normalsize
  as illustrated in Figure \ref{fig:c=0_1}.

  If $P_1\leqslant P_0-1$, the line $C$-$D$ is characterized as
    \small
  \begin{subequations}
    \begin{flalign}
      Point\; C:&\left(\frac{1}{2}\log(1+P_1),\frac{1}{2}\log\Big(1+\frac{P_2}{P_1+2}\Big)\right)\nn\\
      Point\; D:&\left(\frac{1}{2}\log(1+P_1),0\right)\nn
    \end{flalign}
  \end{subequations}
  \normalsize
 as illustrated in Figure \ref{fig:c=0_2}.
\end{theorem}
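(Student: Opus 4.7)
The plan is to combine the Gaussian inner bound of Proposition~\ref{th:InnerGaussianc=0} with the outer bound of Proposition~\ref{th:Outerc=0}, and show that they coincide along the two claimed line segments. Since the outer bound forces $R_2\leqslant \frac{1}{2}\log(1+P_2)$ and $R_1\leqslant \min\{\frac{1}{2}\log(1+P_0),\frac{1}{2}\log(1+P_1)\}$, the horizontal line through A--B and the vertical line through C--D are unachievable to exceed. It therefore suffices to exhibit explicit parameter choices in Proposition~\ref{th:InnerGaussianc=0} that achieve the four named endpoints, and then to invoke time-sharing to fill in the connecting segments.

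For segment A--B I would fix $\beta=1$ in the inner bound. This choice makes the auxiliary variable $V=X_{01}+X_{00}$ exactly equal to the helper's transmitted signal $X_0$, so that decoding $V$ at receiver~$2$ fully cancels the helper's interference and delivers $R_2=\frac{1}{2}\log(1+P_2)$. Point A is then trivial by setting $R_1=0$ (transmitter~$1$ silent). For point B, I would optimize $R_1$ over the power split $(P_{00},P_{01})$ with $P_{00}+P_{01}\leqslant P_0$ and over $\alpha\in[0,\tfrac{2P_{00}}{1+P_0+P_1}]$, subject to the $\beta=1$ form of the $V$-decoding constraint, namely $P_{01}^2+P_{00}P_{01}\geqslant P_{00}(P_2+1)$. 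The two forms of point B in the theorem correspond exactly to which of the two constraints (the upper bound on $\alpha$ versus the $V$-decoding constraint) is binding at the optimum; the threshold $\tfrac{1}{2}(1+P_0+P_1)\lessgtr \tfrac{P_0^2}{P_0+P_2+1}$ arises naturally when equating these two constraints at the optimal operating point.

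For segment C--D I would argue case-by-case based on which of the two outer-bound terms on $R_1$ is smaller. If $P_1>P_0+1$, set $P_{00}=P_0$, $P_{01}=0$, and $\alpha=\tfrac{2P_0}{1+P_0+P_1}$; the inner bound then reduces to the single-user model~I scheme and yields $R_1=\frac{1}{2}\log(1+P_0)$ (matching Theorem~\ref{th:capacity-1}). Receiver~$2$ decodes its own signal treating the helper's full-power $X_0$ as Gaussian noise, achieving $R_2=\frac{1}{2}\log\bigl(1+\tfrac{P_2}{P_0+1}\bigr)$ for point C and $R_2=0$ for point D. If instead $P_1\leqslant P_0-1$, I set $P_{00}=P_1+1$ with $\alpha=1$ (so the helper completely eliminates both state and self-interference for receiver~$1$, mirroring Case~3 of Section~\ref{sec:CapacitySingle}), yielding $R_1=\frac{1}{2}\log(1+P_1)$. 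Receiver~$2$ then sees only the residual helper component of power $P_0-P_1-1$ plus noise, which with appropriate $\beta$ gives the claimed $R_2=\frac{1}{2}\log\bigl(1+\tfrac{P_2}{P_1+2}\bigr)$ at point C and $R_2=0$ at point D.

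The main obstacle will be the parameter optimization yielding point B. The objective $R_1=\frac{1}{2}\log\bigl(1+\frac{P_1}{(1-1/\alpha)^2 P_{00}+P_{01}+1}\bigr)$ is not jointly concave in $(P_{00},P_{01},\alpha)$, and the active-constraint analysis determines which of the two expressions for point B appears. Once the optimal parameters are identified in each regime, verifying that the resulting endpoint matches the outer bound reduces to algebraic manipulation; matching on the interior of each segment then follows by time-sharing the two endpoints, which lies inside the convex closure of the achievable region.
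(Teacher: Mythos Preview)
Your treatment of segment $A$--$B$ is essentially the paper's proof: set $\beta=1$ so that $V=X_0$ and receiver~2 attains $\frac12\log(1+P_2)$, then push $P_{00}$ up to the constraint $P_{00}\leqslant P_0^2/(P_0+P_2+1)$ and pick $\alpha$ accordingly. The threshold you identify is the right one.

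Your $C$--$D$ analysis has a genuine gap in both subcases. When $P_1>P_0+1$, substituting your parameters $P_{00}=P_0$, $P_{01}=0$, $\alpha=\tfrac{2P_0}{1+P_0+P_1}$ into \eqref{eq:InnerGaussianc=02-1} gives
\[
R_1\leqslant \tfrac12\log\Bigl(1+\tfrac{4P_0P_1}{4P_0+(P_1+1-P_0)^2}\Bigr),
\]
which equals $\tfrac12\log(1+P_0)$ only at $P_1=P_0+1$, not for $P_1>P_0+1$. The missing ingredient (present in the proof of Theorem~\ref{th:capacity-1} you cite) is that transmitter~1 must \emph{reduce} its power to $\tilde P_1=P_0+1$; excess power at transmitter~1 only hurts receiver~1's ability to decode $U$. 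With $\tilde P_1=P_0+1$ one gets $\alpha=\tfrac{P_0}{P_0+1}$ and the claimed $R_1$.

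When $P_1\leqslant P_0-1$, your ``residual helper component of power $P_0-P_1-1$'' is not what receiver~2 sees. If you keep $P_{01}>0$ then $\alpha=1$ no longer gives $R_1=\tfrac12\log(1+P_1)$, since $P_{01}$ sits in the denominator of \eqref{eq:InnerGaussianc=02-1}. The correct choice is $P_{01}=0$ and $P_{00}=P_1+1$: the helper deliberately uses only power $P_1+1$ (not $P_0$), so receiver~2's interference has power $P_1+1$, yielding $R_2=\tfrac12\log\bigl(1+\tfrac{P_2}{P_1+2}\bigr)$ by treating $X_0$ as noise ($\beta=0$). No clever $\beta$ is needed here; the point is that the helper throttles its own power.
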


\begin{proof}[Proof of Theorem \ref{th:capacityc=0}]
We first show that the line $A$-$B$ is achievable. The point A is achievable by keeping the helper silent. To show that the point B is achievable, we set $\beta=1$ in Proposition \ref{th:InnerGaussianc=0}, and hence the achievable rate $R_2$ in \eqref{eq:InnerGaussianc=02-2} reaches the point-to-point channel capacity, and the condition $P_{01}^2+2\beta P_{00}P_{01}\geqslant \beta^2P_{00}(P_{01}+P_2+1)$ becomes $P_{00}\leqslant \frac{P_0^2}{P_0+P_2+1}$.

  When $\frac{1}{2}(1+P_0+P_1)\geqslant \frac{P_0^2}{P_0+P_2+1}$, we have $\alpha \leqslant \frac{2P_{00}}{1+P_0+P_1} \leqslant \frac{2P_0^2}{(1+P_0+P_1)(P_0+P_2+1)} \leqslant 1$. Thus, setting $\alpha = \frac{2P_{00}}{1+P_0+P_1}$ and $P_{00}=\frac{P_0^2}{P_0+P_2+1}$, the point $\left(\frac{1}{2}\log (\frac{(1+P_0+P_1)^2(1+P_0+P_2)}{(1+P_0+P_1)^2(1+P_0+P_2)-4P_0^2}),\,\frac{1}{2}\log(1+P_2)\right)$ is achieved.

  When $\frac{1}{2}(1+P_0+P_1)\leqslant \frac{P_0^2}{P_0+P_2+1}$, we have $\alpha \leqslant \frac{2P_{00}}{1+P_0+P_1} \leqslant \frac{2P_0^2}{(1+P_0+P_1)(P_0+P_2+1)} \geqslant 1$. By setting $\alpha =1 $ and $P_{00}=\frac{P_0^2}{P_0+P_2+1}$, the point $(\frac{1}{2}\log \left(1+\frac{P_1(P_0+P_2+1)}{P_0+(P_0+1)(P_2+1)}),\,\frac{1}{2}\log(1+P_2)\right)$ is achieved.

  The line $C$-$D$ on the capacity boundary corresponds with the case when the helper is dedicated to assist receiver 1, and receiver 2 treats $X_0$ as noise, i.e., $\beta=0$ in Proposition \ref{th:InnerGaussianc=0}.

  According to the result in Theorem \ref{th:capacity-1} and Theorem \ref{th:capacity-2}, the only possible cases when the achievable rate for $R_1$ matches the outer bound is when $P_1\leqslant P_0-1$ and $P_1\geqslant P_0+1$.

  When $P_1\leqslant P_0-1$, by setting $\alpha=1$ and $P_{00}=\tilde{P}_0=P_1+1$, the rate point $\left(\frac{1}{2}\log(1+P_1),\frac{1}{2}\log(1+\frac{P_2}{P_1+2})\right)$ is achieved, which is point $C$ in Figure \ref{fig:c=0} (a). Particularly, the actual power used for the helper is $P_1+1$ rather than $P_0$, because larger $P_0$ does not better help decoding $R_1$ and increase the interference to receiver 2. It is obvious that point $D$ is achievable and points $C$ and $D$ are also on the outer bound. Hence, the points on the line $C$-$D$ is on the capacity boundary.

  When $P_1\geqslant P_0+1$, by setting the actual power of transmitter 1 as $\tilde{P}_1=P_0+1$, $P_{00}=P_0$ and $\alpha=\frac{P_0}{1+P_0}$, the rate point $\left(\frac{1}{2}\log(1+P_0),\frac{1}{2}\log(1+\frac{P_2}{1+P_0})\right)$ is achieved, which is point C in Figure \ref{fig:c=0} (b). This point also achieves the sum capacity for the channel in model II. It is obvious that point $D$ is achievable and points $C$ and $D$ are also on the outer bound. Hence, the points on the line $C$-$D$ is on the capacity boundary.
\end{proof}

The capacity result for the line $A$-$B$ in Theorem \ref{th:capacityc=0} indicates that our coding scheme effectively enables the helper to assist receiver 1 without causing interference to receiver 2. Hence, $R_2$ achieves the corresponding point-to-point channel capacity, while transmitter 1 and receiver 1 communicate at a certain rate $R_1$ with the assistance of the helper.

The capacity result for the line $C$-$D$ in Theorem \ref{th:capacityc=0} can be achieved based on a scheme, in which the helper assists receiver 1 to deal with the state and receiver 2 treats the helper's signal as noise. Such a scheme is guaranteed to be the best by the outer bound if receiver 1's rate is maximized.

Theorem \ref{th:capacityc=0} implies that if $P_1\geqslant P_0+1$, the sum capacity is achieved by the point $C$  as illustrated in Figure \ref{fig:c=0_1}.

\begin{corollary}\label{th:capacityc=0_2}
  For the Gaussian channel of model II with $W_0=\phi$, in the regime with $Q_1\rightarrow \infty$, if $P_1\geqslant P_0+1$, the sum capacity is given by $\frac{1}{2}\log(1+P_0+P_2)$.
\end{corollary}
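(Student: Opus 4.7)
The plan is to obtain the sum capacity by matching an outer bound from Proposition~\ref{th:Outerc=0} with an achievable point already identified in Theorem~\ref{th:capacityc=0}. Since the converse side is already available, the only thing to verify is that one of the achievable corner points on the boundary yields the sum rate $\frac{1}{2}\log(1+P_0+P_2)$.

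First, I would invoke Proposition~\ref{th:Outerc=0}, inequality \eqref{eq:OuterGaussianc=0-3}, which gives the outer bound
\begin{equation*}
R_1 + R_2 \leqslant \tfrac{1}{2}\log(1+P_0+P_2).
\end{equation*}
This holds for all parameter regimes and does not use the hypothesis $P_1\geqslant P_0+1$; so the hard direction is achievability.

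Next, I would invoke the second half of Theorem~\ref{th:capacityc=0}, specifically the case $P_1 \geqslant P_0+1$, which places the point
\begin{equation*}
C=\left(\tfrac{1}{2}\log(1+P_0),\;\tfrac{1}{2}\log\!\Big(1+\tfrac{P_2}{P_0+1}\Big)\right)
\end{equation*}
on the capacity boundary. A short algebraic check then shows that the sum of its coordinates equals
\begin{equation*}
\tfrac{1}{2}\log\!\left((1+P_0)\cdot\tfrac{1+P_0+P_2}{1+P_0}\right)=\tfrac{1}{2}\log(1+P_0+P_2),
\end{equation*}
matching the outer bound. Combining the two, $R_1+R_2=\tfrac{1}{2}\log(1+P_0+P_2)$ is both achievable and tight, yielding the sum capacity.

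There is no real obstacle here: the corollary is a direct consequence of Proposition~\ref{th:Outerc=0} and the $P_1\geqslant P_0+1$ branch of Theorem~\ref{th:capacityc=0}, together with the one-line identity above. The only care needed is to note that at point $C$, receiver~2 decodes its message while treating the helper's signal (used to cancel state at receiver~1) as noise, which is exactly the scheme certified by the theorem; no new coding construction is required.
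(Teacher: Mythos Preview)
Your proposal is correct and mirrors the paper's own reasoning: the paper states that the corollary follows because point $C$ in Theorem~\ref{th:capacityc=0} (for the case $P_1\geqslant P_0+1$) achieves the sum-rate outer bound~\eqref{eq:OuterGaussianc=0-3}, which is exactly the argument you give, including the one-line log identity.
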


\subsection{Extension: $W_0\neq\phi$}
In this subsection, we study the scenario, in which the helper also has its own message to transmit in addition to assisting the state-corrupted receivers, i.e., $W_0\neq\phi$. The results we present below extend those in the preceding subsection for the scenario with $W_0=\phi$. The proof techniques are similar and hence are omitted.

We first provide an outer bound for the Gaussian channel, which is generalization of Propositions \ref{th:OuterGaussianSingle} and \ref{th:Outerc=0}.
\begin{proposition}
  For the Gaussian channel of model II, an outer bound on the capacity region for the regime when $Q_1\rightarrow \infty$ consists of rate pairs $(R_1,R_2)$ satisfying:
  \begin{subequations}
    \begin{flalign}
    R_0\leqslant & \frac{1}{2}\log(1+P_0)\label{eq:OuterGaussianc=0W0-0}\\
  R_1\leqslant & \min\left\{\frac{1}{2}\log(1+P_0),\frac{1}{2}\log(1+P_1)\right\}\label{eq:OuterGaussianc=0W0-1} \\
  R_2\leqslant & \frac{1}{2}\log(1+P_2)\label{eq:OuterGaussianc=0W0-2}\\
    R_0+R_1+R_2\leqslant & \frac{1}{2}\log(1+P_0+P_2)\label{eq:OuterGaussianc=0W0-3}
  \end{flalign}
  \end{subequations}
\end{proposition}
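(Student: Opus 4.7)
The plan is to establish the four inequalities separately, extending the genie-aided arguments used in Propositions \ref{th:OuterGaussianSingle} and \ref{th:Outerc=0}. The three individual rate bounds follow from standard reductions to point-to-point channels: $R_0 \leqslant \frac{1}{2}\log(1+P_0)$ is immediate from the scalar AWGN link $Y_0=X_0+N_0$; $R_2 \leqslant \frac{1}{2}\log(1+P_2)$ follows by revealing $X_0^n$ as side information to receiver 2, which is legitimate because $X_0^n$ is independent of $W_2$ and leaves the residual channel $Y_2-X_0^n=X_2^n+N_2^n$; and $R_1 \leqslant \frac{1}{2}\log(1+P_1)$ follows by revealing $(X_0^n,S_1^n)$ to receiver 1, reducing its channel to $Y_1-X_0^n-S_1^n=X_1^n+N_1^n$. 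The remaining bound $R_1 \leqslant \frac{1}{2}\log(1+P_0)$ will follow by applying Proposition \ref{th:OuterGaussianSingle} to the sub-network consisting of the helper and receivers 0 and 1, which is exactly model I; the presence of receiver/transmitter 2 in the extended setting can only further constrain $R_1$.

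The essential new inequality is the sum-rate bound $R_0+R_1+R_2 \leqslant \frac{1}{2}\log(1+P_0+P_2)$. I plan to extend the argument from Proposition \ref{th:Outerc=0}, which established the analogous bound $R_1+R_2 \leqslant \frac{1}{2}\log(1+P_0+P_2)$ in the $W_0=\phi$ case. The guiding intuition is that receiver 2's channel $Y_2=X_0+X_2+N_2$ is a Gaussian MAC from $(X_0,X_2)$ with sum-rate capacity exactly $\frac{1}{2}\log(1+P_0+P_2)$; since $X_0$ simultaneously carries $W_0$ explicitly and enables $W_1$ implicitly through its state-cancellation action, the combined helper rate $R_0+R_1$ must compete with $R_2$ for the same MAC resource. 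Starting from Fano's inequality I would write $n(R_0+R_1+R_2) \leqslant I(W_0;Y_0^n)+I(W_1;Y_1^n)+I(W_2;Y_2^n)+n\epsilon_n$, handle the first two terms via the model I bounding technique (where the infinite-state limit forces $R_1$'s contribution to be absorbed into the helper's $X_0$ budget), and fuse them with the $R_2$ term through a joint mutual-information bound at receiver 2, taking $Q_1 \to \infty$ to eliminate state-dependent residuals.

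The hardest step will be this final fusion. A naive combination of $R_0+R_1 \leqslant \frac{1}{2}\log(1+P_0)$ with $R_2 \leqslant \frac{1}{2}\log(1+P_2)$ yields only the loose $\frac{1}{2}\log((1+P_0)(1+P_2))$, whereas the target $\frac{1}{2}\log(1+P_0+P_2)$ is strictly tighter. Obtaining the MAC-type sum-rate bound appears to require a genuinely joint argument, most likely a Sato-type correlated-noise construction across $(N_0,N_1,N_2)$, or alternatively a chain-rule decomposition treating $Y_2$ as a virtual super-receiver that sees the helper's effective stream (pooling $W_0$ with the state-cancellation signal that conveys $R_1$) additively combined with $X_2$. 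Verifying that the infinite-state limit collapses the $P_1$-dependent and $Q_1$-dependent terms and leaves exactly $\frac{1}{2}\log(1+P_0+P_2)$ is the core technical difficulty I anticipate.
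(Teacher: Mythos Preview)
Your proposal is correct and follows the same route the paper indicates (the paper omits the details, merely stating that the bound is a ``generalization of Propositions \ref{th:OuterGaussianSingle} and \ref{th:Outerc=0}'' with ``similar'' proof techniques). In particular, your Sato-type same-marginals idea across $(N_0,N_1,N_2)$ is exactly what is needed: with $N_0=N_1=N_2$ the positive term $h(Y_0^n)=h(X_0^n+N_0^n)$ cancels the negative $h(X_0^n+N_2^n)$ from the $R_2$ bound, after which the Model~I joint-entropy step on $h(Y_0^n\mid W_0)+h(X_0^n+S_1^n+N_1^n)$ and the limit $Q_1\to\infty$ yield precisely $\frac{1}{2}\log(1+P_0+P_2)$.
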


By following similar steps as in Proposition \ref{th:InnerSamec=0} and Corollary \ref{th:InnerDMCc=0}, we derive an achievable region for the discrete memoryless model II as follows.

\begin{proposition}\label{th:InnerDMCc=0W0}
  For the discrete memoryless model II, an achievable region consists of the rate tuple $(R_0, R_1,R_2)$ satisfying
    \begin{subequations}
      \begin{flalign}
                  R_{0} &\leqslant I(X_{00};Y_0)\label{eq:InnerDMCc=0W0-0}\\
                  R_{1} &\leqslant I(X_1;Y_1|U)\label{eq:InnerDMCcW0=0-1}\\
                  R_{2} &\leqslant I(X_2;Y_2|V)\label{eq:InnerDMCcW0=0-3}
    \end{flalign}
    \end{subequations}
    for some distribution \small{$P_{S_1UVX_0X_1X_2}=P_{X_{00}}P_{S_1}P_{UVX_0|S_1X_{00}}P_{X_1}P_{X_2}$}, where $U$ and $V$ are auxiliary random variables that satisfy
\begin{subequations}
  \begin{flalign}
  &I(U;Y_1)\geqslant I(S_1X_{00}; U)\\
  &I(V;Y_2)\geqslant I(V;US_1X_{00})
\end{flalign}
\end{subequations}
\end{proposition}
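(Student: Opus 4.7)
The plan is to build a three-layer superposition coding scheme at the helper that extends the two-layer scheme used to prove Proposition \ref{th:InnerSamec=0}/Corollary \ref{th:InnerDMCc=0}, by superposing a fresh outer codeword $X_{00}$ carrying the helper's own message $W_0$ on top of the two Gelfand--Pinsker auxiliaries $U$ and $V$. Structurally, $X_{00}$ is treated as ``effective state'' (together with $S_1$) when forming $U$, and the pair $(X_{00},S_1,U)$ is treated as effective state when forming $V$. The transmitters $1$ and $2$ then send i.i.d.\ codewords $X_1$, $X_2$ independent of everything else, and the helper's transmitted symbol $X_0$ is drawn according to $P_{X_0|UVS_1X_{00}}$.

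\textbf{Codebook generation and encoding.} First draw $2^{nR_0}$ i.i.d.\ sequences $x_{00}^n(w_0)\sim \prod P_{X_{00}}$. Then, conditioned on each $x_{00}^n$, generate a single-bin $U$-codebook of size about $2^{n(I(U;S_1X_{00})+\epsilon)}$ i.i.d.\ sequences drawn from $\prod P_{U|X_{00}}$; by the covering lemma, given $(w_0,s_1^n)$ the helper can find, with high probability, some $u^n$ jointly typical with $(s_1^n,x_{00}^n(w_0))$. Next, conditioned on each such $(x_{00}^n,u^n)$, draw a single-bin $V$-codebook of size about $2^{n(I(V;US_1X_{00})+\epsilon)}$ i.i.d.\ from $\prod P_{V|UX_{00}}$, and pick $v^n$ jointly typical with $(u^n,s_1^n,x_{00}^n)$; covering succeeds with high probability. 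Finally generate $x_0^n$ symbol-by-symbol from $P_{X_0|UVS_1X_{00}}$. Transmitters $1,2$ use ordinary i.i.d.\ codebooks of sizes $2^{nR_1}$, $2^{nR_2}$.

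\textbf{Decoding and error analysis.} Receiver 0 looks for a unique $x_{00}^n$ jointly typical with $y_0^n$; reliability requires $R_0<I(X_{00};Y_0)$, giving \eqref{eq:InnerDMCc=0W0-0}. Receiver 1 decodes in two stages: (i) find the unique $u^n$ in the entire $U$-codebook jointly typical with $y_1^n$ (treating $X_1$ as noise), which succeeds provided the packing condition $I(U;Y_1)\geq I(U;S_1X_{00})$ holds; (ii) decode $w_1$ by joint typicality of $x_1^n$ with $(y_1^n,u^n)$, giving the bound $R_1<I(X_1;Y_1|U)$ in \eqref{eq:InnerDMCcW0=0-1}. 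Receiver 2 proceeds symmetrically: decode $v^n$ from $y_2^n$ (treating $X_2$ as noise), which succeeds under $I(V;Y_2)\geq I(V;US_1X_{00})$, and then decode $w_2$, yielding $R_2<I(X_2;Y_2|V)$ in \eqref{eq:InnerDMCcW0=0-3}. Standard union-bound arguments make all error events vanish as $n\to\infty$.

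\textbf{Main obstacle.} The only nontrivial point, compared with Corollary \ref{th:InnerDMCc=0}, is verifying that the two nested covering steps still succeed when the ``effective state'' seen by $U$ is the random pair $(S_1,X_{00})$ rather than $S_1$ alone, and correspondingly for $V$. This is handled by invoking the covering lemma with respect to the joint distributions $P_{S_1X_{00}}$ and $P_{US_1X_{00}}$ induced by the codebook construction; because $X_{00}$ is generated i.i.d.\ and independent of $S_1$, and because $U$ is generated i.i.d.\ conditional on $X_{00}$, the relevant joint typicality statements are immediate extensions of those used in Appendix \ref{apx:InnerSamec=0}, and the bin-size requirements match exactly the conditions stated in the proposition. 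The rest of the argument is a routine adaptation.
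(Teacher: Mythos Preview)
Your high-level plan is right---superpose a message-carrying layer $X_{00}$ on top of the two Gelfand--Pinsker auxiliaries $U,V$---but the way you build the $U$- and $V$-codebooks does not yield the constraints in the proposition.

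You generate, \emph{for each} $x_{00}^n(w_0)$, a separate $U$-sub-codebook drawn from $\prod P_{U|X_{00}}$. Receiver~1, however, does not know $w_0$, so in your step~(i) it must search the union of all $2^{nR_0}$ sub-codebooks, i.e.\ over $2^{n(R_0+\tilde R_1)}$ candidate sequences. For any $u^n(w_0',t')$ with $w_0'\neq w_0$ the sequence is (marginally) $\sim P_U$ and independent of $y_1^n$, so the packing condition you actually need is
\[
R_0+\tilde R_1 \;<\; I(U;Y_1),
\]
not $\tilde R_1<I(U;Y_1)$. With your stated bin size $\tilde R_1=I(U;S_1X_{00})+\epsilon$, this becomes $R_0+I(U;S_1X_{00})<I(U;Y_1)$, which is strictly stronger than the proposition's condition whenever $R_0>0$. (There is also an internal inconsistency: if $U$ is drawn conditionally on $X_{00}$, covering only requires $\tilde R_1\geq I(U;S_1\,|\,X_{00})$, not $I(U;S_1X_{00})$; the latter figure is the covering rate for a \emph{marginal} codebook.) The same issue recurs for $V$ at receiver~2.

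The paper's approach avoids this by \emph{not} superposing $U$ on $X_{00}$: exactly as in the proof of Lemma~\ref{th:InnerDMCSingle}, one generates a \emph{single} $U$-codebook of size $2^{n\tilde R_1}$ i.i.d.\ from the marginal $P_U$, and a single $V$-codebook i.i.d.\ from $P_V$. Given $(w_0,s_1^n)$, the helper selects $u^n$ jointly typical with $(s_1^n,x_{00}^n(w_0))$, which now genuinely requires $\tilde R_1\geq I(U;S_1X_{00})$, and then $v^n$ jointly typical with $(u^n,s_1^n,x_{00}^n(w_0))$, requiring $\tilde R_2\geq I(V;US_1X_{00})$. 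Because the $U$-list has size $2^{n\tilde R_1}$ regardless of $w_0$, receiver~1's packing condition is simply $\tilde R_1\leq I(U;Y_1)$, and similarly $\tilde R_2\leq I(V;Y_2)$ at receiver~2---giving exactly the two inequalities in the statement. Your proof goes through once you replace the conditional (superposition) codebooks for $U$ and $V$ by these marginal ones.
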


Following from the above achievable region, we obtain an achievable region for the Gaussian channel by setting an appropriate joint distribution.

\begin{proposition}\label{th:InnerGaussianc=0W0}
  For Gaussian channel of model II, an inner bound on the capacity region for the regime when $Q_1\rightarrow \infty$ consists of rate tuples $(R_0, R_1,R_2)$ satisfying:
  \begin{subequations}
  \begin{flalign}
  R_0\leqslant & \frac{1}{2}\log\left(1+\frac{P_{00}}{P_{01}+P_{02}+1}\right)\label{eq:InnerGaussianc=0W02-0}\\
  R_1\leqslant & \frac{1}{2}\log\left(1+\frac{P_1}{(1-\frac{1}{\alpha})^2P_{01}+P_{02}+1}\right)\label{eq:InnerGaussianc=0W02-1}\\
  R_2\leqslant & \frac{1}{2}\log\left(1+\frac{P_2}{1+\frac{(\beta-1)^2P_{02}(P_{00}+P_{01})}{P_{02}+\beta^2(P_{00}+P_{01})}}\right)\label{eq:InnerGaussianc=0W02-2}
  \end{flalign}
  \end{subequations}
  where $P_{00}+P_{01}+P_{02}\leqslant P_0$, $P_{00}, P_{01}, P_{02}\geqslant 0$, $0\leqslant \alpha\leqslant \frac{2P_{01}}{1+P_{01}+P_{02}+P_1}$, and $P_{02}^2+2\beta P_{02}(P_{01}+P_{00})\geqslant \beta^2(P_{00}+P_{01})(P_{02}+P_2+1)$.
\end{proposition}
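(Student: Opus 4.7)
The plan is to deduce Proposition \ref{th:InnerGaussianc=0W0} from the discrete-memoryless achievability in Proposition \ref{th:InnerDMCc=0W0} by specializing to a jointly Gaussian input ensemble, in the same spirit as the derivation of Proposition \ref{th:InnerGaussianc=0} from Corollary \ref{th:InnerDMCc=0}. Specifically, I would split the helper's codeword into three independent Gaussian components $X_0=X_{00}+X_{01}+X_{02}$ with $X_{00}\sim\mathcal{N}(0,P_{00})$, $X_{01}\sim\mathcal{N}(0,P_{01})$, $X_{02}\sim\mathcal{N}(0,P_{02})$ and $P_{00}+P_{01}+P_{02}\leq P_0$, and choose $X_1\sim\mathcal{N}(0,P_1)$ and $X_2\sim\mathcal{N}(0,P_2)$ independent of everything else and of $S_1$. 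Mirroring the single-layer dirty-paper choice of Proposition \ref{th:InnerGaussianSingle} and the precancellation choice of Proposition \ref{th:InnerGaussianc=0}, I would then set
\begin{equation*}
U = X_{01}+\alpha(S_1+X_{00}),\qquad V = X_{02}+\beta(X_{00}+X_{01}),
\end{equation*}
so that $X_{00}$ carries the helper's own message $W_0$, the $U$-layer dirty-paper codes against the combined interference $S_1+X_{00}$ seen by receiver~1, and the $V$-layer precancels the helper interference $X_{00}+X_{01}$ seen by receiver~2.

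The three rate bounds then reduce to direct Gaussian covariance calculations. Bound \eqref{eq:InnerDMCc=0W0-0} immediately gives \eqref{eq:InnerGaussianc=0W02-0} by treating $X_{01}+X_{02}$ at receiver~0 as Gaussian noise of variance $P_{01}+P_{02}$. For \eqref{eq:InnerDMCcW0=0-1}, I would rewrite $Y_1=X_1+\tfrac{1}{\alpha}U+(1-\tfrac{1}{\alpha})X_{01}+X_{02}+N_1$, so that in the limit $Q_1\to\infty$ the conditional variance $\Var(S_1+X_{00}\,|\,U)$ vanishes and the effective noise after MMSE-subtracting $U$ converges to $(1-1/\alpha)^2 P_{01}+P_{02}+1$, producing \eqref{eq:InnerGaussianc=0W02-1}. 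For \eqref{eq:InnerDMCcW0=0-3}, writing $T=X_{00}+X_{01}$ and applying a standard Gaussian MMSE to estimate the interference $T+X_{02}$ at receiver~2 from $V$ yields a conditional variance equal to $(1-\beta)^2 P_{02}(P_{00}+P_{01})/[P_{02}+\beta^2(P_{00}+P_{01})]$, producing \eqref{eq:InnerGaussianc=0W02-2}.

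It remains to translate the two auxiliary-codebook feasibility conditions into parameter constraints on $\alpha$ and $\beta$. For the $U$-condition $I(U;Y_1)\geq I(U;S_1 X_{00})$, both terms diverge like $\tfrac{1}{2}\log Q_1$; writing the difference as $h(X_{01})-h(U\,|\,Y_1)$ and carefully tracking the $O(Q_1)$ contributions in $\Var(U)$, $\Var(Y_1)$ and $\mathrm{Cov}(U,Y_1)$ (as in the proof of Proposition \ref{th:InnerGaussianSingle}) yields the finite limit $\tfrac{1}{2}\log\bigl(P_{01}/[P_{01}(1-\alpha)^2+\alpha^2(P_{02}+P_1+1)]\bigr)$, whose nonnegativity is equivalent to the stated bound $\alpha\leq 2P_{01}/(1+P_{01}+P_{02}+P_1)$. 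For the $V$-condition $I(V;Y_2)\geq I(V;US_1 X_{00})$ no limit is needed because receiver~2 is state-free; since $(U,S_1,X_{00})$ determines $X_{01}$ via $X_{01}=U-\alpha(S_1+X_{00})$, one has $h(V\,|\,US_1 X_{00})=h(X_{02})$, and a direct comparison with $h(V\,|\,Y_2)$ reduces after algebraic cancellation to the finite quadratic inequality $P_{02}^2+2\beta P_{02}(P_{00}+P_{01})\geq \beta^2(P_{00}+P_{01})(P_{02}+P_2+1)$ stated in the proposition.

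The main obstacle is the $U$-side verification in the high-state limit: one has to organize the expansion so that the diverging $O(Q_1)$ terms cancel exactly, leaving a clean finite quadratic in $\alpha$ that collapses to the stated linear feasibility bound. Everything else is routine Gaussian MMSE algebra parallel to the $W_0=\phi$ case already handled in Proposition \ref{th:InnerGaussianc=0}.
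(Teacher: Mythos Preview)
Your proposal is correct and follows exactly the paper's approach: the paper also specializes Proposition~\ref{th:InnerDMCc=0W0} with the same Gaussian choices $X_0=X_{00}+X_{01}+X_{02}$, $U=X_{01}+\alpha(S_1+X_{00})$, $V=X_{02}+\beta(X_{00}+X_{01})$, and independent $X_1,X_2$. Your write-up is in fact more detailed than the paper's, which merely states the auxiliary choices and leaves the mutual-information and feasibility computations implicit.
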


\begin{proof}
The achievability follows from Proposition \ref{th:InnerDMCc=0W0} by choosing jointly Gaussian distribution for random variables as follows:
  \begin{flalign}
  U&=X_{01}+\alpha (S_1+X_{00}),\;V=X_{02}+\beta (X_{00}+X_{01})\nn\\
  X_{00}&\sim \mathcal{N}(0,P_{00}),\;X_{01}\sim \mathcal{N}(0,P_{01}),\;X_{02}\sim \mathcal{N}(0,P_{02})\nn\\
  X_{1}&\sim \mathcal{N}(0,P_{1}), \;X_{2}\sim \mathcal{N}(0,P_{2})\nn
\end{flalign}
where $X_{00}$, $X_{01}$, $X_{0w}$, $X_{1}$, $X_{2}$ and $S_1$ are independent.
\end{proof}

By comparing the inner and outer bound, we reach similar conclusion as in Theorem \ref{th:capacityc=0}.

\begin{theorem}
  For the Gaussian channel of model II, in the regime with $Q_1\rightarrow \infty$, for certain rate $R_0$, i.e., certain $P_{00}$, if $\frac{1}{2}(1+P_1+P_0-P_{00})>\frac{P_0^2}{P_0+P_2+1}-P_{00}$, the line $A$-$B$ is on the boundary of the capacity region, for
  \begin{subequations}
    \begin{flalign}
      Point\; A:&\left(\frac{1}{2}\log\Big(1+\frac{P_{00}}{P_0-P_{00}+1}\Big),\, 0,\,\frac{1}{2}\log(1+P_2)\right)\nn\\
      Point\; B:&\left(\frac{1}{2}\log\Big(1+\frac{P_{00}}{P_0-P_{00}+1}\Big), \,\frac{1}{2}\log \Big(1+ \frac{P_1}{\frac{(1+P_0-P_{00}+P_1)^2(1+P_0+P_2)}{4(P_0^2-P_{00}(P_0+P2+1))}-P_1}\Big),\,\frac{1}{2}\log(1+P_2)\right)\nn
    \end{flalign}
  \end{subequations}
  \normalsize
If $\frac{1}{2}(1+P_1+P_0-P_{00})\leqslant\frac{P_0^2}{P_0+P_2+1}-P_{00}$, the line $A$-$B$ is characterized as
  \begin{subequations}
    \begin{flalign}
      Point\; A:&\left(\frac{1}{2}\log\Big(1+\frac{P_{00}}{P_0-P_{00}+1}\Big),\, 0,\,\frac{1}{2}\log(1+P_2)\right)\nn\\
      Point\; B:&\left(\frac{1}{2}\log\Big(1+\frac{P_{00}}{P_0-P_{00}+1}\Big), \,\frac{1}{2}\log \Big(1+\frac{P_1(P_0+P_2+1)}{P_0+(P_0+1)(P_2+1)}\Big),\,\frac{1}{2}\log(1+P_2)\right)\nn
    \end{flalign}
  \end{subequations}
  Furthermore, if $P_1> P_0-P_{00}+1$, the line $C$-$D$ is also on the boundary of the capacity region for the points
  \begin{subequations}
    \begin{flalign}
      Point\; C:&\left(\frac{1}{2}\log\Big(1+\frac{P_{00}}{P_0-P_{00}+1}\Big),\,\frac{1}{2}\log(1+P_{0}-P_{00}),\, \frac{1}{2}\log\Big(1+\frac{P_2}{P_0+1}\Big)\right)\nn\\
      Point\; D:&\left(\frac{1}{2}\log\Big(1+\frac{P_{00}}{P_0-P_{00}+1}\Big),\, \frac{1}{2}\log(1+P_{0}-P_{00}),\,0\right)\nn
    \end{flalign}
  \end{subequations}

  If $P_1\leqslant P_0-P_{00}-1$, the line $C$-$D$ is characterized as
    \small
  \begin{subequations}
    \begin{flalign}
      Point\; C:&\left(\frac{1}{2}\log\Big(1+\frac{P_{00}}{P_0-P_{00}+1}\Big),\, \frac{1}{2}\log(1+P_1),\,\frac{1}{2}\log\Big(1+\frac{P_2}{P_1+2}\Big)\right)\nn\\
      Point\; D:&\left(\frac{1}{2}\log\Big(1+\frac{P_{00}}{P_0-P_{00}+1}\Big),\, \frac{1}{2}\log(1+P_1),\,0\right)\nn
    \end{flalign}
  \end{subequations}
\end{theorem}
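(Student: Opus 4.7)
The plan is to adapt the proof of Theorem \ref{th:capacityc=0} to the generalized setting where $P_0$ is split into three nonnegative components $P_{00}$ (helper's own message), $P_{01}$ (state cancelation for receiver 1), and $P_{02}$ (interference pre-cancelation for receiver 2), with $P_{00}+P_{01}+P_{02}\leqslant P_0$. For each fixed $P_{00}$, the rate $R_0=\tfrac{1}{2}\log\bigl(1+\tfrac{P_{00}}{P_0-P_{00}+1}\bigr)$ is achieved by decoding $X_{00}$ at receiver 0 while treating the remaining components $X_{01},X_{02}$ as noise; I would verify this meets \eqref{eq:OuterGaussianc=0W0-0} suitably and matches \eqref{eq:InnerGaussianc=0W02-0} once we fix $P_{01}+P_{02}=P_0-P_{00}$.

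For the line $A$--$B$, I would set $\beta=1$ in Proposition \ref{th:InnerGaussianc=0W0}, which collapses the interference term in \eqref{eq:InnerGaussianc=0W02-2} and drives $R_2$ to the point-to-point capacity $\tfrac{1}{2}\log(1+P_2)$. The constraint $P_{02}^2+2\beta P_{02}(P_{01}+P_{00})\geqslant \beta^2(P_{00}+P_{01})(P_{02}+P_2+1)$ with $\beta=1$ then reduces to a quadratic in $P_{02}$ with solution $P_{01}\leqslant\tfrac{P_0^2}{P_0+P_2+1}-P_{00}$ (after substituting $P_{02}=P_0-P_{00}-P_{01}$). Point $A$ corresponds to $P_{01}=0$ (no state cancelation, $R_1=0$). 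Point $B$ is obtained by maximizing $R_1$ in \eqref{eq:InnerGaussianc=0W02-1} over $\alpha$ subject to $0\leqslant\alpha\leqslant\tfrac{2P_{01}}{1+P_{01}+P_{02}+P_1}$, taking $P_{01}=\tfrac{P_0^2}{P_0+P_2+1}-P_{00}$ (its maximum) so that $P_{02}=P_0-\tfrac{P_0^2}{P_0+P_2+1}$. The two announced cases arise precisely from whether the $\alpha$-upper-bound $\tfrac{2P_{01}}{1+P_{01}+P_{02}+P_1}$ is below or above $1$: when $\tfrac{1}{2}(1+P_1+P_0-P_{00})>\tfrac{P_0^2}{P_0+P_2+1}-P_{00}$ the bound is binding and $\alpha=\tfrac{2P_{01}}{1+P_{01}+P_{02}+P_1}$ plugs into \eqref{eq:InnerGaussianc=0W02-1} to yield the first expression for $B$; otherwise setting $\alpha=1$ fully cancels both state and residual interference at receiver 1, giving the second expression. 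The matching to the outer bound follows since $R_0+R_1+R_2$ can be checked to achieve $\tfrac{1}{2}\log(1+P_0+P_2)$ in both cases, corresponding to \eqref{eq:OuterGaussianc=0W0-3}, together with $R_2=\tfrac{1}{2}\log(1+P_2)$ saturating \eqref{eq:OuterGaussianc=0W0-2}.

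For the line $C$--$D$, I would set $\beta=0$ so that receiver 2 treats the helper's state-cancelation signal as additional noise, effectively reducing the problem (given $P_{00}$ already consumed) to Model I with helper power $P_0-P_{00}$ acting on transmitter--receiver pair 1. The two subcases follow Theorems \ref{th:capacity-1}--\ref{th:capacity-2}: when $P_1>P_0-P_{00}+1$, transmitter 1 cannot be assisted beyond helper power, and the optimal choice $\tilde{P}_1=P_0-P_{00}+1$ together with $\alpha=\tfrac{P_0-P_{00}}{1+P_0-P_{00}}$ attains the point-to-point rate $\tfrac{1}{2}\log(1+P_0-P_{00})$ for $R_1$ (point $D$), while receiver 2's remaining effective noise is $1+P_0-P_{00}+P_{00}=1+P_0$, yielding $R_2=\tfrac{1}{2}\log(1+\tfrac{P_2}{1+P_0})$ at point $C$; when $P_1\leqslant P_0-P_{00}-1$, choosing $P_{01}=P_1+1$ and $\alpha=1$ fully cancels the state and the $X_{01}$ interference at receiver 1, giving $R_1=\tfrac{1}{2}\log(1+P_1)$ with receiver 2 seeing effective noise $1+P_{01}+P_{00}=P_1+2+P_{00}$; after accounting for the $P_{00}$ component decoded and subtracted at receiver 0's side of the analysis, the residual at receiver 2 is $P_1+2$, giving $R_2=\tfrac{1}{2}\log(1+\tfrac{P_2}{P_1+2})$. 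Matching to the outer bounds \eqref{eq:OuterGaussianc=0W0-0}--\eqref{eq:OuterGaussianc=0W0-1} is straightforward since points $C$ and $D$ saturate the single-user bound on $R_1$.

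The main obstacle is the algebraic verification that the inner-bound rate tuples exactly meet the outer bound in each of the four subcases, particularly the sum-rate identity for line $A$--$B$ and the simultaneous tightness of \eqref{eq:OuterGaussianc=0W0-1} and \eqref{eq:OuterGaussianc=0W0-3} for line $C$--$D$; because the proof merely parallels that of Theorem \ref{th:capacityc=0} term-by-term with $P_0$ replaced by $P_0-P_{00}$ in the appropriate places, the computations are routine once the parameter choices for $(\alpha,\beta,P_{00},P_{01},P_{02})$ above are adopted, which is why the formal proof is omitted in the paper.
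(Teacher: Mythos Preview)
Your overall approach---paralleling the proof of Theorem~\ref{th:capacityc=0} with the helper's power effectively reduced from $P_0$ to $P_0-P_{00}$ after reserving $P_{00}$ for the message $W_0$---is exactly what the paper intends (it explicitly says the proof techniques are similar and are omitted). Your treatment of the line $A$--$B$ is correct: setting $\beta=1$, solving the resulting constraint for $P_{01}\leqslant\tfrac{P_0^2}{P_0+P_2+1}-P_{00}$, and then splitting on whether the $\alpha$-ceiling exceeds $1$ reproduces the two stated forms of point $B$. Your treatment of the first $C$--$D$ subcase ($P_1>P_0-P_{00}+1$) is also fine: with $P_{01}=P_0-P_{00}$, $P_{02}=0$, receiver~2 indeed sees total helper interference $P_0$, giving $R_2=\tfrac12\log(1+\tfrac{P_2}{1+P_0})$.

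There is, however, a genuine gap in your second $C$--$D$ subcase ($P_1\leqslant P_0-P_{00}-1$). With $\beta=0$, $P_{02}=0$, and $P_{01}=P_1+1$, receiver~2's output is $Y_2=X_{00}+X_{01}+X_2+N_2$, so the interference power is $P_{00}+P_{01}=P_{00}+P_1+1$ and the achievable $R_2$ is $\tfrac12\log\bigl(1+\tfrac{P_2}{P_{00}+P_1+2}\bigr)$, not $\tfrac12\log\bigl(1+\tfrac{P_2}{P_1+2}\bigr)$. Your sentence ``after accounting for the $P_{00}$ component decoded and subtracted at receiver~0's side of the analysis, the residual at receiver~2 is $P_1+2$'' is not a valid argument: receiver~0 decoding $X_{00}$ has no effect whatsoever on receiver~2's channel, and the inner bound of Proposition~\ref{th:InnerGaussianc=0W0} gives no mechanism for receiver~2 to strip off $X_{00}$ when $\beta=0$ and $P_{02}=0$. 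Either an additional argument is needed (e.g., having receiver~2 decode the common codeword $X_{00}$ first, which would impose a further rate constraint on $R_0$ that you would need to verify), or the $R_2$ coordinate of point~$C$ in the theorem statement should carry the extra $P_{00}$ in the denominator. You should not paper over this with a hand-wave.
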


\section{Model III: General K}\label{sec:Independent}

In this section, we consider the Gaussian channel of model III with $K\ge 2$, in which there are multiple receivers with each interfered by an independent state. We first study the scenario, in which the helper dedicates to help two users without transmitting its own message, i.e., $W_0=\phi$. We then extend the result to the more general scenario, in which the helper also has its own message destined for the corresponding receiver in addition to helping the two users, i.e., $W_0\neq\phi$.

\subsection{Scenario with Dedicated Helper ($W_0=\phi$)}\label{sec:modelIII1}

In this subsection, we study the scenario in which there are multiple receivers with each corrupted by an independently distributed state sequence, and the common helper (without transmission of its own messages) fully assists to cancel such state interference for all receivers. We note that this model is more general than model I, but does not include model II as a special case, because model II has one receiver that is not corrupted by state, but each receiver (excluding the helper) in model III is corrupted by an infinitely powered state sequence. Hence for model III, the challenge lies in the fact that the helper needs to assist multiple receivers to cancel interference caused by independent states. In this subsection, we first derive an outer bound on the capacity region, and then derive an inner bound based on a time-sharing scheme for the helper. Somewhat interestingly, comparing the inner and outer bounds concludes that the time-sharing scheme achieves the sum capacity under certain channel parameters, and we hence characterize certain segments of the boundary of the capacity region corresponding to the sum capacity under these channel parameters.

In the following, we first study the simple case with $K=2$, which is more instructional. We then generalize our results to the case with $K \ge 2$. We first derive the outer bound on the capacity region.
\begin{proposition}\label{th:OuterIndepend2}
  For the Gaussian channel of model III with $W_0=\phi$ and $K=2$, an outer bound on the capacity region for the regime when $Q_1, Q_2\rightarrow \infty$ consists of rate pairs $(R_1, R_2)$ satisfying:
    \begin{flalign}
        R_1 & \leqslant \frac{1}{2}\log(1+P_1)\label{eq:OuterQ2-1}\\
  R_2 & \leqslant \frac{1}{2}\log(1+P_2)\label{eq:OuterQ2-2}\\
  R_1 +R_2 & \leqslant \frac{1}{2}\log(1+P_0)\label{eq:OuterQ2-3}
    \end{flalign}
\end{proposition}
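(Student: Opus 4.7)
The plan is to handle the three bounds separately, with the individual bounds being routine genie arguments and the sum bound being the substantive step that requires combining the two independent state-corrupted channels through the common helper's power budget.

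First, I would dispose of the single-user bounds \eqref{eq:OuterQ2-1} and \eqref{eq:OuterQ2-2} by a standard cut-set/genie argument, essentially identical to the $R_1$ bound already used in Proposition \ref{th:OuterGaussianSingle}. For receiver $k\in\{1,2\}$, provide $(X_0^n,S_k^n)$ as genie side information. Since $W_k$ is independent of $(X_0^n,S_k^n)$ (the helper's codeword depends only on the state sequences, not on $W_k$), Fano's inequality yields $nR_k\le I(W_k;Y_k^n,X_0^n,S_k^n)+n\epsilon_n=I(W_k;Y_k^n-X_0^n-S_k^n)+n\epsilon_n$, and the residual channel $Y_k^n-X_0^n-S_k^n=X_k^n+N_k^n$ is a point-to-point AWGN channel of capacity $\tfrac{1}{2}\log(1+P_k)$.

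Second, for the sum-rate bound \eqref{eq:OuterQ2-3} I would mirror the sum-rate derivation of Proposition \ref{th:OuterGaussianSingle}, the idea being that both independent messages $W_1,W_2$ have to be conveyed through the common helper signal $X_0^n$ of power $P_0$, whereas the private transmitters $X_1^n,X_2^n$ are useless in isolation because the state powers $Q_1,Q_2$ diverge. Concretely, starting from $n(R_1+R_2)\le I(W_1;Y_1^n)+I(W_2;Y_2^n)+n\epsilon_n$, I would provide to receiver $k$ genie information of the form $(S_k^n,X_k^n+N_k^n-N_0')$ or, more cleanly, the "noise-difference" signal $S_k^n+N_k^n$ used in the model I proof, so that after subtraction each receiver is effectively looking at $X_0^n$ corrupted by unit-variance Gaussian noise. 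This reduces each $I(W_k;Y_k^n|\cdot)$ to a term of the form $I(W_k;X_0^n+\tilde N_k^n)$ where $\tilde N_k^n$ is an i.i.d.\ $\mathcal{N}(0,1)$ sequence, and using that $(W_1,W_2)\to X_0^n\to (X_0^n+\tilde N_1^n,X_0^n+\tilde N_2^n)$ forms a Markov chain (since $X_0^n$ is a function of $(S_1^n,S_2^n)$ alone and is conditionally independent of the noises), the two mutual informations collapse into a single term bounded by $I(X_0^n;X_0^n+\tilde N^n)\le \tfrac{n}{2}\log(1+P_0)$.

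The main obstacle will be the careful handling of the limits $Q_1,Q_2\to\infty$: naive bounds contain terms like $h(X_0^n+S_k^n+N_k^n)$ that diverge, so I need the genie to be chosen so that every state-dependent differential entropy appears both as a positive and a negative contribution and cancels exactly, leaving a finite residual depending only on $P_0$. This is the same cancellation mechanism that makes the model I sum-rate bound \eqref{eq:outer-2} independent of $P_1$, and I expect the two-receiver case to go through by symmetry, with the Markov chain $(W_1,W_2)\to X_0^n\to (\tilde Y_1^n,\tilde Y_2^n)$ (after subtracting the genie) being the decisive structural fact. Taking the limit $Q_1,Q_2\to\infty$ on any remaining state-dependent terms (which appear only through the genie, never on the right-hand side of the final inequality) completes the argument, yielding $R_1+R_2\le \tfrac{1}{2}\log(1+P_0)$.
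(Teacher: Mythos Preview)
Your single-user bounds are fine. The sum-rate argument, however, has a genuine gap. The reduction you propose cannot work: once a genie lets receiver $k$ strip off $X_k^n$ and $S_k^n$ so that it ``is effectively looking at $X_0^n$ corrupted by unit-variance Gaussian noise,'' the resulting observation carries \emph{no} information about $W_k$. In this model with $W_0=\phi$ the helper codeword $X_0^n$ is a deterministic function of $(S_1^n,S_2^n)$ and is independent of $(W_1,W_2)$; hence $I(W_k;X_0^n+\tilde N_k^n)=0$, and the Markov chain $(W_1,W_2)\to X_0^n\to(\cdot)$ you invoke, while true, is vacuous. The only carrier of $W_k$ in $Y_k$ is $X_k$: any genie that removes it kills the mutual information you are trying to upper-bound, and any genie that retains it (your first option, which contains $X_k^n+N_k^n-N_0'$) leaks $W_k$ through the genie itself, so you cannot simply drop $I(W_k;G_k)$. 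Either way you do not arrive at a nontrivial bound of the form $I(W_k;X_0^n+\tilde N_k^n)$.

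The paper's route is different and uses no receiver-side genie. It writes $nR_k\le h(Y_k^n)-h(X_0^n+S_k^n+N_k^n)$ (after conditioning on $W_k$ and hence $X_k^n$) and then couples the two \emph{subtractive} terms. Using the same-marginal trick $N_1^n=N_2^n$, set $A=X_0^n+S_1^n+N_1^n$, $B=X_0^n+S_2^n+N_1^n$, $C=X_0^n+N_1^n$; subadditivity gives $-h(A)-h(B)\le h(C)-h(A,B,C)$, and since $(A,B,C)$ is a bijection of $(S_1^n,S_2^n,C)$ with $h(C\mid S_1^n,S_2^n)\ge h(N_1^n)$, one obtains
\[
n(R_1+R_2)\;\le\;\bigl[h(Y_1^n)-h(S_1^n)\bigr]+\bigl[h(Y_2^n)-h(S_2^n)\bigr]+\bigl[h(X_0^n+N_1^n)-h(N_1^n)\bigr].
\]
The first two brackets vanish as $Q_1,Q_2\to\infty$ and the last is at most $\tfrac{n}{2}\log(1+P_0)$. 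The dependence on $P_0$ thus enters through the negative entropies $h(X_0^n+S_k^n+N_k^n)$, not through a virtual channel $W_k\to X_0^n+\tilde N_k^n$; that coupling of the subtractive terms via $C=X_0^n+N_1^n$ is the missing idea in your sketch.
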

\begin{proof}
The proof is detailed in Appendix \ref{apx:OuterIndepend2}.
\end{proof}

We note that although the two transmitters transmit over a parallel channel, the above outer bound suggests that their sum rate is still subject to a certain constraint determined by the helper's power. This implies that it is not possible for one common helper to cancel the two independent high-power states simultaneously (i.e., using the common resource). This fact also suggests that a time-sharing scheme, in which the helper alternatively assists each receiver, can be desirable to achieve the sum rate upper bound (i.e., to achieve the sum capacity).

We hence design the following time-sharing achievable scheme. The helper splits its transmission duration into two time slots with the fraction $\gamma$ of the total time duration for assisting receiver 1 and the fraction $1-\gamma$ for assisting receiver 2. Each transmitter transmits only during the time slot that it is assisted by the helper, and keeps silent while the helper assisting the other transmitter. We note that the power constraints for transmitters 1 and 2 in their corresponding transmission time slots are $\frac{P_1}{\gamma}$ and $\frac{P_2}{1-\gamma}$, respectively.

Now at each transmission slot, the channel consists of one transceiver pair with the receiver corrupted by a high-power state, and one helper that assists the receiver to cancel the state interference. Such a model is a special case of the model studied in Section \ref{sec:ResultSingle} (with the state-informed transmitter not having its own message). We rewrite the achievable rate for the model in Section \ref{sec:ResultSingle} with power constraints $P$ and $P_0$ respectively at the transmitter and the helper:
\begin{flalign}
  R&(P,P_0) \nn \\
& :=\begin{cases}
\frac{1}{2}\log(1+P_0), &  P\geqslant P_0+1\\
\frac{1}{2}\log(1+\frac{4P_0P}{4P_0+(P_0-P-1)^2}), \;\; & P_0-1\leqslant P\leqslant P_0+1 \\
\frac{1}{2}\log(1+P), & P\leqslant P_0-1
\end{cases}\label{eq:InnerSingle}
\end{flalign}
By employing the time-sharing scheme between the helper assisting one receiver and the other alternatively, we obtain the following achievable region.
\begin{proposition}\label{th:InnerIndepend2}
For the Gaussian channel of model III with $W_0=\phi$ and $K=2$, in the regime with $Q_1, Q_2\rightarrow \infty$, an inner bound on the capacity region consists of rate pairs $(R_1, R_2)$ satisfying:
\begin{subequations}
  \begin{flalign}
  R_1 &\leqslant \gamma R\left(\frac{P_1}{\gamma},P_0\right)\\
  R_2 &\leqslant (1-\gamma) R\left(\frac{P_2}{1-\gamma},P_0\right)
\end{flalign}
\end{subequations}
where $0\leqslant \gamma\leqslant 1$ is the time-sharing coefficient.
\end{proposition}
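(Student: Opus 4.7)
The plan is to prove Proposition \ref{th:InnerIndepend2} by an explicit time-sharing construction that, phase by phase, reduces to the single-user Model I of Section \ref{sec:ResultSingle}. I will partition each length-$n$ block into two consecutive phases of lengths $n_1 = \lfloor \gamma n\rfloor$ and $n_2 = n - n_1$. In phase 1, transmitter 1 encodes its message $W_1$ into a codeword with per-symbol power $P_1/\gamma$ (so the block-average power is $P_1$), while transmitter 2 stays silent. In phase 2 the roles swap: transmitter 2 transmits with per-symbol power $P_2/(1-\gamma)$ and transmitter 1 is silent. The helper, knowing both $S_1^n$ and $S_2^n$ noncausally, applies during phase 1 the Model I assistance scheme of Proposition \ref{th:InnerGaussianSingle} using the phase-1 segment of $S_1^n$ (with no own message to send), and during phase 2 the analogous scheme using the phase-2 segment of $S_2^n$.

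Next I will verify that each phase is functionally a single instance of Model I with $W_0=\phi$. In phase 1, receiver 1 observes $X_0 + X_1 + S_1 + N_1$ with transmitter-1 power $P_1/\gamma$ and helper power $P_0$, which is exactly the single-user state-dependent link already studied; applying Theorems \ref{th:capacity-1}, \ref{th:capacity-2}, and \ref{th:capacity-4} specialized to $R_0=0$ yields per-symbol achievable rate $R(P_1/\gamma,P_0)$ to receiver 1, where the three cases of \eqref{eq:InnerSingle} match the three power regimes of those theorems. Phase 2 is symmetric and delivers per-symbol rate $R(P_2/(1-\gamma),P_0)$ to receiver 2. Since phase 1 occupies a fraction $\gamma$ of the block and phase 2 a fraction $1-\gamma$, the overall per-symbol rates are $\gamma R(P_1/\gamma,P_0)$ and $(1-\gamma)R(P_2/(1-\gamma),P_0)$, respectively. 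Power constraints are immediate: transmitter 1's block-average power is $\gamma \cdot (P_1/\gamma)=P_1$, similarly for transmitter 2, and the helper uses power $P_0$ at every symbol. Letting $n\to\infty$ so that $\lfloor\gamma n\rfloor/n\to\gamma$ establishes joint achievability with vanishing error probability.

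The one point that requires a moment of care, and which I expect to be the only real subtlety, is to justify that receiver 2 can indeed ignore phase 1 (and receiver 1 can ignore phase 2) even though the helper's phase-1 codeword does leak into receiver 2's parallel output. This is clean here precisely because $Q_2\to\infty$: the phase-1 output at receiver 2 contains the infinite-power sequence $S_2$ and no component of transmitter 2's codeword at all, so it cannot contribute to decoding $W_2$; consequently receiver 2 simply discards the phase-1 block and decodes $W_2$ from the phase-2 output alone, on which the helper's signal has been designed as the Model I assistance for $S_2$. With this observation the two phases are decoupled and the bounds in Proposition \ref{th:InnerIndepend2} follow. Everything else is a direct invocation of the Model I capacity results already established, so no new calculation beyond bookkeeping of the time-sharing fractions is required.
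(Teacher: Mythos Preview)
Your proposal is correct and follows essentially the same approach the paper takes: the paper does not give a formal proof of this proposition, but the paragraph preceding it describes exactly your construction (split the block into fractions $\gamma$ and $1-\gamma$, keep the non-assisted transmitter silent, boost the active transmitter's power accordingly, and invoke the Model~I single-user achievable rate $R(\cdot,\cdot)$ in each phase). Your added remark that receiver~2 may discard its phase-1 output is a nice bit of care; strictly speaking it is not even needed for achievability, since one is free to specify any decoder, and the decoder that uses only the phase-2 block is already covered by the Model~I analysis.
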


We note that following from \eqref{eq:InnerSingle}, the best possible single-user rate is $\frac{1}{2}\log(1+P_0)$, which can be achieved if $P\geqslant P_0+1$. This best rate may not be possible if $P$ is not large enough. Interestingly, in a time-sharing scheme, both transmitters can simultaneously achieve the best single user rate $\frac{1}{2}\log(1+P_0)$ over their transmission fraction of time, because both of their powers get boosted over a certain fraction of time, although neither power is larger than $P_0+1$. In this way, the sum rate upper bound \eqref{eq:OuterQ2-3} can be achieved. The following theorem characterizes the sum capacity of the channel for the scenario describe above.
\begin{theorem}\label{th:IndependC}
For the Gaussian channel of model III with $K=2$ and $W_0=\phi$, in the regime with $Q_1, Q_2\rightarrow \infty$, if $P_1+P_2\geqslant P_0+1$, the sum capacity equals $\frac{1}{2}\log(1+P_0)$. The rate points that achieve the sum capacity on the boundary of the capacity region are characterized as $(R_1,R_2)=\left(\gamma R(\frac{P_1}{\gamma},P_0),(1-\gamma) R(\frac{P_2}{1-\gamma},P_0)\right)$ for $\gamma \in \left(\max(1-\frac{P_2}{P_0+1},0),\min(\frac{P_1}{P_0+1},1)\right)$.
\end{theorem}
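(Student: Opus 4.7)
The plan is to prove this by matching the time-sharing inner bound of Proposition \ref{th:InnerIndepend2} against the sum-rate outer bound \eqref{eq:OuterQ2-3} of Proposition \ref{th:OuterIndepend2}. The converse is immediate: Proposition \ref{th:OuterIndepend2} already gives $R_1+R_2\leqslant \frac{1}{2}\log(1+P_0)$ with no dependence on $P_1$ or $P_2$, so it suffices to exhibit achievable points whose sum attains $\frac{1}{2}\log(1+P_0)$.

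The key observation is that $R(P,P_0)$ defined in \eqref{eq:InnerSingle} reaches its maximum value $\frac{1}{2}\log(1+P_0)$ precisely in the regime $P\geqslant P_0+1$. In the time-sharing scheme of Proposition \ref{th:InnerIndepend2}, transmitter $k$ uses boosted power $P_k/\gamma_k$ over its active fraction ($\gamma_1=\gamma$, $\gamma_2=1-\gamma$). Hence I would pick $\gamma$ so that both boosted powers exceed $P_0+1$, namely
\begin{equation}
\frac{P_1}{\gamma}\geqslant P_0+1 \quad\text{and}\quad \frac{P_2}{1-\gamma}\geqslant P_0+1,
\end{equation}
which rearranges to $1-\frac{P_2}{P_0+1}\leqslant \gamma \leqslant \frac{P_1}{P_0+1}$. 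Combining with $\gamma\in[0,1]$ gives exactly the interval $\bigl[\max(1-\tfrac{P_2}{P_0+1},0),\min(\tfrac{P_1}{P_0+1},1)\bigr]$ appearing in the statement, and the interval is nonempty if and only if $P_1+P_2\geqslant P_0+1$, which is the hypothesis of the theorem.

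For any $\gamma$ in this interval, the first case of \eqref{eq:InnerSingle} yields $R(\frac{P_1}{\gamma},P_0)=R(\frac{P_2}{1-\gamma},P_0)=\frac{1}{2}\log(1+P_0)$, so Proposition \ref{th:InnerIndepend2} gives the achievable pair
\begin{equation}
(R_1,R_2)=\left(\gamma R\!\left(\tfrac{P_1}{\gamma},P_0\right),\,(1-\gamma)R\!\left(\tfrac{P_2}{1-\gamma},P_0\right)\right),
\end{equation}
whose components sum to $\gamma\cdot\tfrac{1}{2}\log(1+P_0)+(1-\gamma)\cdot\tfrac{1}{2}\log(1+P_0)=\tfrac{1}{2}\log(1+P_0)$. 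This matches the converse, establishing both the value of the sum capacity and that every such $\gamma$ generates a boundary point on the sum-capacity segment.

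There is no serious obstacle here, since both the inner and outer bounds have been proved already; the proof is essentially a bookkeeping argument identifying the $\gamma$-range on which the time-sharing scheme saturates the outer bound. The only conceptual point worth emphasizing in the write-up is the interpretation: although neither $P_1$ nor $P_2$ is individually required to exceed $P_0+1$, pooling each transmitter's energy into a suitable fraction of time boosts the instantaneous power past the threshold needed to make the helper's dirty-paper assistance fully effective, and this is precisely what allows the shared-helper sum bound $\frac{1}{2}\log(1+P_0)$ to be met.
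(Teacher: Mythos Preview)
Your proposal is correct and follows essentially the same approach as the paper: match the time-sharing inner bound of Proposition \ref{th:InnerIndepend2} to the sum-rate outer bound of Proposition \ref{th:OuterIndepend2} by choosing $\gamma$ so that both boosted powers $P_1/\gamma$ and $P_2/(1-\gamma)$ exceed $P_0+1$, which yields exactly the stated interval and is nonempty precisely when $P_1+P_2\geqslant P_0+1$. Your argument is in fact slightly more streamlined than the paper's, which first treats the boundary case $P_1+P_2=P_0+1$ with a specific $\gamma=P_1/(P_1+P_2)$ and then reduces the general case to it by lowering the transmit powers, before separately deriving the $\gamma$-range.
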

\begin{proof}
 The proof is detailed in Appendix \ref{apx:IndependentC}.
\end{proof}

The above theorem implies the following characterization of the full capacity region under certain parameters.
\begin{corollary}\label{th:capacity_model2}
  For the Gaussian channel of model III with $W_0=\phi$, in the regime with $Q_1, Q_2\rightarrow \infty$, if $P_1,\;P_2\geqslant P_0+1$, then the capacity region consists of the rate pair $(R_1,R_2)$ satisfying $R_1+R_2\leqslant\frac{1}{2}\log(1+P_0)$.
\end{corollary}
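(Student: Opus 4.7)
The plan is to obtain the corollary by combining the sum-rate outer bound of Proposition \ref{th:OuterIndepend2} with the time-sharing inner bound of Proposition \ref{th:InnerIndepend2}, and arguing that under the hypothesis $P_1,P_2\geqslant P_0+1$ every nonnegative rate pair on the line $R_1+R_2=\tfrac{1}{2}\log(1+P_0)$ is simultaneously achievable.

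First I would invoke Proposition \ref{th:OuterIndepend2} to get the converse direction immediately: the bound $R_1+R_2\leqslant\tfrac{1}{2}\log(1+P_0)$ from \eqref{eq:OuterQ2-3} holds for any achievable pair, and the individual-rate bounds \eqref{eq:OuterQ2-1}, \eqref{eq:OuterQ2-2} are automatically implied by the sum-rate bound together with $P_1,P_2\geqslant P_0+1$. So the capacity region is contained in the triangle $\{(R_1,R_2):R_1,R_2\geqslant 0,\;R_1+R_2\leqslant\tfrac{1}{2}\log(1+P_0)\}$.

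For achievability I would apply the time-sharing scheme of Proposition \ref{th:InnerIndepend2}. Pick any $\gamma\in(0,1)$. Under the hypothesis $P_1\geqslant P_0+1$, the boosted power $P_1/\gamma\geqslant P_1\geqslant P_0+1$ still lies in the first regime of the piecewise definition \eqref{eq:InnerSingle}, so $R(P_1/\gamma,P_0)=\tfrac{1}{2}\log(1+P_0)$; the same argument applies to receiver $2$. Hence the inner bound simplifies to $R_1\leqslant \gamma\cdot\tfrac{1}{2}\log(1+P_0)$ and $R_2\leqslant(1-\gamma)\cdot\tfrac{1}{2}\log(1+P_0)$, tracing out the entire segment $R_1+R_2=\tfrac{1}{2}\log(1+P_0)$ with $R_1,R_2>0$ as $\gamma$ ranges over $(0,1)$. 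The two endpoints are reached by setting $\gamma=1$ or $\gamma=0$, equivalently by silencing one transmitter and reducing to the single-user result of Theorem \ref{th:capacity-1}. This matches the range of $\gamma$ given in Theorem \ref{th:IndependC}, since $P_1,P_2\geqslant P_0+1$ forces $\max(1-P_2/(P_0+1),0)=0$ and $\min(P_1/(P_0+1),1)=1$.

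There is essentially no obstacle here beyond bookkeeping; the nontrivial content is already packaged in Theorem \ref{th:IndependC}. The only point worth being careful about is verifying that the hypothesis of the corollary forces each boosted power $P_k/\gamma$ into the saturating branch of \eqref{eq:InnerSingle} for every $\gamma\in(0,1)$, which is what makes both transmitters simultaneously achieve the single-user ceiling $\tfrac{1}{2}\log(1+P_0)$ over their active fraction of time and hence saturate the common-helper sum-rate bound.
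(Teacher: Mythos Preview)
Your proposal is correct and follows essentially the same approach the paper intends: the corollary is presented there as an immediate consequence of Theorem~\ref{th:IndependC} (together with the outer bound in Proposition~\ref{th:OuterIndepend2}), and your argument simply unpacks that implication by observing that $P_1,P_2\geqslant P_0+1$ forces the admissible $\gamma$-interval in Theorem~\ref{th:IndependC} to be all of $(0,1)$ and places both boosted powers in the saturating branch of \eqref{eq:InnerSingle}. Your write-up is in fact more detailed than the paper's own treatment, which gives no separate proof.
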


We next provide channel examples to understand our inner and outer bounds, and in particular, Theorem \ref{th:IndependC} on the sum capacity. It can be seen that the power constraints fall into four cases, among which we consider the following three cases: case 1. $P_1\geqslant P_0, P_2\geqslant P_0$; case 2. $P_1\geqslant P_0, P_2<P_0$; and case 3. $P_1 <  P_0, P_2 < P_0$ by noting that case 4 is opposite to case 2 and is omitted due to symmetry of the two transmitters.

\begin{itemize}
\item Case 1: $P_1\geqslant P_0, P_2\geqslant P_0$

We consider an example channel with $P_0=1$, $P_1=1.8$ and $P_2=1.5$. Figure \ref{fig:Independ} (a) plots the inner and outer bounds on the capacity region. In particular, the two bounds meet over the line segment B-C, which corresponds to the rate points $(R_1,R_2)=\left(\gamma R(\frac{P_1}{\gamma},P_0),(1-\gamma) R(\frac{P_2}{1-\gamma},P_0)\right)$ for $\gamma \in \left(\max(1-\frac{P_2}{P_0+1},0),\min(\frac{P_1}{P_0+1},1)\right)$ as characterized in Theorem \ref{th:IndependC}. All these rate points achieve the sum capacity. It can also be seen that although neither transmitter achieves the best possible single-user rate, the sum capacity can be achieved due to the time-sharing scheme. We also note that, in this case, if the conditions in Corollary \ref{th:capacity_model2} are satisfied, the full capacity region is characterized.

\item Case 2: $P_1>P_0, P_2\leqslant P_0$

We consider an example channel with $P_0=2$, $P_1=2.5$ and $P_2=0.8$. Figure \ref{fig:Independ} (b) plots the inner and outer bounds on the capacity region. Similarly to case 1, the two bounds meet over the line segment B-C as characterized in Theorem \ref{th:IndependC}, and the points on such a line segment achieve the sum capacity. Differently from case 1, transmitter 2 achieves its point-to-point channel capacity indicated by the point A in Figure \ref{fig:Independ} (b). This is consistent with the single user rate provided in \eqref{eq:InnerSingle} for the case with $P_2\leqslant P_0-1$.

  \item Case 3: $P_1 < P_0, P_2 < P_0$

  We consider an example channel with $P_0=4$, $P_1=3$ and $P_2=3$. Figure \ref{fig:Independ} (c) plots the inner and outer bounds on the capacity region. The points on the line segment B-C achieve the sum capacity as characterized in Theorem \ref{th:IndependC}, and the points A and D respectively achieve the point-to-point capacity for two transceiver pairs. This is consistent with the single-user rate provided in \eqref{eq:InnerSingle} for the case with $P_1, P_2\leqslant P_0-1$.
\begin{figure*}[hbt!]
\begin{center}
\begin{tabular}{ccc}
\includegraphics[width=2in]{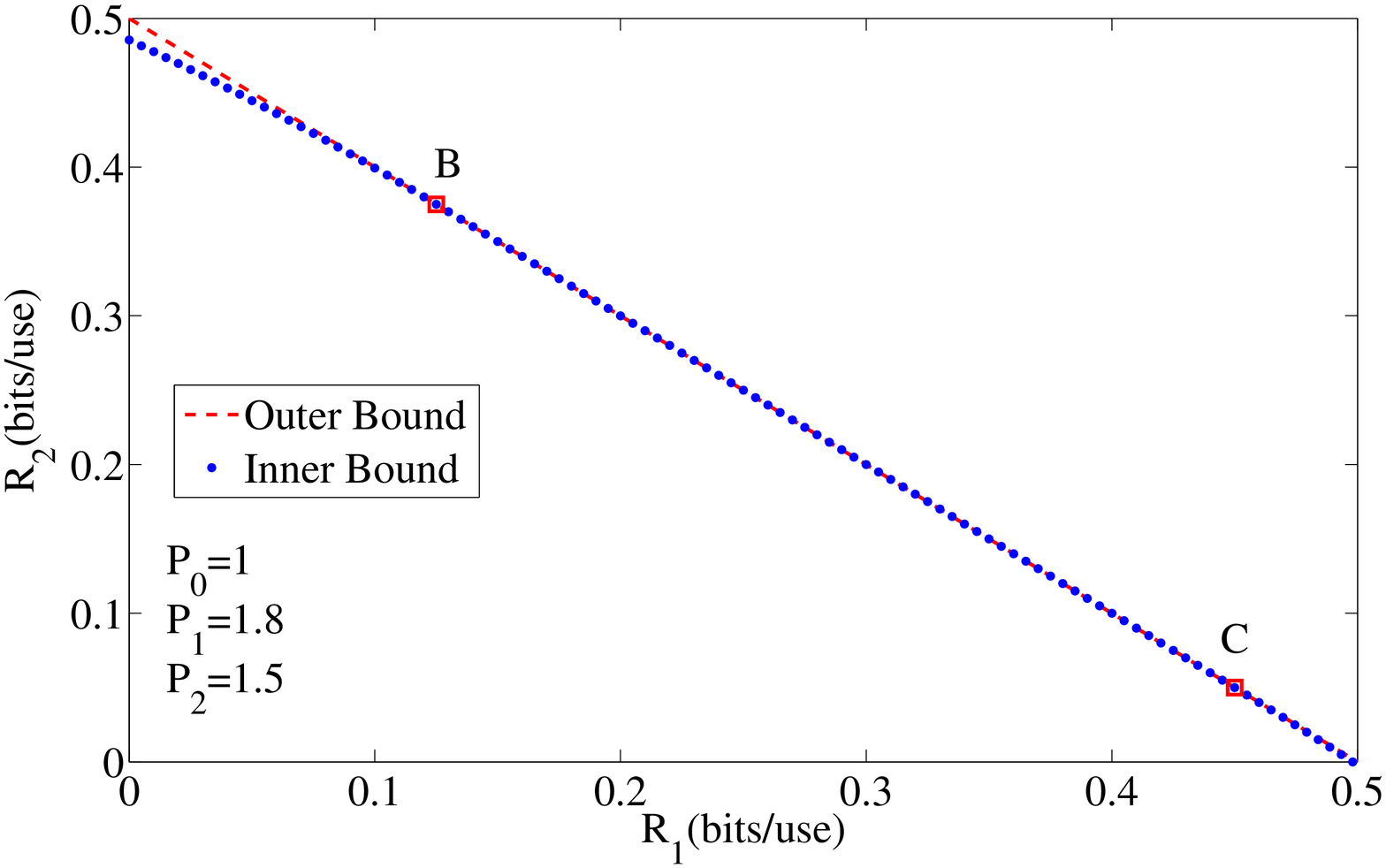} &\includegraphics[width=2in]{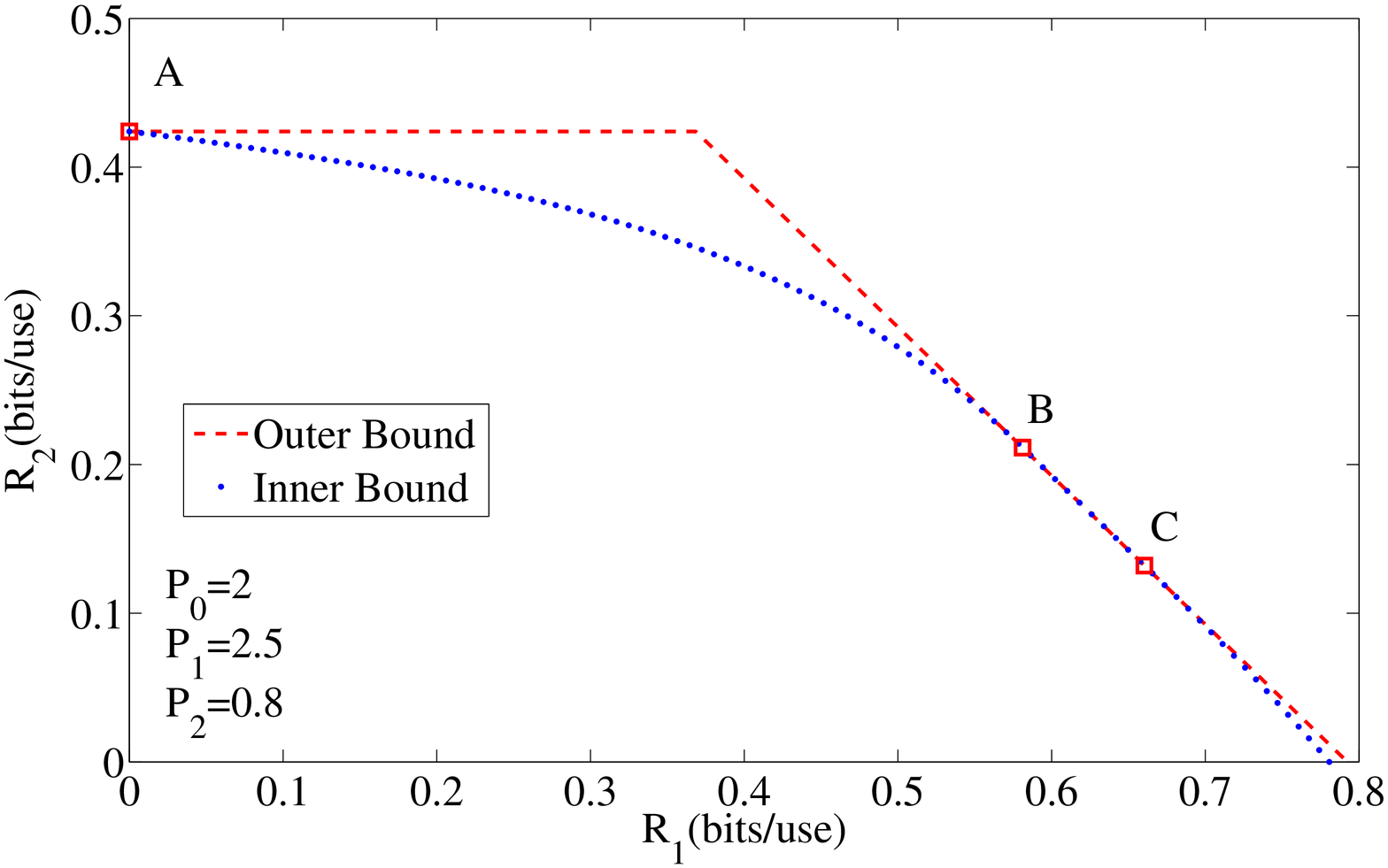} & \includegraphics[width=2in]{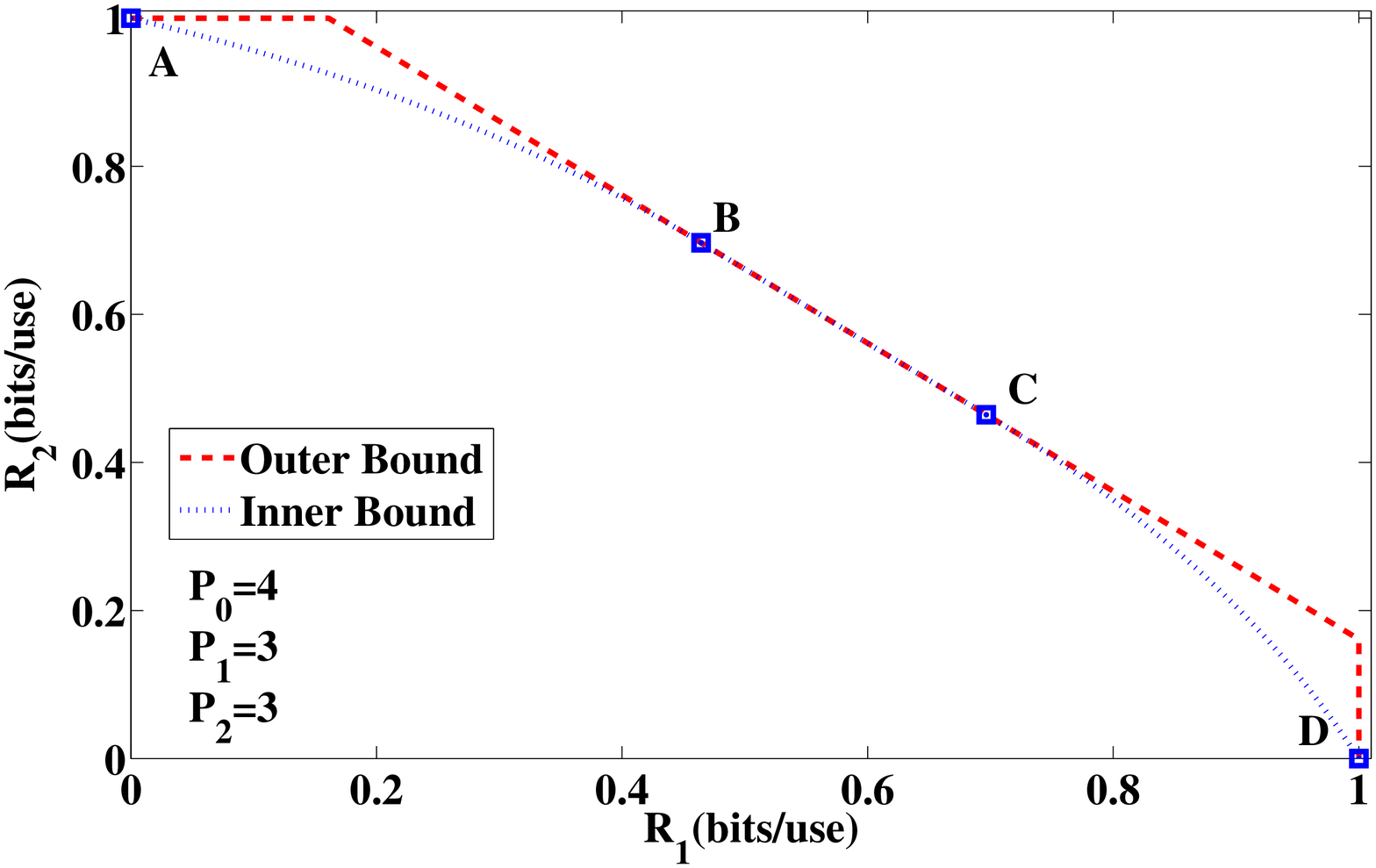} \\
\footnotesize{(a) $P_1\geqslant P_0, P_2\geqslant P_0$ }& \footnotesize{(b) $P_1\geqslant P_0, P_2< P_0$}& \footnotesize{(c) $P_1 < P_0, P_2 < P_0$}
\end{tabular}
\caption{An illustration of the partial capacity boundary for the Gaussian channel of model III}
\label{fig:Independ}
\vspace{-0.8cm}
\end{center}
\end{figure*}

\end{itemize}

We next generalize our results to the case with $K \ge 2$.

\subsection{Extension: $W_0\neq\phi$}\label{sec:Exension}
In this subsection, we study the scenario, in which the helper also has its own message to transmit in addition to assisting the state-corrupted receivers, i.e., $W_0\neq\phi$. The results we present below extend those in the preceding subsection for the scenario with $W_0=\phi$ as well as the results in Section \ref{sec:ResultSingle} for model I. The proof techniques combine those in Sections \ref{sec:modelIII1} and \ref{sec:ResultSingle}, and hence are omitted.


We first provide the outer bound as follows.
 \begin{proposition}\label{th:OuterIndependKW0}
For the Gaussian channel of model III, an outer bound on the capacity region for the regime when $Q_1,\dots, Q_K\rightarrow \infty$ consists of rate tuples $(R_0,\dots, R_K)$ satisfying:
  \begin{flalign}
  \sum_{k=0}^K& R_k\leqslant \frac{1}{2}\log(1+P_0)\nn\\
  R_k& \leqslant \frac{1}{2}\log(1+P_k)\;\;\;\;k=0,\dots,K\nn
\end{flalign}
\end{proposition}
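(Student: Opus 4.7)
The plan is to establish the three types of constraints in the outer bound separately, exploiting genie-aided techniques that extend naturally from the lower-$K$ cases.

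For $R_0 \leqslant \frac{1}{2}\log(1+P_0)$, this is immediate from Fano's inequality together with the point-to-point Gaussian capacity formula, since the helper's link to receiver~0 is simply $Y_0 = X_0 + N_0$ with input power constraint $P_0$, unaffected by any state sequence or any other transmitter.

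For each $k\in\{1,\dots,K\}$, I would establish $R_k \leqslant \frac{1}{2}\log(1+P_k)$ via a genie argument: furnish receiver $k$ with the side information $(X_0^n, S_k^n)$. Because $W_k$ is independent of $(W_0, S_1^n,\dots,S_K^n)$ and $X_0^n$ is a function of $(W_0, S_1^n,\dots,S_K^n)$, this genie is independent of $W_k$ and can only increase the achievable rate. Applying Fano,
\begin{equation*}
nR_k \leqslant I(W_k; Y_k^n, X_0^n, S_k^n) + n\epsilon_n = I\bigl(W_k; X_k^n + N_k^n \,\big|\, X_0^n, S_k^n\bigr) + n\epsilon_n,
\end{equation*}
and since $X_k^n$ and $N_k^n$ are independent of $(X_0^n, S_k^n)$, the right side is upper bounded by the scalar AWGN capacity $\frac{n}{2}\log(1+P_k) + n\epsilon_n$, which yields the claimed inequality.

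The sum rate constraint $\sum_{k=0}^K R_k \leqslant \frac{1}{2}\log(1+P_0)$ is the main point of interest and generalizes the sum rate bounds in Propositions \ref{th:OuterGaussianSingle} (the $K=1$ case with $W_0\neq\phi$) and \ref{th:OuterIndepend2} (the $K=2$ case with $W_0=\phi$ and both receivers state-corrupted). The intuition is that in the high state power limit any information delivered to a state-corrupted receiver must flow through the helper's state-cancelling action, while $X_0^n$ of power $P_0$ must simultaneously support the transmission of $W_0$ and cancel $K$ independent state sequences; the common helper therefore acts as a bottleneck that caps the network's total rate at the single-link capacity $\frac{1}{2}\log(1+P_0)$. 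To formalize this, I would apply Fano's inequality to a super-receiver that observes $(Y_0^n, Y_1^n,\dots,Y_K^n)$, augment with carefully chosen genie variables (for instance the states $S_1^n,\dots,S_K^n$) so that every channel reduces to a variant of $X_0+\text{AWGN}$, decompose the resulting mutual information via the chain rule, and then use Gaussian entropy bounds that exploit the correlation of $X_0^n$ with the states to peel off a factor $\frac{1}{2}\log(1+P_0)$ while collapsing the remaining contributions in the $Q_k\to\infty$ limit.

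The hardest part will be making this sum rate argument tight as $Q_k\to\infty$. Both $h(Y_k^n)$ and $h(Y_k^n\mid W_k)$ diverge in this limit, and only their finite difference $I(W_k;Y_k^n)$ carries operational meaning, so the analysis must carefully track how the helper's finite power $P_0$ simultaneously constrains its correlation with each of the $K$ independent state sequences and ensure that all residual terms in the chain-rule decomposition, beyond the common $\frac{1}{2}\log(1+P_0)$ term, vanish in the limit.
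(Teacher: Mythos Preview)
Your treatment of the individual rate bounds is fine and matches the paper's approach. The gap is in your sum-rate argument.

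Your proposed route---Fano for a super-receiver, then give the states $S_1^n,\dots,S_K^n$ as genie side information---will not produce the bound $\frac{1}{2}\log(1+P_0)$. Once the receiver knows $S_k^n$, channel $k$ reduces to $X_0+X_k+\text{AWGN}$, not $X_0+\text{AWGN}$; the contribution of $X_k$ does not disappear, and the resulting state-free multi-access problem has sum capacity strictly larger than $\frac{1}{2}\log(1+P_0)$. In other words, revealing the states destroys precisely the bottleneck effect you are trying to capture: it is the \emph{unknown} infinite-power state that forces all of receiver $k$'s information to pass through the helper.

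The paper's argument is structurally different. It applies Fano separately, $nR_k\le I(W_k;Y_k^n)$, and does \emph{not} reveal any genie information. The key observations are: (i) since $X_k^n$ is a function of $W_k$, one has $h(Y_k^n\mid W_k)=h(X_0^n+S_k^n+N_k^n)$ for $k\ge 1$; (ii) these negative-entropy terms, together with $h(X_0^n+N_0^n\mid W_0)$, can be assembled via conditioning into the joint entropy $h(X_0^n+S_1^n+N_1^n,\dots,X_0^n+S_K^n+N_K^n,X_0^n+N_0^n\mid W_0)$; (iii) since capacity depends only on marginals, one may set $N_k^n=N_0^n$ for all $k$, after which this joint is a bijection of $(S_1^n,\dots,S_K^n,X_0^n+N_0^n)$. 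The chain rule then peels off the $h(S_k^n)$ terms, each of which cancels with the growth of $h(Y_k^n)$ as $Q_k\to\infty$, leaving exactly $h(X_0^n+N_0^n)-h(N_0^n)\le\frac{n}{2}\log(1+P_0)$. You will need this ``same noise'' identification and the resulting linear change of variables; without them the $Q_k\to\infty$ limit cannot be taken cleanly.
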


\begin{proof}
  The proof is detailed in Appendix \ref{apx:OuterIndependKW0}
\end{proof}

By utilizing time-sharing scheme, receivers share the helper's assistance with activating time portion $\gamma_k$ for receiver k. The corresponding achievable region is as follows.
\begin{proposition}\label{th:InnerIndependKW0}
For the Gaussian channel of model III, an inner bound consists of rate tuples $(R_0,\dots, R_K)$ satisfying:
  \begin{flalign}
  R_0&\leqslant \sum_{k=1}^K \gamma_k \frac{1}{2}\log\left(1+\frac{\bar{\beta}_kP_0}{\beta_k P_0+1}\right)\nn\\
  R_k &\leqslant \gamma_k R\left(\frac{P_k}{\gamma_k},\beta_k P_0\right)\nn\\
  \sum_{k=1}^K & \gamma_k =1,\;\;\gamma_k\geqslant0\;\;\bar{\beta}_k+\beta_k=1\;\;\;\;\; k=1,\dots,K\nn
\end{flalign}
where $R(\cdot,\cdot)$ are the function defined in \eqref{eq:InnerSingle}.
\end{proposition}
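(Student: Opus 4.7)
The plan is to combine the block time-sharing idea of Proposition \ref{th:InnerIndepend2} with the single-user layered-plus-dirty-paper construction of Proposition \ref{th:InnerGaussianSingle} (equivalently, the rate function $R(\cdot,\cdot)$ in \eqref{eq:InnerSingle}). Partition the $n$ channel uses into $K$ disjoint blocks, block $k$ having length $n\gamma_k$. During block $k$, among transmitters $1,\dots,K$ only transmitter $k$ is active; it transmits with boosted power $P_k/\gamma_k$, so that its time-averaged power equals $\gamma_k\cdot P_k/\gamma_k=P_k$. The helper is active in every block using its full per-block power $P_0$.

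Since the silent transmitters contribute nothing and their receivers have no message to decode, block $k$ reduces exactly to the Gaussian Model I of Section \ref{sec:ResultSingle} with helper power $P_0$, user power $P_k/\gamma_k$, and an infinite-variance state $S_k$. Apply the achievability of Proposition \ref{th:InnerGaussianSingle} inside the block: superimpose on the helper's transmission an independent Gaussian component of power $\bar{\beta}_k P_0$ carrying a sub-message of $W_0$, and let the remaining power $\beta_k P_0$ form the dirty-paper auxiliary that cancels $S_k$ at receiver $k$, with the correlation coefficient $\alpha$ optimized as in the proofs of Theorems \ref{th:capacity-1}--\ref{th:capacity-4}. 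At receiver $k$, the $W_0$-bearing signal is absorbed into the auxiliary's offset exactly as in Model I, while receiver $0$ decodes the sub-message by treating the assistance signal as Gaussian noise of variance $\beta_k P_0+1$. The resulting per-block-use rates are
\begin{equation}
r_0^{(k)}=\tfrac{1}{2}\log\!\left(1+\tfrac{\bar{\beta}_k P_0}{\beta_k P_0+1}\right),\qquad r_k^{(k)}=R\!\left(\tfrac{P_k}{\gamma_k},\,\beta_k P_0\right).
\end{equation}

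Globally, split $W_0$ into $K$ independent sub-messages $W_0^{(1)},\dots,W_0^{(K)}$ with sub-message $k$ encoded into block $k$ at rate $r_0^{(k)}$ per block use; encode each $W_k$ as a single length-$n\gamma_k$ codeword confined to block $k$ at rate $r_k^{(k)}$ per block use. Decoding is block-wise and uses standard joint-typicality arguments applied independently across blocks, so the error probability vanishes. Per-channel-use averaging yields $R_0=\sum_{k=1}^K \gamma_k\,r_0^{(k)}$ and $R_k=\gamma_k\,r_k^{(k)}$, which are precisely the claimed bounds. The only conceptually nontrivial step is that the intra-block dirty-paper scheme must simultaneously carry a $W_0$ component without breaking state cancelation at receiver $k$; this has already been established in Proposition \ref{th:InnerGaussianSingle}, whose auxiliary absorbs the $W_0$-carrying signal into its offset. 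Every remaining ingredient is routine time-sharing power-and-rate bookkeeping, which is why the paper omits the proof.
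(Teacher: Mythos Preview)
Your proposal is correct and follows essentially the same approach the paper intends: the paper explicitly states that the proof ``combines those in Sections \ref{sec:modelIII1} and \ref{sec:ResultSingle}, and hence are omitted,'' i.e., the time-sharing scheme of Proposition \ref{th:InnerIndepend2} together with the Model I layered dirty-paper construction of Proposition \ref{th:InnerGaussianSingle}, which is precisely the block-wise scheme you describe. Your bookkeeping of the per-block powers ($P_k/\gamma_k$ for transmitter $k$, $\bar\beta_kP_0$ for the helper's own sub-message, $\beta_kP_0$ for state assistance) and the resulting aggregated rates matches the statement exactly.
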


By comparing the inner and outer bound, we introduce a natural generalization of Theorem \ref{th:IndependC}.
\begin{theorem}
  The sum capacity of the Gaussian channel of model III is $\sum_{k=0}^K R_k \leqslant \frac{1}{2}\log(1+P_0)$. The points on the boundary are characterized as
\begin{flalign}
  R_0&\leqslant \sum_{k=1}^K \gamma_k \frac{1}{2}\log\left(1+\frac{\bar{\beta}_kP_0}{\beta_k P_0+1}\right)\nn\\
  R_k &\leqslant \gamma_k \frac{1}{2}\log(1+\beta_k P_0)\;\;\;\;\;\; k=1,\dots,K\nn
\end{flalign}
where
\begin{flalign}
  \sum_{k=1}^K & \gamma_k=1,\;\;\gamma_k\geqslant0\nn\\
  \frac{P_k}{\gamma_k}&\geqslant \beta_k P_0+1\nn\\
  \bar{\beta}_k&+\beta_k=1\;\;\;\;\; k=1,\dots,K.\nn
\end{flalign}
\end{theorem}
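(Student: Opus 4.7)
The plan is to combine the outer bound of Proposition \ref{th:OuterIndependKW0} with a specialization of the inner bound of Proposition \ref{th:InnerIndependKW0}. The converse is immediate: Proposition \ref{th:OuterIndependKW0} already produces the sum-rate bound $\sum_{k=0}^K R_k \leqslant \frac{1}{2}\log(1+P_0)$, so the sum capacity cannot exceed this quantity. What remains is achievability: I need to exhibit every rate tuple listed in the statement from the time-sharing scheme of Proposition \ref{th:InnerIndependKW0}, and then verify that each such tuple attains exactly the sum rate $\frac{1}{2}\log(1+P_0)$, so that it sits on the sum-capacity face of the outer bound.

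For achievability I would restrict the time-sharing and power-splitting parameters in Proposition \ref{th:InnerIndependKW0} to the regime where $\frac{P_k}{\gamma_k}\geqslant \beta_k P_0 + 1$ holds for every $k = 1,\dots,K$. In that regime the single-user subproblem within slot $k$ falls into the top branch of the piecewise rate function $R(\cdot,\cdot)$ from \eqref{eq:InnerSingle} (equivalently, Case~1 of Theorem \ref{th:capacity-1}), which gives $R\bigl(\frac{P_k}{\gamma_k},\beta_k P_0\bigr) = \frac{1}{2}\log(1+\beta_k P_0)$. Substituting this simplification back into Proposition \ref{th:InnerIndependKW0} produces exactly the $R_0$ and $R_k$ expressions written in the theorem, since a fraction $\bar\beta_k P_0$ of the helper's slot-$k$ power is still available, via the $X_0'$ layer of the single-user scheme, for the helper's own message.

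The sum check is then a direct algebraic identity. Using $\sum_{k=1}^K \gamma_k = 1$ and $\bar\beta_k = 1 - \beta_k$, and combining the two logarithms inside each slot,
\begin{align*}
R_0 + \sum_{k=1}^K R_k
&= \sum_{k=1}^K \gamma_k\,\frac{1}{2}\log\!\Bigl((1+\beta_k P_0)\bigl(1+\tfrac{\bar\beta_k P_0}{\beta_k P_0+1}\bigr)\Bigr) \\
&= \sum_{k=1}^K \gamma_k\,\frac{1}{2}\log(1+P_0) = \frac{1}{2}\log(1+P_0),
\end{align*}
because $(1+\beta_k P_0)\bigl(1+\tfrac{\bar\beta_k P_0}{\beta_k P_0+1}\bigr) = 1 + (\beta_k + \bar\beta_k)P_0 = 1+P_0$. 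Hence every feasible parameter choice generates a rate tuple that is both inner-bound achievable and meets the converse with equality, which is exactly the boundary characterization in the theorem.

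The main obstacle I foresee is feasibility bookkeeping rather than a deep information-theoretic step. The condition $\frac{P_k}{\gamma_k}\geqslant \beta_k P_0 + 1$, combined with $\sum_k \gamma_k = 1$, forces $\sum_{k=1}^K \frac{P_k}{\beta_k P_0 + 1}\geqslant 1$, which is a nontrivial restriction if all the $P_k$ are small relative to $P_0$ and the $\beta_k$ are close to one. This can always be satisfied by shrinking the $\beta_k$'s; in the extreme $\beta_k \to 0$ the helper devotes its full power to its own message and the condition degenerates to $\sum_k P_k \geqslant 1$. A short feasibility lemma generalizing the $K=2$ analysis in Theorem \ref{th:IndependC} would document precisely which $(\gamma_k,\beta_k)$ tuples are admissible and thereby complete the argument.
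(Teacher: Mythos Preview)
Your proposal is correct and follows exactly the approach the paper intends: the paper explicitly omits the proof, stating that ``the proof techniques combine those in Sections~\ref{sec:modelIII1} and~\ref{sec:ResultSingle}, and hence are omitted,'' and your argument is precisely that combination---the converse from Proposition~\ref{th:OuterIndependKW0}, achievability from Proposition~\ref{th:InnerIndependKW0} specialized to the first branch of~\eqref{eq:InnerSingle}, and the telescoping sum identity mirroring Theorem~\ref{th:capacity-1} and Appendix~\ref{apx:IndependentC}. One small remark: your final paragraph overstates the feasibility obstacle. Unlike the $W_0=\phi$ case of Theorem~\ref{th:IndependC}, here the sum rate $\frac{1}{2}\log(1+P_0)$ is trivially achievable by the single point $R_0=\frac{1}{2}\log(1+P_0)$, $R_1=\cdots=R_K=0$ (the helper ignores the state and sends only its own message), so the sum-capacity claim holds unconditionally; the parametrized constraints in the theorem simply delineate which \emph{additional} boundary points the time-sharing scheme reaches, and no separate feasibility lemma is needed.
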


\section{Conclusion}\label{sec:Conclusion}

In this paper, we proposed and studied parallel communication networks with a state-cognitive helper. We considered three models, and derived inner and outer bounds for each model. By comparing these bounds, we characterized full or certain segments of the boundary of the capacity region for some channel parameters. As we mentioned in Section \ref{sec:Introduction}, large state interference and state-cognitive helpers are well justified in practical wireless networks, and hence the achievable schemes developed here are promising to greatly improve the throughput of wireless networks. We also anticipate that the techniques that we develop in this paper will be helpful for studying various other multi-user state-dependent models with state-cognitive helpers.


\vspace{10mm}

\appendix

\noindent {\Large \textbf{Appendix}}
\section{Proof of Lemma \ref{th:InnerDMCSingle}}\label{apx:InnerDMCSingle}
We use random codes and fix the following joint distribution:
\[P_{S_1X_0'UX_0X_1Y_0Y_1}= P_{S_1}P_{X_0'}P_{U|S_1X_0'}P_{X_0|US_1X_0'}P_{X_1}P_{Y_0|X_0 }P_{Y_1|X_0X_1S_1}.\]
Let $T_\epsilon^n(P_{S_1X_0'UX_0X_1Y_0Y_1})$ denote the strongly joint $\epsilon$-typical set based on the above distribution. For a given sequence $x^n$, let $T_\epsilon^n(P_{U|X}|x^n)$ denote the set of sequences $u^n$ such that $(u^n, x^n)$ is jointly typical based on the distribution $P_{XU}$.
\begin{enumerate}{\topsep=0.ex \leftmargin=0.25in
\rightmargin=0.in \itemsep =0.in}
  \item Codebook Generation
  \begin{itemize}
    \item Generate $2^{n\tilde{R}}$ codewords $u^n(v)$ with the probability of $P_{U}$, in which $v \in [1,2^{n\tilde{R}}]$. 
    \item Generate $2^{nR_0}$ codewords $x_0^{'n}(w_0)$ with the probability of $P_{X'_0}$, in which $w_0 \in [1,2^{nR_0}]$.
    \item Generate $2^{nR_1}$ codewords $x_1^{n}(w_1)$ with the probability of $P_{X_1}$, in which $w_1 \in [1,2^{nR_1}]$.
  \end{itemize}

  \item Encoding
  \begin{itemize}
    \item Encoder 0: Given $w_0$, map $w_0$ into $x_0^{'n}(w_0)$.
                    For each $x_0^{'n}(w_0)$, select $\tilde{v}$ such that $(u^n(\tilde{v}), s_1^n, x_0^{'n}(w_0)) \in T_\epsilon^n(P_{S_1} P_{X_0'}P_{U|S_1X_0'})$.
                    If $u^n(\tilde{v})$ cannot be found, set $\tilde{v}=1$. Then map $(s_1^n, u^n(\tilde{v}), x_0^{'n}(w_0))$ into $x_0^n = f^{(n)}(x_0^{'n}(w_0), s_1^n, u^n(\tilde{v}))$. It can be shown that such $u^n(\tilde{v})$ exists with high probability for large $n$ if
                      \begin{equation}\label{eq:InnerDMCSinglePF-1}
                            \tilde{R} > I(U;S_1X_0').
                      \end{equation}

    \item Encoder 1: Given $w_1$, map $w_1$ into $x_1^n(w_1)$.
  \end{itemize}

  \item Decoding
  \begin{itemize}
    \item Decoder 0: Given $y_0^n$, find $\hat{w}_0$ such that $(x_0^{'n}(\hat{w}_0), y_0^n) \in T_\epsilon^n(P_{X'_0Y_0})$. If no or more than one $\hat{w}_0$ can be found, declare error. It can be shown that the decoding error is small for sufficient large $n$ if
               \begin{equation}\label{eq:InnerDMCSinglePF-2}
                  R_{0} \leqslant I(X_0';Y_0).
               \end{equation}
\item Decoder 1: Given $y_1^n$, find a pair $(\hat{v}, \hat{w}_{1})$ such that $(u^n(\hat{v}), x_1^n(\hat{w}_1), y_1^n) \in T_\epsilon^n(P_{UX_1Y_1})$. If no or more than one such pair can be found, then declare error. It can be shown that decoding is successful with small probability of error for sufficiently large $n$ if  the following conditions are satisfied
    \begin{flalign}
      R_{1} \leqslant & I(X_1;Y_1|U),\label{eq:InnerDMCSinglePF-3}\\
      \tilde{R} \leqslant & I(U;Y_1|X_1),\label{eq:InnerDMCSingleRe-1}\\
      R_1 + \tilde{R}\leqslant & I(UX_1;Y_1).\label{eq:InnerDMCSinglePF-4}
    \end{flalign}

  \end{itemize}
\end{enumerate}

We note that \eqref{eq:InnerDMCSingleRe-1} corresponds to the decoding error for the index $v$, which is not the message of interest. Hence, the bound \eqref{eq:InnerDMCSingleRe-1} can be removed. Hence, combining \eqref{eq:InnerDMCSinglePF-1}, \eqref{eq:InnerDMCSinglePF-2}, \eqref{eq:InnerDMCSinglePF-3}, and \eqref{eq:InnerDMCSinglePF-4} and eliminating $\tilde{R}$, and we obtain the desired achievable region as in Lemma \ref{th:InnerDMCSingle}.

\section{Proof of Proposition \ref{th:OuterGaussianSingle}}\label{apx:OuterGaussianSingle}
We first bound the single-user rate $R$ as follows.
\begin{equation}
  R_1 \leqslant \frac{1}{2} \log(1+P_1)
\end{equation}
We then bound the sum rate as follows. For the message $W_0$, based on  Fano's inequality, we have
\begin{flalign}\label{eq:OuterProof-1}
    nR_0&\leqslant I(W_0;Y_0^n)+n\epsilon_n \\\nn
    & =h(Y_0^n)-h(Y_0^n|W_0)+n\epsilon_n,\nn
  \end{flalign}
where $\epsilon_n\rightarrow 0$ as $n\rightarrow \infty$.

For the message $W_1$, based on Fano's inequality, we have
    \begin{flalign}\label{eq:OuterProof-2}
    nR_1&\leqslant I(W_1;Y_1^n)+n\epsilon_n \\\nn
    & =h(Y_1^n)-h(Y_1^n|W_1)+n\epsilon_n\\\nn
    & \leqslant h(Y_1^n)-h(Y_1^n|W_1 X_1^n)+n\epsilon_n\\\nn
    & = h(Y_1^n)-h(X_0^n+S_1^n+N_1^n)+n\epsilon_n\\\nn
    & \leqslant h(Y_1^n)-h(X_0^n+S_1^n+N_1^n|W_0 Y_0^n)+n\epsilon_n\nn
  \end{flalign}
  Summation of \eqref{eq:OuterProof-1} and \eqref{eq:OuterProof-2} yields
\begin{flalign}
    n(R_0+R_1) &\leqslant h(Y_0^n) + h(Y_1^n)-h(Y_0^n, X_0^n+S_1^n+N_1^n|W_0 )
  \end{flalign}

Since the capacity region of the channel depends on only marginal distributions of $(X_0, Y_0)$ and $(X_0,X_1,S,Y_1)$, setting $N_1=N_0$ does not change the capacity region. Thus,
\begin{flalign}
    n(R_0+R_1) &\leqslant h(Y_0^n) + h(Y_1^n)-h(S_1^n, X_0^n+N_1^n|W_0)\\\nn
    & \leqslant h(Y_0^n) + h(Y_1^n)-h(S_1^n)-h(N_1^n)\\\nn
    & \leqslant \frac{n}{2} \log(1+P_0) +\frac{n}{2} \log\left(1+\frac{P_0+P_1+1}{Q_1}\right)
  \end{flalign}

As $Q_1 \rightarrow \infty$, the second term of the above bound goes to $0$, and we have
  \begin{flalign}
    n(R_0+R_1)  \leqslant \frac{1}{2} \log(1+P_0).
  \end{flalign}

\section{Proof of Proposition \ref{th:Outerc=0}}\label{apx:Outerc=0}
The single rate bound is based on the result in Section \ref{sec:ResultSingle} for model I and the point-to-point channel capacity.

For the sum rate bound, according to Fano's inequality , we have
  \begin{flalign}
    n(R_1+R_2)\leqslant &I(W_1;Y_1^n)+I(W_2;Y_2^n)\nn\\
=&h(Y_1^n)-h(Y_1^n|W_1)+h(Y_2^n)-h(Y_2^n|W_2)\nn\\
\overset{(a)}{=}&h(Y_1^n)-h(Y_1^n|W_1 X_1^n)+h(Y_2^n)-h(Y_2^n|W_2 X_2^n)\nn\\
=&h(Y_1^n)-h(X_0^n+S_1^n+N_1^n)+h(Y_2^n)-h(X_0^n+N_2^n)\nn\\
\leqslant & h(Y_1^n)-h(X_0^n+S_1^n+N_1^n| X_0^n+N_1^n)\nn\\
    &+h(Y_2^n)-h(X_0^n+N_2^n)\nn
  \end{flalign}
where (a) follows from that $X_1^n$ is function of $W_1$, and $X_2^n$ is function of $W_2$, and they are independent from $X_0^n$, state and noise. Because the two decoders decode based on the marginal distribution only, setting $N_1^n=N_2^n$ does not influence the channel capacity, therefore,
\begin{flalign}
    n(R_1+R_2)\leqslant & h(Y_1^n)-h(X_0^n+S_1^n+N_1^n, X_0^n+N_1^n)+h(Y_2^n)\nn\\
    =&h(Y_1^n)-h(S_1^n, X_0^n+N_1^n)+h(Y_2^n)\nn\\
    = &h(Y_1^n)-h(S_1^n)-h(X_0^n+N_1^n|S_1^n)+h(Y_2^n)\nn\\
\leqslant & h(Y_1^n)-h(S_1^n)-h(X_0^n+N_1^n|S_1^n,X_0^n)+h(Y_2^n)\nn\\
    \overset{(b)}{=}& h(X_0^n+X_1^n+S_1^n+N_1^n)-h(S_1^n)+h(X_0^n+X_2^n+N_1^n)-h(N_1^n)\nn\\
\leqslant& \frac{n}{2}\log2\pi e(P_1+P_0+\sum_{i=1}^nE(X_{0i}S_{1i})+Q+1)-\frac{n}{2}\log(2\pi eQ)\nn\\
&+\frac{n}{2}\log2\pi e(P_0+P_2+1)-\frac{n}{2}\log(2\pi e)\nn\\
  \leqslant& \frac{n}{2}\log\left(\frac{P_1+P_0+2\sqrt{P_0Q_1}+Q_1+1}{Q_1}\right)+\frac{n}{2}\log(P_0+P_2+1)\nn\\
\approx& \frac{n}{2}\log(P_0+P_2+1)\;\;(Q_1\rightarrow \infty)\nn
  \end{flalign}
where (b) follows from that $X_0^n$ and $S^n$ are independent from $N_1^n$.

\section{Proof of Proposition \ref{th:InnerSamec=0}}\label{apx:InnerSamec=0}

We use random codes and fix the following joint distribution:
\[P_{S_1U_1VX_0X_1X_2Y_1Y_2}= P_{VUS_1}P_{X_0|VUS_1}P_{X_1}P_{X_2}P_{Y_1|X_0 X_1 S_1}P_{Y_2|X_0X_2}.\]
Let $T_\epsilon^n(P_{S_1UVX_0X_1X_2Y_1Y_2})$ denote the strongly joint $\epsilon$-typical set based on the above distribution. For a given sequence $x^n$, let $T_\epsilon^n(P_{U|X}|x^n)$ denote the set of sequences $u^n$ such that $(u^n, x^n)$ is jointly typical based on the distribution $P_{XU}$.
\begin{enumerate}{\topsep=0.ex \leftmargin=0.25in
\rightmargin=0.in \itemsep =0.in}
  \item Codebook Generation
  \begin{itemize}
    \item Generate $2^{n(\tilde{R}_1)}$ codewords $u^n(t)$  with the probability of $P_U$, in which $t\in [1,2^{n\tilde{R}_1}]$.
    \item Generate $2^{n(\tilde{R}_2)}$ codewords $v^n(k)$  with the probability of $P_V$, in which $k\in [1,2^{n\tilde{R}_2}]$.
    \item Generate $2^{nR_1}$ codewords $x_1^{n}(w_1)$ with the probability of $P_{X_1}$, in which $w_1 \in [1,2^{nR_1}]$.
    \item Generate $2^{nR_2}$ codewords $x_2^{n}(w_2)$ with the probability of $P_{X_2}$, in which $w_2 \in [1,2^{nR_2}]$.
  \end{itemize}

  \item Encoding
  \begin{itemize}
    \item Encoder 0: Given $s_1^n$, find $\tilde{t}$, such that $(u^n(\tilde{t}),s_1^n)\in T_\epsilon^n(P_{S_1U})$. Such $u^n(\tilde{t})$ exists with high probability for large $n$ if
                      \begin{equation}\label{eq:InnerDMCproofc=0-1}
                            \tilde{R}_1 \geqslant I(S_1; U).
                      \end{equation}

    \item For each $\tilde{t}$ selected, select $\tilde{k}$, such that $(v^n(\tilde{k}), u^n(\tilde{t}),s_1^n) \in T_\epsilon^n(P_{VUS_1})$.
     Such $v^n(\tilde{k})$ exists with high probability for large $n$ if
                      \begin{equation}\label{eq:InnerDMCproofc=0-2}
                            \tilde{R}_2 \geqslant I(S_1U;V).
                      \end{equation}

    \item Map $(s_1^n, u^n, v^n)$ into $x_0^n$
    \item Encoder 1 and 2: Map $w_1$ into $x_1^n$, and map $w_2$ into $x_2^n$.
  \end{itemize}

  \item Decoding
  \begin{itemize}
    \item Decoder 1: Given $y_1^n$, find $(\hat{w}_1)$ such that $(x_1^{n}(\hat{w}_1), u^n(\hat{t}), y_1^n) \in T_\epsilon^n(P_{X_1UY_1})$. If no or more than one $\hat{w}_1$ can be found, declare an error. One can show that the decoding error is small for sufficient large $n$ if
               \begin{flalign}
                  R_{1} \leqslant I(X_1;Y_1U)\label{eq:InnerDMCproofc=0-3}\\
                  R_{1} + \tilde{R}_1 \leqslant I(X_1U;Y_1)\label{eq:InnerDMCproofc=0-4}
               \end{flalign}
\item Decoder 2:  Given $y_2^n$, find $(\hat{w}_2)$ such that $(x_2^{n}(\hat{w}_2), v^n(\hat{k}),  y_2^n) \in T_\epsilon^n(P_{X_2VY_2})$. If no or more than one $\hat{w}_2$ can be found, declare an error. One can show that the decoding error is small for sufficient large $n$ if
               \begin{flalign}
                  R_{2} &\leqslant I(X_2;Y_2V)\label{eq:InnerDMCproofc=0-5}\\
                  R_{2} + \tilde{R}_2 &\leqslant I(X_2 V;Y_2)\label{eq:InnerDMCproofc=0-6}
               \end{flalign}
  \end{itemize}
\end{enumerate}

According to \eqref{eq:InnerDMCproofc=0-1}-\eqref{eq:InnerDMCproofc=0-6}, exploit the Foriour-Mozkin elimination to eliminate $\tilde{R}_1$ and $\tilde{R}_2$, and we have the desired achievable region.

\section{Proof of Proposition \ref{th:OuterIndepend2}}\label{apx:OuterIndepend2}
First of all, $R_1$ and $R_2$ is bounded by single rate bound respectively.

For the sum rate bound, we start from the Fano's inequality
\begin{flalign}
    n(R_1+R_2)\leqslant &I(W_1;Y_1^n)+I(W_2;Y_2^n)\nn\\
=&h(Y_1^n)-h(Y_1^n|W_1)+h(Y_2^n)-h(Y_2^n|W_2)\nn\\
\overset{(a)}{=}&h(Y_1^n)-h(Y_1^n|W_1 X_1^n)+h(Y_2^n)-h(Y_2^n|W_2 X_2^n)\nn\\
=&h(Y_1^n)-h(X_0^n+S_1^n+N_1^n)+h(Y_2^n)-h(X_0^n+S_2^n+N_2^n)\nn\\
\leqslant & h(Y_1^n)-h(X_0^n+S_1^n+N_1^n| X_0^n+N_1^n)\nn\\
    &+h(Y_2^n)-h(X_0^n+S_2^n+N_2^n| X_0^n+N_2^n, X_0^n+S_1^n+N_1^n)\nn\\
    &+h(X_0^n+N_1^n)-h(X_0^n+N_1^n)\nn
  \end{flalign}
where (a) follows from that $X_1^n$ is function of $W_1$, and $X_2^n$ is function of $W_2$, and they are independent from $X_0^n$, state and noise. Because the two decoders decode based on the marginal distribution only, setting $N_1^n=N_2^n$ does not influence the channel capacity, therefore,
\begin{flalign}
    n(R_1+R_2)\leqslant & h(Y_1^n)-h(X_0^n+S_1^n+N_1^n, X_0^n+S_2^n+N_1^n, X_0^n+N_1^n)\nn\\
    &+h(Y_2^n)+h(X_0^n+N_1^n)\nn\\
    =&h(Y_1^n)-h(S_1^n, S_2^n, X_0^n+N_1^n)+h(Y_2^n)+h(X_0^n+N_1^n)\nn\\
    = &h(Y_1^n)-h(S_1^n)-h(S_2^n)-h(X_0^n+N_1^n|S_1^n,S_2^n)\nn\\
    &+h(Y_2^n)+h(X_0^n+N_1^n)\nn\\
\leqslant & h(Y_1^n)-h(S_1^n)-h(S_2^n)-h(X_0^n+N_1^n|S_1^n,S_2^n,X_0^n)\nn\\
    &+h(Y_2^n)+h(X_0^n+N_1^n)\nn\\
    \overset{(b)}{=}& h(X_0^n+X_1^n+S_1^n+N_1^n)-h(S_1^n)-h(S_2^n)-h(N_1^n)\nn\\
    &+h(X_0^n+X_2^n+S_2^n+N_1^n)+h(X_0^n+N_1^n)\nn\\
\leqslant& \frac{n}{2}\log2\pi e(P_1+P_0+\sum_{i=1}^nE(X_{0i}S_{1i})+Q_1+1)-\frac{n}{2}\log(2\pi eQ_1)-\frac{n}{2}\log(2\pi eQ_2)\nn\\
&-\frac{n}{2}\log(2\pi e)+\frac{n}{2}\log2\pi e(P_2+P_0+\sum_{i=1}^nE(X_{0i}S_{2i})+Q_2+1)+\frac{n}{2}\log2\pi e(P_0+1)\nn\\
  \leqslant& \frac{n}{2}\log\left(\frac{P_1+P_0+2\sqrt{P_0Q_1}+Q_1+1}{Q_1}\right)+\frac{n}{2}\log\left(\frac{P_2+P_0+2\sqrt{P_0Q_2}+Q_2+1}{Q_2}\right)\nn\\
    &+\frac{n}{2}\log(P_0+1)\nn\\
\approx& \frac{n}{2}\log(P_0+1)\;\;(Q_1\rightarrow \infty, Q_2\rightarrow \infty)\nn
  \end{flalign}
where (b) follows from that $X_0^n$, $S_1^n$ and $S_2^n$ are independent from $N_1^n$.

  \section{Proof of Proposition \ref{th:IndependC}}\label{apx:IndependentC}
   The theorem can be proved in two parts, 1. if $P_1+P_2\geqslant P_0+1$, the sum capacity is obtained; 2. characterize $\gamma$ such that for this time allocation, the rate achieved is on the capacity boundary.

  1. For a certain $P_0$, we consider the following two cases.

   a). If the power constraint satisfies $P_1+P_2=P_0+1$, by following Proposition \ref{th:InnerIndepend2}, setting $\gamma=\frac{P_1}{P_1+P_2}$ , the point $(R_1,R_2)=(\frac{P_1}{2(P_1+P_2)}\log(1+P_0),\frac{P_2}{2(P_1+P_2)}\log(1+P_0))$ is achieved, which is also on the outer bound in Proposition \ref{th:OuterIndepend2}.

   b). If $P_1+P_2\geqslant P_0+1$, the outer bound does not change for the same $P_0$, we use the power $\tilde{P}_1+\tilde{P}_2=P_0+1$ and $\tilde{P}_1\leqslant P_1$, $\tilde{P}_2\leqslant P_2$ for each transmitter and obtain the sum capacity as concluded in a).

  2. We start from considering the constraint for $\gamma$ under which the rate pair $(R_1,R_2)$ achieves the sum capacity, i.e.
  \begin{flalign}
    \frac{P_2}{1-\gamma}&\geqslant P_0+1\label{eq:gammacondition-1}\\
    \frac{P_1}{\gamma}&\geqslant P_0+1\label{eq:gammacondition-2}.
  \end{flalign}

It is clear that \eqref{eq:gammacondition-1} implies
\begin{flalign}
    \gamma&\geqslant 1-\frac{P_2}{P_0+1}, \nn
  \end{flalign}
and \eqref{eq:gammacondition-2} implies
\begin{flalign}
    \gamma&\leqslant \frac{P_1}{P_0+1}.\nn
  \end{flalign}
Considering $0 \leq \gamma \leq 1$, we obtain the desired bounds on $\gamma$.

\section{Proof of Proposition \ref{th:OuterIndependKW0}}\label{apx:OuterIndependKW0}
 The individual rate is first bounded by the point-to-point channel capacity, respectively.

We then bound the sum rate. By following the Fano's inequality, we have
\begin{flalign}
    \sum_{k=0}^K nR_k\leqslant &\sum_{k=0}^K I(W_k;Y_k^n)\nn\\
   =&\sum_{k=0}^K [ h(Y_k^n)-h(Y_k^n|W_k)]\nn\\
\overset{(a)}{=}& h(Y_0^n)-h(X_0^n+N_0^n|W_0)+\sum_{k=1}^K [h(Y_k^n)-h(Y_k^n|W_k X_k^n)]\nn\\
=&h(Y_0^n)-h(X_0^n+N_0^n|W_0)+\sum_{k=1}^K[h(Y_k^n)-h(X_0^n+S_k^n+N_k^n)]\nn\\
\leqslant & h(Y_0^n)-h(X_0^n+N_0^n|W_0)\nn\\
&+\sum_{k=1}^K [h(Y_k^n)-h(X_0^n+S_k^n+N_k^n| X_0^n+N_0^n,W_0, X_0^n+S_{k-1}^n+N_{k-1}^n,\dots,X_0^n+S_1^n+N_1^n)]\nn\\
=& [\sum_{k=0}^K h(Y_k^n)]-h(X_0^n+S_1^n+N_1^n,\dots, X_0^n+S_K^n+N_K^n, X_0^n+N_0^n|W_0)\nn
  \end{flalign}
  where (a) follows from that $X_k^n$ is function of $W_k$, and they are independent from $X_0^n$, state and noise. Because the decoders decode based on the marginal distribution only, we can set $N_k^n=N_0^n$ for $k=1,\dots,K$, therefore,
\begin{flalign}
    \sum_{k=1}^K n R_k\leqslant &[\sum_{k=0}^K h(Y_k^n)]-h(S_1^n,\dots, S_K^n, X_0^n+N_0^n)\nn\\
    = &h(Y_0^n)+\sum_{k=1}^K[h(Y_k^n)-h(S_k^n)]-h(X_0^n+N_0^n|S_1^n,\dots,S_K^n,W_0)\nn\\
\leqslant &h(X_0^n+N_0^n)+\sum_{k=1}^K[h(Y_k^n)-h(S_k^n)]-h(X_0^n+N_0^n|S_1^n,\dots,S_K^n,W_0, X_0^n)\nn\\
    \overset{(b)}{=}& h(X_0^n+N_0^n)+\sum_{k=1}^K[h(X_0^n+X_k^n+S_k^n+N_k^n)-h(S_k^n)]-h(N_0^n)\nn\\
\leqslant & \sum_{k=1}^K[\frac{n}{2}\log2\pi e(P_k+P_0+\sum_{i=1}^nE(X_{0i}S_{ki})+Q_k+1)-\frac{n}{2}\log(2\pi eQ_k)]\nn\\
&-\frac{n}{2}\log(2\pi e)+\frac{n}{2}\log2\pi e(P_0+1)\nn\\
  \leqslant& \sum_{k=1}^K\frac{n}{2}\log\left(\frac{P_k+P_0+2\sqrt{P_0Q_k}+Q_k+1}{Q_k}\right)+\frac{n}{2}\log(P_0+1)\nn\\
\rightarrow& \frac{n}{2}\log(P_0+1)\;\;\text{as }Q_k\rightarrow \infty\nn
\end{flalign}

\bibliographystyle{IEEEtran}
\bibliography{ZChannel_Journal}

\end{document}